\def\showauthornotes{0}
\def\showtableofcontents{1}
\def\showkeys{0}
\def\showdraftbox{0}
\def\showcolorlinks{1}
\def\usemicrotype{1}
\def\showfixme{0}
\def\showdefine{0}
\newtheorem{theorem}{Theorem}[section]
\newtheorem*{theorem*}{Theorem}
\newtheorem*{proposition*}{Proposition}
\newtheorem{lemma}[theorem]{Lemma}
\newtheorem*{lemma*}{Lemma}
\newtheorem{corollary}[theorem]{Corollary}
\newtheorem*{conjecture*}{Conjecture}
\newtheorem*{fact*}{Fact}
\newtheorem*{hypothesis*}{Hypothesis}
\theoremstyle{definition}
\newtheorem{definition}[theorem]{Definition}
\theoremstyle{remark}
\newtheorem{claim}[theorem]{Claim}
\newtheorem*{claim*}{Claim}
\newtheorem*{remark*}{Remark}
\newtheorem*{observation*}{Observation}
\newcommand{\savehyperref}[2]{\texorpdfstring{\hyperref[#1]{#2}}{#2}}
\newcommand{\Sref}[1]{\hyperref[#1]{\S\ref*{#1}}}
\let\nfrac=\nicefrac
\let\ffrac=\flatfrac
\newcommand{\half}{\nicefrac12}
\newcommand{\Authornote}[2]{{\sffamily\small\color{red}{[#1: #2]}}}
\newcommand{\Authornotecolored}[3]{{\sffamily\small\color{#1}{[#2: #3]}}}
\newcommand{\Authorcomment}[2]{{\sffamily\small\color{gray}{[#1: #2]}}}
\newcommand{\Authorstartcomment}[1]{\sffamily\small\color{gray}[#1: }
\newcommand{\Authorfnote}[2]{\footnote{\color{red}{#1: #2}}}
\newcommand{\Authorfixme}[1]{\Authornote{#1}{\textbf{??}}}
\newcommand{\Authormarginmark}[1]{\marginpar{\textcolor{red}{\fbox{\Large #1:!}}}}
\newcommand{\Authornote}[2]{}
\newcommand{\Authornotecolored}[3]{}
\newcommand{\Authorcomment}[2]{}
\newcommand{\Authorstartcomment}[1]{}
\newcommand{\Authorfnote}[2]{}
\newcommand{\Authorfixme}[1]{}
\newcommand{\Authormarginmark}[1]{}
\newcommand{\Dnote}{\Authornote{D}}
\newcommand{\Inote}{\Authornotecolored{ForestGreen}{I}}
\newcommand{\inote}{\Inote}
\newcommand{\Icomment}{\Authorcomment{I}}
\newcommand{\Dcomment}{\Authorcomment{D}}
\newcommand{\paren}[1]{(#1)}
\newcommand{\Paren}[1]{\left(#1\right)}
\newcommand{\bigparen}[1]{\big(#1\big)}
\newcommand{\Bigparen}[1]{\Big(#1\Big)}
\newcommand{\abs}[1]{\lvert#1\rvert}
\newcommand{\Abs}[1]{\left\lvert#1\right\rvert}
\newcommand{\card}[1]{\lvert#1\rvert}
\newcommand{\set}[1]{\{#1\}}
\newcommand{\Set}[1]{\left\{#1\right\}}
\newcommand{\norm}[1]{\lVert#1\rVert}
\newcommand{\Bignorm}[1]{\Big\lVert#1\Big\rVert}
\newcommand{\snorm}[1]{\norm{#1}^2}
\newcommand{\Bigsnorm}[1]{\Bignorm{#1}^2}
\newcommand{\normo}[1]{\norm{#1}_1}
\newcommand{\iprod}[1]{\langle#1\rangle}
\newcommand{\Esymb}{\mathbb{E}}
\newcommand{\Psymb}{\mathbb{P}}
\DeclareMathOperator*{\E}{\Esymb}
\DeclareMathOperator*{\ProbOp}{\Psymb}
\renewcommand{\Pr}{\ProbOp}
\newcommand{\Prob}[2][]{\Pr_{{#1}}\Set{#2}}
\newcommand{\textparen}[1]{\text{(#1)}}
\newcommand{\using}[1]{\textparen{using #1}}
\newcommand{\because}[1]{\textparen{because #1}}
\renewcommand{\because}[1]{\textparen{because #1}}
\newcommand{\by}[1]{\textparen{by #1}}
\newcommand{\bits}{\{0,1\}}
\newcommand{\vbig}{\vphantom{\bigoplus}}
\newcommand{\defeq}{\stackrel{\mathrm{def}}=}
\newcommand{\seteq}{\mathrel{\mathop:}=}
\newcommand{\from}{\colon}
\newcommand{\mper}{\,.}
\newcommand{\mcom}{\,,}
\newcommand\bdot\bullet
\newcommand{\Ind}{\mathbb I}
\DeclareMathOperator{\val}{val}
\DeclareMathOperator{\poly}{poly}
\newcommand{\Lovasz}{Lov\'asz\xspace}
\newcommand{\N}{\mathbb N}
\newcommand{\R}{\mathbb R}
\newcommand{\Rnn}{\R_+}
\newcommand{\problemmacro}[1]{\texorpdfstring{\textsc{#1}}{#1}\xspace}
\newcommand{\maxcut}{\problemmacro{max cut}}
\newcommand{\labelcover}{\problemmacro{label cover}}
\newcommand{\setcover}{\problemmacro{set cover}}
\renewcommand{\le}{\leqslant}
\renewcommand{\geq}{\geqslant}
\renewcommand{\ge}{\geqslant}
\newcommand{\draftbox}{\begin{center}
  \fbox{%
    \begin{minipage}{2in}%
      \begin{center}%
          \Large\textsc{Working Draft}\\%
        Please do not distribute%
      \end{center}%
    \end{minipage}%
  }%
\end{center}
\vspace{0.2cm}}
\newcommand{\draftbox}{}
\let\epsilon=\varepsilon
\numberwithin{equation}{section}
\newcommand{\MYstore}[2]{%
  \global\expandafter \def \csname MYMEMORY #1 \endcsname{#2}%
}
\newcommand{\MYload}[1]{%
  \csname MYMEMORY #1 \endcsname%
}
\newcommand{\MYnewlabel}[1]{%
  \newcommand\MYcurrentlabel{#1}%
  \MYoldlabel{#1}%
}
\newcommand{\MYdummylabel}[1]{}
\newcommand{\torestate}[1]{%
  \let\MYoldlabel\label%
  \let\label\MYnewlabel%
  #1%
  \MYstore{\MYcurrentlabel}{#1}%
  \let\label\MYoldlabel%
}
\newcommand{\restatetheorem}[1]{%
  \let\MYoldlabel\label
  \let\label\MYdummylabel
  \begin{theorem*}[Restatement of \prettyref{#1}]
    \MYload{#1}
  \end{theorem*}
  \let\label\MYoldlabel
}
\newcommand{\restatelemma}[1]{%
  \let\MYoldlabel\label
  \let\label\MYdummylabel
  \begin{lemma*}[Restatement of \prettyref{#1}]
    \MYload{#1}
  \end{lemma*}
  \let\label\MYoldlabel
}
\newcommand{\restateprop}[1]{%
  \let\MYoldlabel\label
  \let\label\MYdummylabel
  \begin{proposition*}[Restatement of \prettyref{#1}]
    \MYload{#1}
  \end{proposition*}
  \let\label\MYoldlabel
}
\newcommand{\restatefact}[1]{%
  \let\MYoldlabel\label
  \let\label\MYdummylabel
  \begin{fact*}[Restatement of \prettyref{#1}]
    \MYload{#1}
  \end{fact*}
  \let\label\MYoldlabel
}
\newcommand{\restate}[1]{%
  \let\MYoldlabel\label
  \let\label\MYdummylabel
  \MYload{#1}
  \let\label\MYoldlabel
}
\newcommand{\addreferencesection}{
  \phantomsection
  \addcontentsline{toc}{section}{References}
}
\newcommand{\sse}{\subseteq}
\newcommand{\e}{\epsilon}
\newcommand{\eps}{\epsilon}
\let\origparagraph\paragraph
\renewcommand{\paragraph}[1]{\origparagraph{#1.}}
\newcommand{\Id}{\mathrm{Id}}
\let\pref=\prettyref
\newcommand{\define}[1]{\marginpar{\tiny \raggedright\color{gray}{\emph{#1}}}\emph{#1}}
\newcommand{\define}[1]{\emph{#1}}
\renewcommand{\Ind}{\mathds 1}
\newcommand{\ot}{\otimes}
\let\Rnn\varvarRnn
\newcommand{\Assign}{\mathtt{Assign}}
\newcommand{\ok}{{\otimes k}}
\newcommand{\rval}{{\mathrm{val}_+}}
\newcommand{\fakerval}{{\lambda_+}}
\newcommand\remove[1]{}
\title{\bfseries Analytical Approach to Parallel Repetition}
\author{%
  Irit Dinur\thanks{%
    Department of Computer Science and Applied
    Mathematics, Weizmann Institute. Part of this work was done at
    Microsoft Research New England and Radcliffe Institute for
    Advanced Study. Work supported in part by an ISF grant number 1179/09 and by an ERC-StG grant number 239985.}
  \and David Steurer\thanks{%
    Computer Science
    Department, Cornell University. Part of this work was done at
    Microsoft Research New England.
    Research supported in part by NSF and the Alred P. Sloan Foundation.
  } }
\begin{document}

\maketitle
\draftbox
\thispagestyle{empty}

\begin{abstract}

We propose an analytical framework for studying parallel repetition, a basic product operation for one-round two-player games.
In this framework, we consider a relaxation of the value of projection games.
We show that this relaxation is multiplicative with respect to parallel repetition and that it provides a good approximation to the game value.
Based on this relaxation, we prove the following improved parallel repetition bound: For every projection game~$G$ with value at most~$\rho$, the $k$-fold parallel repetition $G^{\otimes k}$ has value at most
\begin{displaymath}
  \mathrm{val}(G^{\otimes k}) \le \left(\frac{2\sqrt{\rho}}{1+\rho} \right)^{k/2}.
\end{displaymath}
This statement implies a parallel repetition bound for projection games with low value~$\rho$.
Previously, it was not known whether parallel repetition decreases the value of such games.
This result allows us to show that approximating \textsc{set cover} to within factor $(1-\e)\ln n$ is NP-hard for every $\e>0$, strengthening Feige's quasi-NP-hardness and also building on previous work by Moshkovitz and Raz.

In this framework, we also show improved bounds for few parallel repetitions of projection games, showing that Raz's counterexample to strong parallel repetition is tight even for a small number of repetitions.

Finally, we also give a short proof for the NP-hardness of \textsc{label cover}$(1,\delta)$ for all~$\delta>0$, starting from the basic PCP theorem.

\end{abstract}

\medskip
\noindent
{\small \textbf{Keywords:}
parallel repetition, one-round two-player games, label cover, set cover,
hardness of approximation, copositive programming, operator norms.
}

\clearpage

\ifnum\showtableofcontents=1
{
\tableofcontents
\thispagestyle{empty}
 }
\fi

\clearpage

\section{Introduction}
A \define{one-round two-player game} $G$ is specified
by a bipartite graph with vertex sets $U$ and $V$ and edges decorated by constraints $\pi\sse \Sigma \times \Sigma$ for an alphabet $\Sigma$.
The \emph{value} of a game is the maximum, over all assignments $f\from U\to\Sigma$ and $g\from
V\to\Sigma$, of the fraction of constraints satisfied (where
a constraint $\pi$ is satisfied if
$\Paren{f(u),g(v)}\in\pi$)
\begin{displaymath}
\val(G) = \max_{f,g}\Prob[u,v,\pi]{\vbig \bigparen{f(u),g(v)}\in \pi}.
\end{displaymath}
The term one-round two-player game stems from the following
scenario:
A referee interacts with two players, Alice and Bob.
Alice has a strategy $f\from U\to\Sigma$,
and Bob a strategy $g\from V\to\Sigma$.
A referee selects a random edge $u,v$ in $E$ and sends  $u$ as a question to
Alice, and $v$ as a question to Bob. Alice responds with $f(u)$ and Bob
with $g(v)$. They succeed if their answers satisfy the constraint decorating the edge $uv$.

In the $k$-fold \define{parallel repetition} $G^{\ok}$,
the referee selects $k$ edges $u_1v_1,\ldots,u_kv_k$ independently from $E$
and sends a question tuple $u_1,\ldots,u_k$ to
Alice, and $v_1,\ldots,v_k$ to Bob.
Each player responds with a $k$-tuple of answers and they succeed if
their answers satisfy each of the $k$ constraints on these edges.

\medskip
Parallel repetition is a basic product operation on games, and yet its
effect on the game value  is far from obvious. Contrary to what one might expect, there are strategies for the repeated game that do significantly better than the naive strategy answering each of the $k$ questions using the best single-shot strategy.
Nevertheless, the celebrated parallel repetition theorem of Raz~\cite{Raz98}
bounds the value of $G^{\ok}$ by a function of the value of $G$ that
decays exponentially with the number of repetitions. The broad impact of this theorem can be partly attributed to the general nature of parallel repetition. It is an operation that can be applied to any game without having to know almost anything about its structure.
Raz's proof has since been simplified, giving stronger and sometimes
tight bounds~\cite{Holenstein09, Rao11}. Still, there is much that is left unknown regarding the behavior of games under parallel repetition. For example, previous to this work, it was not known if repetition causes a decrease in the value for a game whose value is already small, say sub-constant. It was also not known how to bound the value of the product of just two games. Other open questions include bounding the value of games with more than two players, and bounding the value of entangled games (where the two players share quantum entanglement). The latter question has recently received some attention \cite{DinurSV14, JainPY14,ChaillouxS13}.

\subsection{Our Contribution}
Our main contribution is a new analytical framework for studying
parallel repetitions of projection games. In this framework, we prove
for any \define{projection
game}:\footnote{%
  In a projection game, for any two questions $u$ and $v$ to the
  players and any answer $\beta$ of $Bob$, there exists at most one
  acceptable answer $\alpha$ for Alice.}
\begin{theorem}[Parallel Repetition Bound]\label{thm:parrep}
  Let $G$ be a projection game. If $\val(G)\le \rho$ for some $\rho>0$ then,
\begin{displaymath}
    \val(G^\ok) \le \left(\frac{2\sqrt{\rho}}{1+\rho} \right)^{k/2}\mper
  \end{displaymath}
\end{theorem}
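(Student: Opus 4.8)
The plan is to route the argument through the relaxation~$\rval$ introduced above, using its two advertised properties: that it sandwiches the game value and that it is exactly multiplicative under parallel repetition. First, every integral assignment of a projection game~$H$ (give each vertex the standard basis vector of its label) is a feasible point of the relaxation, so $\val(H)\le\rval(H)$ for all~$H$; in particular $\val(G^\ok)\le\rval(G^\ok)$. Second, $\rval(G^\ok)=\rval(G)^k$. Third, for a single copy one shows $\rval(G)^2\le \tfrac{2\sqrt{\val(G)}}{1+\val(G)}$. Chaining these, and using that $s\mapsto \tfrac{2\sqrt s}{1+s}$ is nondecreasing on $[0,1]$ together with $\val(G)\le\rho$ (in the relevant regime $\rho\le1$),
\[
  \val(G^\ok)\;\le\;\rval(G^\ok)\;=\;\rval(G)^k\;=\;\bigl(\rval(G)^2\bigr)^{k/2}\;\le\;\Bigl(\tfrac{2\sqrt{\val(G)}}{1+\val(G)}\Bigr)^{k/2}\;\le\;\Bigl(\tfrac{2\sqrt{\rho}}{1+\rho}\Bigr)^{k/2},
\]
which is the theorem. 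So the content is entirely in two lemmas: multiplicativity of~$\rval$, and the approximation $\rval(G)^2\le \tfrac{2\sqrt{\val(G)}}{1+\val(G)}$.

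For multiplicativity, the relaxation is set up so that $\rval(G)$ is (the relevant variant of) an operator norm of the constraint operator $A_G$ of~$G$ --- a supremum of $\E_{uv}\iprod{f_u, A_{uv}g_v}$ over nonnegative vector-valued assignments $(f_u)_{u\in U},(g_v)_{v\in V}$ normalized in~$\ell_2$ on average --- and the constraint operator of $G^\ok$ is $A_G^{\otimes k}$. The bound $\rval(G^\ok)\ge\rval(G)^k$ is the easy direction: tensor together $k$ copies of an optimal single-copy solution. The reverse inequality is the crux: one must show that an optimal solution for the repeated game can be replaced, without loss in the objective, by a product of single-copy solutions. I would prove this by the standard tensorization argument for operator norms, carried out inside the nonnegative cone --- fix all but the $i$-th coordinate of the vector tuples, regard the objective as a bilinear form in the $i$-th-coordinate vectors alone, bound it by $\rval(G)$ times the $\ell_2$-norm of the partially contracted residual assignment (which is again nonnegative), and iterate over $i=1,\dots,k$.

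For the approximation $\rval(G)^2\le \tfrac{2\sqrt{\val(G)}}{1+\val(G)}$, I would take a near-optimal nonnegative solution $(f_u),(g_v)$ and round it to an honest pair of assignments for~$G$ by a squared-weight threshold scheme: pick a threshold (optimized at the end), and for each vertex select a label whose squared coordinate in its vector exceeds the threshold. Because $G$ is a projection game --- each of Bob's labels is consistent with at most one of Alice's --- the expected number of satisfied constraints decomposes edge by edge into a one-dimensional estimate comparing $\iprod{f_u, A_{uv}g_v}$ with the masses of $f_u$ and $g_v$; controlling the over-counting of "popular" labels via the a priori bound $\val(G)\le\rho$ and then optimizing the threshold is exactly where the function $\tfrac{2\sqrt\rho}{1+\rho}$ is produced.

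The main obstacle, I expect, is getting a relaxation that does both jobs at once, and within that the single hardest step is the $\le$ direction of multiplicativity: tensorization is automatic for plain $\ell_2$ operator norms but has to be re-established with care once the nonnegativity (copositivity) constraint is imposed on the feasible vectors. The delicate constant in the rounding lemma is the second place that needs care; the remaining steps are bookkeeping.
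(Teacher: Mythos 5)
Your reduction of the theorem to the two lemmas and the ensuing chain
\[
\val(G^\ok)\le\rval(G^\ok)=\rval(G)^k=\bigl(\rval(G)^2\bigr)^{k/2}\le\varphi\bigl(\val(G)\bigr)^{k/2}\le\varphi(\rho)^{k/2},
\qquad \varphi(s)=\tfrac{2\sqrt s}{1+s},
\]
are correct and are in substance the same route the paper takes (the paper iterates $\snorm{G\otimes H}\le\varphi(\snorm G)\snorm H$ one factor at a time, which is a reshuffling of these two lemmas together with $\val(G)\le\norm G\le\val(G)^{1/2}$).

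The genuine gaps are in how you propose to prove the two lemmas. (i) The normalization you pose, nonnegative vector assignments ``normalized in $\ell_2$ on average,'' is the wrong one: the paper normalizes by $\max_v\norm{(T_v\otimes\Id)f}\le 1$, a per-vertex $\ell_2$ constraint, not an average. An averaged normalization essentially gives $\fakerval$, which is multiplicative but does \emph{not} approximate $\val$ once the game is non-expanding --- a slice of $f$ can dump all its mass on a small part of $G$, and the paper flags this failure explicitly. (ii) ``Standard tensorization carried out inside the nonnegative cone'' is exactly what does not go through: the paper remarks that, unlike SDP duals, the copositive duals are not closed under tensoring, so the mechanical contract-and-bound argument has to be replaced. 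The paper's proof of $\rval(G\otimes H)\le\rval(G)\rval(H)$ does factor $G\otimes H\otimes\Id=(G\otimes\Id)(\Id\otimes H\otimes\Id)$, but the contracted residual $h$ is controlled not by its $\ell_2$-norm but by $\max_v\norm{(T_v\otimes\Id)h}$, and showing $\max_v\norm{(T_v\otimes\Id)h}\le\rval(H)$ requires observing that for each fixed $v$ the function $\sum_\beta g(v,\cdot,\beta,\cdot)$ is itself a feasible vector assignment for $H$; this hinges on the per-vertex normalizer. (iii) Your rounding sketch (threshold the squared coordinates and read off a single honest assignment) is the expanding-games argument; in general, thresholding a slice $f_\omega$ produces only a \emph{partial} assignment, and the different partial assignments must be glued by correlated sampling into a global one. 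The per-vertex normalizer enters a third time here --- it guarantees every vertex is covered with comparable mass, which is what makes the gluing succeed. In short, the skeleton of your plan matches the paper, but the $T_v$-max normalizer is the single structural choice your sketch omits, and it is the ingredient that powers all three of the steps you flag as ``needing care.''
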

We remark that for values of $\val(G)$ close to $1$ this theorem matches Rao's bound for projection games, with improved constants.
\begin{corollary}[Parallel Repetition for high-value games \cite{Rao08b}]\label{cor:rao}
  For any projection game $G$ with $\val(G) \le 1-\eps\mcom$
  \begin{displaymath}
    \val(G^\ok) \le \left(1-\ffrac{\eps^2}{16}\right)^{k}\mper
  \end{displaymath}
\end{corollary}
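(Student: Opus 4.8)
The plan is to derive this directly from \pref{thm:parrep} by an elementary one-variable estimate; no new ideas are needed. Since $\val(G)\le 1-\eps$, I would apply \pref{thm:parrep} with $\rho = 1-\eps$ (assuming $\eps<1$; the case $\eps = 1$ forces $\val(G)=0$ and is covered by letting $\rho\downarrow 0$, or is anyway immediate since the claimed right-hand side is positive). Using $1+\rho = 2-\eps$, the theorem gives
\[
  \val(G^\ok) \le \left(\frac{2\sqrt{1-\eps}}{2-\eps}\right)^{k/2}\mper
\]
It therefore suffices to show $\frac{2\sqrt{1-\eps}}{2-\eps} \le 1-\frac{\eps^2}{8}$, because then $\val(G^\ok)\le (1-\eps^2/8)^{k/2} = \left(\sqrt{1-\eps^2/8}\right)^{k} \le (1-\eps^2/16)^k$, where the last inequality is $\sqrt{1-a}\le 1-a/2$ applied with $a=\eps^2/8$.

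For the remaining inequality, substitute $t = \sqrt{1-\eps}\in[0,1)$, so that $2-\eps = 1+t^2$ and
\[
  \frac{2\sqrt{1-\eps}}{2-\eps} = \frac{2t}{1+t^2} = 1 - \frac{(1-t)^2}{1+t^2}\mper
\]
Now $1-t = 1-\sqrt{1-\eps} \ge \eps/2$ — equivalently $\sqrt{1-\eps}\le 1-\eps/2$, which upon squaring reads $1-\eps \le 1-\eps+\eps^2/4$ — and $1+t^2 = 2-\eps \le 2$. Hence $\frac{(1-t)^2}{1+t^2} \ge \frac{(\eps/2)^2}{2} = \frac{\eps^2}{8}$, which is exactly what was wanted. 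Chaining the two displays completes the proof.

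There is essentially no obstacle here beyond \pref{thm:parrep} itself: the corollary is just its specialization to $\rho$ near $1$, restated with the cleaner exponent $k$ instead of $k/2$, which is what costs the factor $2$ in the constant ($16$ rather than $8$). The only points requiring a little care are the elementary bound $1-\sqrt{1-\eps}\ge \eps/2$, the bookkeeping between the exponents $k/2$ and $k$ via a second application of $\sqrt{1-a}\le 1-a/2$, and — if one wants a fully self-contained statement — the degenerate regimes $\eps\to 0$ (where the bound is vacuous) and $\eps = 1$.
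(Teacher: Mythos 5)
Your proposal is correct and follows essentially the same route as the paper: the paper likewise derives the corollary from the main bound by noting that $\varphi(1-\eps)=\frac{2\sqrt{1-\eps}}{2-\eps}\le 1-\eps^2/8$ and then converting the exponent $k/2$ to $k$ via $\sqrt{1-a}\le 1-a/2$; your contribution is to spell out the elementary verification of $\varphi(1-\eps)\le 1-\eps^2/8$ (via the substitution $t=\sqrt{1-\eps}$) which the paper simply asserts.
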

We give a particularly short proof of a (strong) parallel repetition
bound for a subclass of games, namely expanding projection games.
This class of games is rich enough for the main application of
parallel repetition:
NP-hardness of \labelcover with perfect completeness
and soundness close to $0$
(the starting point for most hardness of approximation results).
See \pref{sec:expanding-games}.

Next, we list some new results that are obtained by studying parallel
repetition in this framework.

\paragraph{Repetition of small-value games}

Our first new result is that if the initial game $G$ has value $\rho$
that is possibly sub-constant, then the value of the repeated game
still decreases exponentially with the number of repetitions. Indeed, \pref{thm:parrep} for small $\rho$ becomes,
\begin{corollary}[Repetition of games with small value]\label{cor:parrep-small-soundness}
  For any projection game $G$ with $\val(G) \le \rho\mcom$
  \begin{displaymath}
    \val(G^\ok) \le (4\rho)^{k/4}\mper
  \end{displaymath}
\end{corollary}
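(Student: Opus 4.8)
The plan is to obtain this statement as an immediate weakening of \pref{thm:parrep}. Since $G$ is a projection game with $\val(G) \le \rho$, that theorem already gives $\val(G^\ok) \le \bigparen{2\sqrt{\rho}/(1+\rho)}^{k/2}$, so the only thing left to check is the numerical inequality $\bigparen{2\sqrt{\rho}/(1+\rho)}^{k/2} \le (4\rho)^{k/4}$ for every $\rho > 0$ and every $k \ge 1$.

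Taking $(2/k)$-th powers, this reduces to the single-variable estimate $2\sqrt{\rho}/(1+\rho) \le 2\sqrt{\rho} = \sqrt{4\rho}$, which holds simply because $1+\rho \ge 1$ for all $\rho \ge 0$ (the denominator only decreases the left-hand side). Raising back to the $(k/2)$-th power and using $\bigparen{\sqrt{4\rho}}^{k/2} = (4\rho)^{k/4}$ then yields the claimed bound $\val(G^\ok) \le (4\rho)^{k/4}$.

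I do not expect any genuine obstacle here: the entire content beyond \pref{thm:parrep} is this elementary inequality. It is worth remarking only that the corollary is vacuous unless $\rho < 1/4$, since otherwise the right-hand side $(4\rho)^{k/4}$ is already at least $1$; the point of the statement is precisely that in the small-$\rho$ regime the bound still decays exponentially in $k$, which is the feature needed for the \setcover application discussed above.
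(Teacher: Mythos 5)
Your proof is correct and matches the paper's: the paper also obtains this corollary by applying the bound $\varphi(x)=\tfrac{2\sqrt{x}}{1+x}\le 2\sqrt{x}$ on $[0,1]$, which after raising to the $k/2$ power gives exactly $(4\rho)^{k/4}$. The only cosmetic difference is that the paper derives \pref{thm:parrep} and \pref{cor:parrep-small-soundness} in one pass from \pref{thm:collision-val-product}, whereas you route through the already-stated \pref{thm:parrep}; the inequality you use is the same.
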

This bound allows us to prove NP-hardness for \labelcover
that is better than was previously known (see \pref{thm:labelcover-new}) by applying our small-value parallel repetition theorem on the PCP of \cite{MR10}.
A concrete  consequence is the following corollary.
\begin{corollary}[NP-hardness for \labelcover]\label{cor:labelcover}
 For every constant $c>0$, given a \labelcover instance of size $n$ with alphabet size at most $n$, it is NP-hard to decide if its value is $1$ or at most $\eps = \frac 1 {(\log n)^c}$.
\end{corollary}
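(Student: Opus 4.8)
The plan is to instantiate the nearly-linear-size two-query projection PCP of Moshkovitz--Raz~\cite{MR10} at a \emph{sub-constant} soundness parameter and then amplify the soundness by a \emph{constant} number of parallel repetitions via \pref{cor:parrep-small-soundness}. Fix the target constant $c>0$. Given a \threesat instance $\varphi$ of size $N$, first apply the \cite{MR10} reduction with a soundness parameter $\rho=\rho(N)$: this produces a \labelcover instance (projection game) $G$ with $\val(G)=1$ when $\varphi$ is satisfiable and $\val(G)\le\rho$ otherwise, of size $\abs{G}\le N^{1+o(1)}\cdot\poly(1/\rho)$ and with alphabet size at most $\exp(\poly(1/\rho))$. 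The point is that this construction stays near-linear even when $\rho$ is sub-constant, so we may take $\rho=1/(\log N)^{t}$ for a small absolute constant $t>0$, chosen small enough that $\poly(1/\rho)\le\log N$; then $\abs{G}=N^{1+o(1)}$ and the alphabet of $G$ has size at most $\abs{G}$. Second, pass to the $k$-fold parallel repetition $G^{\ok}$ for a constant $k$ fixed below.

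The analytic input is \pref{cor:parrep-small-soundness}: since $G$ is a projection game with $\val(G)\le\rho$, it yields $\val(G^{\ok})\le(4\rho)^{k/4}$, while completeness is preserved because a product of perfect strategies is perfect. The decisive feature is that this bound is \emph{independent of the alphabet size}: with the classical repetition theorems, whose soundness exponent degrades like $1/\log\abs{\Sigma}$, a constant number of repetitions of a game over an alphabet of size $\approx N$ would be worthless, and that is exactly the obstruction that kept earlier arguments at quasi-NP-hardness here. Now set $k\defeq\lceil 4c/t\rceil+1$ and let $n\defeq\abs{G^{\ok}}=N^{\Theta(k)}$ be the size of the output instance; its alphabet has size at most $\abs{\Sigma(G)}^k\le\abs{G}^k\le n$, so the alphabet bound in the statement holds. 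Since $\log n=\Theta_k(\log N)$, a direct estimate gives $(4\rho)^{k/4}=\bigparen{4(\log N)^{-t}}^{k/4}\le(\log n)^{-c}$ for all sufficiently large $N$ (the leading exponent $tk/4$ was chosen to exceed $c$; the finitely many remaining values of $N$ are handled by hard-wiring the answer). Hence $\val(G^{\ok})=1$ in the yes-case and $\val(G^{\ok})\le(\log n)^{-c}$ in the no-case.

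Finally, since $k$ is an absolute constant (depending only on $c$ and the \cite{MR10} constants) and $\rho=1/\poly(\log N)$, the entire reduction runs in time $\poly(N)$ and produces an instance of size $n=N^{O(1)}$; thus it is a polynomial-time Karp reduction from \threesat and establishes NP-hardness. The only genuine work is the parameter bookkeeping in the middle step: $\rho$ must simultaneously (i) stay in the range where the \cite{MR10} PCP is near-linear in $N$, (ii) keep the alphabet of the \emph{repeated} game below the final size $n$, and (iii) leave enough slack that $O(1)$ repetitions drive the soundness below $(\log n)^{-c}$. Requirements (ii) and (iii) pull $\rho$ in opposite directions, and it is precisely the alphabet-independence of \pref{cor:parrep-small-soundness} that makes a common choice of $\rho$ possible; supplying exactly such an alphabet-free small-value repetition bound is what the rest of the paper is devoted to.
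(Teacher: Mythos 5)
Your proof is correct and follows essentially the same route as the paper: instantiate the Moshkovitz--Raz near-linear PCP at sub-constant soundness, then drive the soundness down with a constant number of parallel repetitions using the alphabet-independent small-value bound of \pref{cor:parrep-small-soundness}. The only cosmetic difference is that the paper first proves the more general \pref{thm:labelcover-new} (alphabet $\exp(1/\eps^{\alpha})$ for every $\alpha>0$) and then specializes, whereas you hard-wire the specific choice $\rho=(\log N)^{-t}$ directly into the argument.
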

\paragraph{Hardness of \setcover}
A famous result of Uriel Feige \cite{Feige-setcover} is that unless $NP\subseteq DTIME(n^{O(\log\log n)})$ there is no polynomial time algorithm for approximating \setcover to within factor $(1-o(1))\ln
n$. Feige's reduction is slightly super-polynomial because it involves, on top of the basic PCP theorem, an application of Raz's parallel repetition theorem with $\Theta(\log\log n)$ number of repetitions. Later, Moshkovitz and Raz \cite{MR10} constructed a stronger PCP whose parameters are closer but still not sufficient for lifting Feige's result to NP-hardness. Moshkovitz \cite{Moshkovitz12} also generalized Feige's reduction to work from a generic projection \labelcover rather than the specific one that Feige was using. Our \pref{cor:labelcover} makes the last step in this sequence of works and gives the first tight NP-hardness for approximating \setcover.
\begin{corollary}[Tight NP-hardness for approximating
  \setcover]\label{cor:setcover}
  For every $\alpha>0$, it is NP-hard to approximate \setcover to within
  $(1-\alpha)\ln n$, where $n$ is the size of the instance. The reduction runs in time $n^{O(1/\alpha)}$.
\end{corollary}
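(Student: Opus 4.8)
The plan is to feed the hard \labelcover instances of \pref{cor:labelcover} into the reduction from projection \labelcover to \setcover of Feige~\cite{Feige-setcover}, in the form generalized by Moshkovitz~\cite{Moshkovitz12} so that it can start from a \emph{generic} projection \labelcover instance rather than the particular one Feige used. All of the genuinely new ingredient is already packaged in \pref{cor:labelcover}; what remains is to instantiate this reduction and to balance its parameters so that the resulting \setcover instances have size $N=n^{O(1/\alpha)}$ and a completeness--soundness gap of $(1-\alpha)\ln N$.

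Concretely, I would fix $\alpha>0$, pick a large enough constant $c=c(\alpha)$, and use \pref{cor:labelcover} together with standard degree reduction to obtain a bi-regular projection \labelcover instance $G$ on vertex sets $U,V$ with alphabet $\Sigma$, $|\Sigma|\le n$, of size $n$, for which it is NP-hard to distinguish $\val(G)=1$ from $\val(G)\le\eps=1/(\log n)^c$. The reduction attaches to each right vertex $v$ a private copy $M_v$ of the ground set of a Feige-style \emph{partition system} on a universe of size $m$ with $d=|\Sigma|$ parts per partition (such systems exist with all relevant parameters polynomial in $m$), and builds a \setcover instance whose universe is $\bigsqcup_{v\in V}M_v$ and whose sets are indexed by vertex--label pairs: a left set $(u,a)$ covers, inside every block $M_v$ with $v\sim u$, the part indexed by $\pi_{uv}(a)$ of the partition of $M_v$ associated with the edge $uv$, and there are complementary right sets $(v,b)$. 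Completeness is immediate: a satisfying assignment $\sigma$ of $G$ gives the cover $\{(u,\sigma(u))\}_{u\in U}\cup\{(v,\sigma(v))\}_{v\in V}$, whose size $K$ is linear in $|G|$. Soundness is the contrapositive: a cover of size below $(1-\alpha)(\ln m)\cdot K$ must, by averaging over the blocks, cover many $M_v$ cheaply, which by the defining property of the partition system forces the parts used inside $M_v$ to be concentrated on a short list $\Phi(v)\subseteq\Sigma$ of labels; reading off a short label list for each $u$ from the left sets used and then assigning every vertex of $G$ a uniformly random label from its list satisfies an edge of $G$ with probability bounded below by a quantity depending only on $\alpha$, which exceeds $\eps$ once $c$ is large enough --- contradicting $\val(G)\le\eps$.

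Finally I would set the partition-system universe size $m$. The \setcover instance has size $N=\poly(n)\cdot m$, so taking $m=n^{\Theta(1/\alpha)}$ gives $N=n^{\Theta(1/\alpha)}$, makes the reduction run in time $n^{O(1/\alpha)}$, and yields $\ln m\ge(1-\alpha)\ln N$ after adjusting constants, so the gap above is $\ge(1-\alpha)^2\ln N\ge(1-2\alpha)\ln N$; renaming $\alpha$ then proves the corollary. The main obstacle is not a single clever step but the simultaneous bookkeeping of several lower-order losses --- the slack lost inside the partition system, which is logarithmic in $|\Sigma|$ relative to $\log m$; the requirement that $\eps=1/(\log n)^c$ be polynomially smaller than a fixed power of $\alpha$ for the soundness decoding to go through; and the overhead of bi-regularizing $G$ --- all of which must fit inside the gap between $\ln m$ and $(1-\alpha)\ln N$. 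Arranging this is exactly why the partition-system universe has to be a sufficiently large (as a function of $1/\alpha$) fixed polynomial power of $|G|$, and it is what pushes the running time up to $n^{O(1/\alpha)}$.
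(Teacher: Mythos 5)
Your high-level plan is the same as the paper's: plug \pref{cor:labelcover} into the Feige/Moshkovitz reduction from projection \labelcover to \setcover and balance the partition-system universe size against $n^{O(1/\alpha)}$ so that the additive $\Theta(\log n)$ loss between $\ln m$ and $\ln N$ disappears inside the $(1-\alpha)$ slack. That part, and the final size/gap bookkeeping, match the paper.

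However, the construction you describe is not Feige's, and as written it does not work. You attach partition systems to the \emph{Bob} (right) vertices $v$, index the partitions by the incident edges, take $|\Sigma|$ parts per partition, and introduce both ``left sets'' $(u,a)$ and ``complementary right sets'' $(v,b)$. That is essentially the Lund--Yannakakis cover-and-complement gadget, which loses a constant factor and cannot give a $(1-\alpha)\ln n$ gap. In Feige's construction (Definition~3.1 of \cite{Feige-setcover}, as used in the paper), the partition systems live on the \emph{Alice} vertices $u$: $L=|\Sigma|$ partitions indexed by Alice labels $\alpha$, and $k=D$ parts per partition indexed by the $D$ edges incident to $u$. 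The \setcover instance has only one family of sets $S_{v,\beta}$, one for each Bob vertex--label pair, with $S_{v,\beta}=\bigcup_{u\sim v}\{u\}\times p_{\pi_{uv}(\beta)}(j_{uv})$; there are no left sets and no complement sets. Your parameter choice ``$|\Sigma|$ parts per partition'' is also fatal for efficiency: the Naor--Schulman--Srinivasan construction of $B(m,L,k,d)$ needs $m$ of order roughly $L\cdot 2^{O(k)}$, so with $k=|\Sigma|\le n$ the gadget is exponential in $n$. The correct choice $k=D=O(1/\alpha)$ is precisely what makes $m=n_1^D$ polynomial and drives the $(1-\alpha)\ln N$ gap (completeness cover of size $D|U_1|$ versus soundness $\approx D\ln m\cdot|U_1|$, with $\ln m=(1-1/D)^{-1}\ln N$ up to constants).

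The second gap is that you go directly from $\val(G)\le\eps$ to soundness of the cover via a list-decoding argument. The paper does not do this; it first invokes Moshkovitz's Lemma~2.2 to transform $G$ into an instance $G_1$ with Alice degree exactly $D=\Theta(1/\alpha)$ and the \emph{agreement soundness} property: if $\val(G)<\eps_1^2$, then for at least a $1-\eps_1$ fraction of $u\in U_1$, the $D$ neighbors of $u$ project to $D$ pairwise-distinct Alice labels under any assignment. This strengthened promise is exactly what lets the partition-system threshold $d$ kick in per block $M_u$: a cheap cover of $\{u\}\times[m]$ forces two neighbors $v_1,v_2\sim u$ to contribute sets from the \emph{same} partition, i.e.\ two Bob labels projecting to the same $\alpha$, which for a good $u$ can only happen if the randomly decoded assignment agrees on $uv_1$ and $uv_2$ with probability $\ge \frac{2}{D}\cdot\frac{1}{\ell^2}$. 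Your appeal to ``the defining property of the partition system'' skips this step; without first making the Alice degree exactly $D$ and establishing agreement soundness, the averaging argument has nothing to average over, and the threshold $\eps=1/(\log n)^c$ cannot be matched against $1/\ell^2$ with a clean constant $c(\alpha)$. So the plan is right, but the actual gadget and the intermediate agreement-soundness reduction both need to be corrected.
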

\noindent
Unlike the previous quasi NP-hardness results for \setcover,
\pref{cor:setcover} rules out that  approximation ratios of
$(1-\alpha)\ln n$ can be achieved in time $\cramped{2^{n^{o(1)}}}$ (unless $\mathrm{NP}\sse \mathrm{TIME}(\cramped{2^{n^{o(1)}}})$).
Together with the best known approximation algorithm for \setcover \cite{CyganKW09}, we can characterize the time vs. approximation trade-off for the problem:
\begin{corollary}
  Assuming $\mathrm{NP}\not\sse \mathrm{TIME}(\cramped{2^{n^{o(1)}}})$, the time complexity of achieving an approximation ratio $(1-\alpha)\ln n$ for \setcover is
  \begin{math}
    \cramped{2^{n^{\Theta(\alpha)}}}.
  \end{math}
\end{corollary}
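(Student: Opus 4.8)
The plan is to match \pref{cor:setcover} against a known subexponential-time approximation algorithm for \setcover, pinning down the exponent of $n$ in the running time up to a constant factor.

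\emph{Lower bound.} I would use \pref{cor:setcover} as a gap-producing reduction and pad it. For every $\alpha>0$ it hands us a map from \threesat instances on $m$ variables to \setcover instances of size $n = m^{O(1/\alpha)}$, computable in time $n^{O(1/\alpha)} = m^{O(1/\alpha^2)}$, such that deciding whether the optimum is $k$ or exceeds $(1-\alpha)(\ln n)\,k$ is NP-hard. Suppose some algorithm decided this gap problem in time $2^{n^{\delta(\alpha)}}$. Running it on the output of the reduction decides \threesat on $m$ variables in time $m^{O(1/\alpha^2)} + 2^{n^{\delta(\alpha)}} = m^{O(1/\alpha^2)} + 2^{m^{O(\delta(\alpha)/\alpha)}}$. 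Now let $\alpha = \alpha(m)$ tend to $0$ slowly, say $\alpha(m) = 1/\log m$; then $1/\alpha(m) = m^{o(1)}$, so $m^{O(1/\alpha^2)} = 2^{m^{o(1)}}$, and if $\delta(\alpha) = o(\alpha)$ as $\alpha\to 0$ then $\delta(\alpha(m))/\alpha(m)\to 0$, hence $2^{m^{O(\delta(\alpha)/\alpha)}} = 2^{m^{o(1)}}$ as well. This puts \threesat in $\mathrm{TIME}(2^{m^{o(1)}})$, contradicting the hypothesis. So any $(1-\alpha)\ln n$-approximation must run in time $2^{n^{\Omega(\alpha)}}$.

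\emph{Upper bound.} Here I would cite the \setcover algorithm of \cite{CyganKW09}; alternatively one argues directly. Guess the size $t$ of an optimal cover. If $t \ge n^{\beta}$ with $\beta := \alpha + 2/\ln n$, the refined analysis of greedy outputs a cover of size at most $\bigl(2 + \ln(n/t)\bigr)t \le \bigl((1-\beta)\ln n + 2\bigr)t \le (1-\alpha)(\ln n)\,t$. If $t < n^{\beta}$, enumerate all $\binom{m}{t} \le n^{n^{\beta}} = 2^{n^{O(\alpha)}}$ families of $t$ sets (using $m \le n$ and $n^{\beta}\ln n = n^{O(\alpha)}$) and keep one that is a cover, which finds the optimum exactly. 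Returning the smallest cover over all guesses of $t$ gives ratio $(1-\alpha)\ln n$ in time $2^{n^{O(\alpha)}}$. Combined with greedy's polynomial-time $(1+\ln n)$-approximation, this yields the full $2^{n^{\Theta(\alpha)}}$ trade-off.

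The one thing requiring care — and it is bookkeeping, not a genuine difficulty — is that a contradiction with $\mathrm{NP}\not\sse\mathrm{TIME}(2^{m^{o(1)}})$ is impossible for any \emph{fixed} $\alpha$: for constant $\alpha$ the composed algorithm runs in time only $2^{m^{\Theta(1)}}$, which is perfectly consistent with the hypothesis. One is forced to couple $\alpha$ to $m$, and the coupling must be slow enough — $\alpha(m) = 1/\polylog m$ — that the $m^{O(1/\alpha^2)}$ blow-up of \pref{cor:setcover} stays within $2^{m^{o(1)}}$, which is exactly the quasi-polynomial regime of that corollary. One must likewise check that the $2^{n^{O(\alpha)}}$ running time of the approximation algorithm is insensitive to the number of sets, which holds because that number is at most $n$.
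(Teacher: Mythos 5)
The paper states this corollary without proof---it is presented as an immediate consequence of \pref{cor:setcover} (the lower bound) together with the algorithm of \cite{CyganKW09} (the upper bound)---so there is no paper proof to compare against in detail. Your argument supplies exactly the missing bookkeeping, and it is sound: the lower bound is obtained by composing the \pref{cor:setcover} reduction with a hypothetical fast approximation algorithm and letting $\alpha$ decay with the \threesat size, and the upper bound is the greedy-plus-enumeration trade-off underlying \cite{CyganKW09}. The subtlety you single out---that a contradiction with $\mathrm{NP}\not\sse\mathrm{TIME}(2^{m^{o(1)}})$ is impossible for any \emph{fixed} $\alpha$, so one must couple $\alpha$ to $m$---is indeed the crux of why the ``$\Theta(\alpha)$'' in the exponent must be read asymptotically rather than per-instance, and you handle it correctly.

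Two remarks. First, a minor misreading: in ``the reduction runs in time $n^{O(1/\alpha)}$'' of \pref{cor:setcover}, the variable $n$ is the \threesat input size, not the size of the output \setcover instance (a poly-time reduction always runs in time polynomial in its output; the point of the statement is that the output, and hence the time, is $n_{\threesat}^{O(1/\alpha)}$). You interpret it as output size and obtain $m^{O(1/\alpha^2)}$; this over-estimate is harmless since it remains $2^{m^{o(1)}}$ for $\alpha(m)=1/\log m$, but the exponent should really be $O(1/\alpha)$. Second, there is one point you (and the paper) silently assume: that the reduction of \pref{cor:setcover} remains valid when $\alpha=\alpha(m)$ is non-constant but decays slowly (e.g.\ $1/\polylog m$). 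This requires that the chain through \pref{thm:labelcover-new}, \pref{cor:labelcover}, Moshkovitz's agreement-soundness transformation (with Alice degree $D=\Theta(1/\alpha)$), and the partition-system gadget all tolerate quasi-polynomial parameters. They do---one simply takes a large enough constant $c$ in \pref{cor:labelcover} and notes that the partition systems of \cite{NaorSS95} exist for any $k$---but a careful write-up should say so, since \pref{thm:labelcover-new} is stated only for constant $\alpha$.
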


Going back to \labelcover, we remark that the hardness proven in \pref{cor:labelcover} is still far from the
known algorithms for \labelcover and it is an interesting open
question to determine the correct tradeoff between $\eps$ and the alphabet size.

\paragraph{Few repetitions}
Most parallel repetition bounds are tailored to the case that the number of repetitions $k$ is large compared to $1/\eps$ (where as usual, $val(G)=1-\eps$).
For example, when the number of repetitions $k\ll 1/\e$, the bound $\val(G^{\ok})\le (1-O(\e^2))^k\approx (1-O(k\e^2))$ for projection games~\cite{Rao11} is weaker than the trivial bound $\val(G^{\ok})\le 1-\e$.
The following theorem gives an improved and tight bound when $k\ll 1/\e^2$.
\begin{theorem}[Few repetitions]
\label{thm:few-repetitions}
  Let $G$ be a projection game with $\val(G) = 1-\eps$.
  Then for all $k\ll 1/\e^2$,
  \begin{displaymath}
    \val(G^{\otimes k}) \le 1-\Omega(\sqrt k\cdot \eps)\mper
  \end{displaymath}
\end{theorem}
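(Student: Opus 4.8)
The plan is to read the statement as saying that parallel repetition drives the value down at least ``$\sqrt k$-exponentially'', $\val(G^{\otimes k}) \le \val(G)^{\Omega(\sqrt k)}$; since $k \ll 1/\eps^2$ forces $\sqrt k\,\eps \ll 1$, this is equivalent to the claimed $1 - \Omega(\sqrt k\,\eps)$. I would route everything through the framework's relaxation $\rval$, which is multiplicative and upper bounds the value, so $\val(G^{\otimes k}) \le \rval(G^{\otimes k}) = \rval(G)^k$. Writing $\delta_0 = 1-\rval(G)$, one has $1-\rval(G)^k = \Omega(\min\{k\delta_0,\,1\})$, so the desired bound already follows whenever $\delta_0 \ge \eps/\sqrt k$. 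The only case that needs real work is the ``highly frustrated'' one, $\delta_0 \ll \eps/\sqrt k$: the relaxation certifies $G$ to be almost perfect even though $\val(G)$ is only $1-\eps$ --- the odd cycle being the extreme instance, and the source of Raz's counterexample.

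In that case the plan is a self-improvement of the rounding. From an integral strategy for $G^{\otimes k}$ of value $1-\gamma$, I want to extract a \emph{fractional} (copositive) strategy for the single game $G$ of value $1-O(\gamma^2/k)$ --- quadratically better than the value $1-\gamma$ one gets by merely restricting the $G^{\otimes k}$-strategy to one coordinate and averaging over the rest. The mechanism is the local quadratic behavior of the copositive objective at a perfect assignment: softening the strategy in a single coordinate through a small angle $\theta$ scales that coordinate's contribution to every $k$-tuple constraint by $\cos^2\theta = 1-\Theta(\theta^2)$, so the $\gamma$ worth of constraint violations, once spread across the $k$ coordinates, can be absorbed into a fractional single-game solution at total cost $O(\gamma^2/k)$. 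This is exactly how the Goemans--Williamson vector solution smears the single bad edge of an odd cycle into $n$ tiny, second-order-cheap violations; in particular the resulting bound is tight against Raz's counterexample, which is the point of the theorem. Feeding the extracted solution into the framework's approximation guarantee $\rval(G) \le 1-\Omega(\eps^2)$ (the estimate behind \pref{cor:rao}) then gives $1-O(\gamma^2/k) \le 1-\Omega(\eps^2)$, i.e.\ $\gamma \ge \Omega(\sqrt k\,\eps)$.

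The main obstacle is the extraction step in the frustrated case: one has to quantify how the $\gamma$-fraction of failures of an \emph{arbitrary}, possibly highly correlated, near-optimal strategy for $G^{\otimes k}$ is forced to spread over the $k$ coordinates --- this anti-concentration phenomenon is precisely what produces the $\sqrt k$ --- and then combine the $k$ coordinate-restrictions so that the second-order savings survive the re-imposed $\ell_2$-type normalization built into $\rval$; this is where the operator-norm and copositive-programming machinery of the framework is meant to do the work. A final, routine point is to splice this estimate to \pref{cor:rao} around $k \asymp 1/\eps^2$ so that the bound degrades gracefully at the edge of the stated regime. Granting the extraction lemma, the theorem follows from the short case analysis of the first paragraph.
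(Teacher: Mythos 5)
Your proposal takes a genuinely different route from the paper, and the step that carries all the weight is left as an unproved lemma, so I'll treat this as having a real gap. The top-level case split on $\delta_0 = 1-\rval(G)$ versus $\eps/\sqrt k$ is a sensible way to isolate the ``highly frustrated'' regime, and the endgame (feed a near-perfect $\rval$-witness for $G$ into \pref{thm:approximation} to force $\gamma^2/k \gtrsim \eps^2$) is logically coherent. But everything hinges on the extraction lemma: \emph{from an integral strategy for $G^{\otimes k}$ of value $1-\gamma$, produce a nonnegative vector assignment witnessing $\rval(G)\ge 1-O(\gamma^2/k)$.} You acknowledge this is the ``main obstacle,'' and the mechanism you sketch --- softening one coordinate through an angle $\theta$, picking up a $\cos^2\theta$ factor per constraint, spreading the $\gamma$ of frustration over $k$ coordinates at second-order cost --- is Goemans--Williamson/Feige--\Lovasz intuition for SDP vector solutions. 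It does not obviously transfer to $\rval$, which is a copositive relaxation over nonnegative vector assignments normalized by $\max_v \snorm{(T_v\ot\Id)f}\le 1$: there is no rotation group acting on the feasible set, so ``rotating by $\theta$'' is not a well-defined operation, and the quadratic-cost-of-softening calculus has to be rebuilt from scratch against the $\ell_2$-on-$T_v$ normalization. Equally nontrivial is the anti-concentration claim that an \emph{arbitrary} near-optimal strategy for $G^{\otimes k}$ must spread its failures across coordinates; this is precisely the hard analytic content, and it is not a routine point.

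For comparison, the paper proves the theorem by a different decomposition: an iterated \emph{two-game} bound. \pref{thm:two-games} shows that if $\snorm H\le 1-\gamma$ and $\snorm{G\ot H}\ge 1-\eta-\gamma$, then $\snorm G\ge 1-O(\eta+\sqrt{\gamma\eta})$; the improvement over the generic $\rval$-based bound is exactly the $\sqrt{\gamma\eta}$ term in place of $\sqrt\eta$. Plugging $H=G^{\otimes k}$ and writing $\snorm{G^{\otimes k}}=1-t(k)\eps$ gives the recurrence $t(k+1)\ge t(k)+\Omega(1/t(k))$, hence $t(k+1)^2\ge t(k)^2+\Omega(1)$, hence $t(k)=\Omega(\sqrt k)$. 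The proof of \pref{thm:two-games} itself (\pref{lem:few-round-1}, \pref{lem:few-round-2}) is a sharpened version of the threshold-rounding and correlated-sampling analysis of \pref{sec:approx-gen} that keeps track of a hybrid error term $\|B_\omega\|$ bounded by $\gamma_\omega$ --- this is the concrete place where the ``frustration must be small per coordinate'' anti-concentration is cashed out. So the $\sqrt k$ in the paper emerges from \emph{accumulating} a per-step second-order gain over $k$ rounds, not from a single-shot extraction. If you want to pursue your route, you essentially have to reprove the analytic heart of \pref{lem:few-round-1} in a form that applies in one shot to the full $k$-fold product; that is not a free lemma, it is the theorem. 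Absent it, your argument reduces to the (true but trivial) easy case.

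Two smaller points. First, restricting an integral $G^{\otimes k}$-strategy to one coordinate gives an integral strategy of value at least $1-\gamma$, which against $\val(G)=1-\eps$ just yields $\gamma\ge\eps$; you should not describe $1-O(\gamma^2/k)$ as ``quadratically better than the value $1-\gamma$ one gets by restricting,'' since the baseline is not $1-\gamma$ but whatever the best single-coordinate marginal achieves, and the gain you need is over the much stronger $\rval\le\varphi(\snorm G)$ bound. Second, the paper never needs your case split on $\delta_0$, because the recurrence degrades gracefully: when $\rval(G)$ is far from $1$, $t(k)$ grows linearly and the bound is even better than $\Omega(\sqrt k\eps)$; the induction subsumes both cases.
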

\noindent

\Dnote{}%
A relatively recent line of work \cite{FeigeKO07a,Raz08a,BarakHHRRS08,BarakRRRS09,RazR12} focused on the question of strong parallel repetition. Namely, given a game $G$ with $\val(G) \le 1-\eps$ is it true that $\val(G^\ok)\le (1-O(\eps))^k$? If true for any unique game $G$, such a bound would imply a reduction from \maxcut to unique games. However, Raz \cite{Raz08a} showed that the value of the odd cycle xor game is at least $1-O(\sqrt k \cdot \eps)$, much larger than the $1-O(k\eps)$ in strong parallel repetition. Our bound matches Raz's bound even for small values of $k$, thereby confirming a
conjecture of Ryan O'Donnell\footnote{Personal communication, 2012.} and extending the work of \cite{FeigeKO07a} who proved such an upper bound for the odd-cycle game.
\subsection{High Level Proof Overview}
We  associate a game $G$ with its \emph{label-extended graph}, by blowing up each vertex in $U$ or in $V$ to a cloud of $\card\Sigma$ vertices. In this bipartite graph we connect a left vertex $(u,\alpha)$ in the cloud of $u$ to a right vertex $(v,\beta)$ in  the cloud of $v$ iff $(\alpha,\beta)$ satisfy the constraint attached to $u,v$. This graph naturally gives rise to a linear operator mapping functions $f$ on the right vertices to functions $Gf$ on the left vertices, where $Gf(u,\alpha)$ is defined by aggregating the values of $f(v,\beta)$ over all neighboring vertices.

In this language, the product operation on games is given by the tensor product operation on the corresponding linear operators.

It turns out that a good measure for the value of the game $G$ is its {\em collision value}, which we denote by $\norm G$, which can be viewed as the (square root of the) maximal value of the following symmetrized version of $G$: The referee chooses a question $u$ and two neighbors $v,v'$ of it. It asks Bob for an answer for $v$ and Bob' for an answer for $v'$. The referee accepts if both answers ``collide'', i.e. project to the same answer for $u$. If both Bob and Bob' play according to a strategy $f$ then their value is $\snorm{Gf}$, hence our notation for the collision value, $\norm G$. This value has been studied before, for example in PCP constructions when moving from a line versus point test to a line versus line test. A simple Cauchy--Schwarz inequality shows that (see \pref{claim:collision-val})
\begin{equation}\label{eq:col} \val(G) \le \norm G \le \val(G)^{1/2} \mper \end{equation}
The collision value is more amenable to analysis, and indeed, our main technical theorem shows that the collision value has the following very nice property,
\begin{theorem}\label{thm:collision-val-product}
Any two projections games $G$ and $H$ satisfy
\begin{math}
  \snorm{G\otimes H} \le \varphi(\snorm{G}) \cdot \snorm H\mcom
\end{math}
where $\varphi(x) = \frac{2\sqrt x}{1+x}$.
\end{theorem}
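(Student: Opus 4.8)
The plan is to argue entirely in the operator picture from the overview. Recall that for a projection game~$G$ the squared collision value can be written as $\snorm G=\max_f\snorm{Gf}$, where the maximum is over Bob strategies~$f$ (deterministic, i.e.\ one entry equal to~$1$ per right cloud; fractional strategies give the same value since $f\mapsto\snorm{Gf}$ is convex and the fractional strategies form a polytope with deterministic vertices), and $\snorm{Gf}$ is exactly the acceptance probability of the collision game when both Bob and Bob$'$ play~$f$. Fix a deterministic strategy~$f$ realizing $\snorm{G\ot H}=\snorm{(G\ot H)f}$, and view~$f$ as assigning to a pair of questions $(v,w)$ a pair of answers with coordinates $f_G(v,w)$ and $f_H(v,w)$. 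Writing $\snorm{G\ot H}=\snorm{(G\ot I)(I\ot H)f}$ and setting $h=(I\ot H)f$ — a nonnegative function on (right clouds of~$G$)$\times$(left clouds of~$H$) — and expanding the squared norm over the left clouds of~$H$ gives the exact identity $\snorm{G\ot H}=\Ex[q]{\snorm{G h_q}}$, where $q=(u,\alpha)$ runs over the left clouds of~$H$ with their natural weights and $h_q$ is the restriction of~$h$ to~$q$. The mass of~$h_q$ over the cloud of a right vertex~$v$ of~$G$ is $c^{(q)}_v:=\Norm{h_q(v,\cdot)}_1=\Prob[w\sim u]{\pi^H_{uw}(f_H(v,w))=\alpha}\in[0,1]$; crucially, for each fixed~$v$ the quantity $\Ex[q]{\bigparen{c^{(q)}_v}^2}$ is precisely the collision value of the deterministic $H$-strategy $w\mapsto f_H(v,w)$, hence at most~$\snorm H$, so averaging over~$v$ and swapping expectations gives $\Ex[q]{\Ex[v]{\bigparen{c^{(q)}_v}^2}}\le\snorm H$.

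Everything then reduces to a single-game inequality $(\star)$: for every projection game~$G$ and every nonnegative~$h$ on its right clouds with cloud masses $c_v=\Norm{h(v,\cdot)}_1\le1$,
\begin{displaymath}
  \snorm{G h}\;\le\;\varphi(\snorm G)\cdot\Ex[v]{c_v^2}\mper
\end{displaymath}
Granting $(\star)$, apply it to each~$h_q$ and combine with the bound above:
\begin{displaymath}
  \snorm{G\ot H}=\Ex[q]{\snorm{G h_q}}\;\le\;\varphi(\snorm G)\,\Ex[q]{\Ex[v]{\bigparen{c^{(q)}_v}^2}}\;\le\;\varphi(\snorm G)\,\snorm H\mcom
\end{displaymath}
which is the theorem.

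To prove $(\star)$ — the crux — I would first \emph{derandomize}~$h$. Write $h(v,\cdot)=c_v\,p_v$ with $p_v$ a probability distribution on the alphabet, and sample a random deterministic strategy~$g$ with $g(v)\sim p_v$ independently over~$v$. A direct computation shows $\snorm{G h}=\Ex[g]{\Psi(g)}$ up to a diagonal term that only decreases the left-hand side, where $\Psi(g)=\Ex[u]{\sum_\alpha\bigparen{\Ex[v\sim u]{c_v\Ind[\pi_{uv}(g(v))=\alpha]}}^2}$ is a $c$-weighted collision count for the \emph{deterministic} strategy~$g$. For any deterministic~$g$ we have two facts: the unweighted count $\Ex[u]{\sum_\alpha D_{u\alpha}^2}\le\snorm G$ with $D_{u\alpha}=\Ex[v\sim u]{\Ind[\pi_{uv}(g(v))=\alpha]}$ (this is the definition of~$\snorm G$, valid for every~$g$), and — by Jensen over each group of right vertices sharing a value of $\pi_{uv}(g(v))$, using $c_v\le1$ — the bound $\Ex[u]{\sum_\alpha D_{u\alpha}r_{u\alpha}^2}\le\Ex[v]{c_v^2}$, where $r_{u\alpha}\in[0,1]$ is the conditional average of~$c_v$ over that group. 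Since $\Psi(g)=\Ex[u]{\sum_\alpha D_{u\alpha}^2 r_{u\alpha}^2}$, $(\star)$ follows from the elementary but delicate extremal fact: over all $D_{u\alpha}\ge0$ with $\sum_\alpha D_{u\alpha}=1$ and $r_{u\alpha}\in[0,1]$ subject to $\Ex[u]{\sum_\alpha D_{u\alpha}^2}\le s$ and $\Ex[u]{\sum_\alpha D_{u\alpha}r_{u\alpha}^2}\le t$, one has $\Ex[u]{\sum_\alpha D_{u\alpha}^2 r_{u\alpha}^2}\le\varphi(s)\,t$; the maximum is attained by a two-value (``two-cluster'') configuration, which is exactly where $\varphi(x)=2\sqrt x/(1+x)$ is forced and which coincides with the extremizer behind Raz's odd-cycle counterexample — hence the tightness in~\pref{thm:few-repetitions}.

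I expect this final extremal estimate to be the main obstacle: the function~$\varphi$ is applied to the \emph{global} quantity $s=\snorm G$ rather than pointwise in~$u$, so after identifying the pointwise extremal shape one must still push a convexity/Cauchy--Schwarz step through the average over~$u$; the cleanest route is to keep both constraints as averages from the start and optimize the objective, which is bilinear in the two averaged quantities, directly. One further point of care: the peeling identity and $(\star)$ implicitly assume~$G$ and~$H$ have no degenerate pieces — e.g.\ an isolated edge, on which the collision game is won for free and $(\star)$ literally fails — and this should be disposed of by a routine weight-preserving regularization of the games beforehand (or by folding such contributions into the harmless diagonal term). The remaining ingredients — convexity of $f\mapsto\snorm{Gf}$, the Jensen step, and Fubini — are routine.
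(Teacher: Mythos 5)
Your reduction of the theorem to the single-game inequality $(\star)$ is where the argument breaks down: $(\star)$ is false. Take $G$ to be the disjoint union (equal weights) of two projection games $G_1$ and $G_2$ with $\snorm{G_1}=a$ close to~$1$ and $\snorm{G_2}=b$ close to~$0$, so $\snorm G = (a+b)/2$. Let $h$ be an optimal Bob strategy for $G_1$ on the $G_1$ side (so $c_v=1$ there) and identically zero on the $G_2$ side. Then $\snorm{Gh}=a/2$ while $\Ex[v]{c_v^2}=1/2$, so $(\star)$ would force $a\le\varphi\bigparen{(a+b)/2}$; for $a=0.95$, $b=0.05$ this reads $0.95\le\varphi(0.5)\approx 0.943$, which is false. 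This is not the ``isolated edge'' degeneracy you flag --- here $\snorm{G_1}<1$ and nothing is won for free --- so your proposed regularization does not rescue it. The same configuration refutes the extremal fact your proof of $(\star)$ rests on: split the $u$'s into two halves, one with $\sum_\alpha D_{u\alpha}^2=a$ and $r_{u\alpha}\equiv1$, the other with $\sum_\alpha D_{u\alpha}^2=b$ and $r_{u\alpha}\equiv0$; then $s=(a+b)/2$, $t=1/2$, and the objective is $a/2>\varphi(s)\,t$. You correctly sensed the danger in your penultimate paragraph (``$\varphi$ is applied to the global quantity $s$ rather than pointwise in $u$''), but this is not a technical hurdle to be pushed through: when the $D$-mass and the $r$-mass correlate positively across $u$, the estimate simply fails, and your peel-then-bound-each-slice plan inherits that failure.

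The paper avoids exactly this by normalizing with a maximum over $v$ rather than an average. Its relaxation $\rval(G)^2=\sup_\Omega\max_{h\ge0}\snorm{(G\ot\Id)h}/\max_v\snorm{(T_v\ot\Id)h}$ carries a $\max_v$ in the denominator, which forbids the concentration above, and the proof that $\rval(G)^2\le\varphi(\snorm G)$ (\pref{thm:approximation}) must keep all the slices $\{h_q\}$ together as one vector assignment: threshold rounding turns each slice into a distribution of partial assignments, and correlated sampling then stitches them into a single global assignment, using precisely the \emph{per-vertex} bound $\Ex[q]{\bigparen{c_v^{(q)}}^2}\le1$ for every $v$, not merely on average over $v$. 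Your peeling step and the per-vertex observation that $\Ex[q]{\bigparen{c_v^{(q)}}^2}\le\snorm H$ for each fixed $v$ is exactly the content of \pref{thm:valplus-bound}; what is missing is the other half, and a per-slice inequality of the form $(\star)$ cannot supply it.
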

\pref{thm:parrep} follows directly from repeated applications of this theorem, together with the bound in \pref{eq:col}.
We remark that a similar statement for $\val(\cdot)$ instead of $\snorm {\cdot}$, i.e. of the form $\val(G\ot H) \le \varphi(\val(G))\cdot\val(H)$, is false. Feige shows~\cite{Feige91} an example of a game for which $\val(G\ot G)=\val(G)=\frac 1 2$, see discussion in~\pref{app:Feige}. In contrast, \pref{thm:collision-val-product} implies that there is no projection game $G$ for which $\norm{G\ot G} = \norm{G} < 1$.

The proof of \pref{thm:collision-val-product} proceeds by looking for a parameter $\rho_G$ for which $\norm{G\ot H}\le\rho_G\cdot\norm H$, and such that $\rho_G$ depends only on $G$. One syntactic possibility is to take $\rho_G$ to be the supremum of $\frac{\norm{G\ot H}}{\norm H}$ over all games $H$,\footnote{Goldreich suggests to call this the ``environmental value'' of a game $G$ because it bounds the value of $G$ relative to playing in parallel with any environment $H$} but from this definition it is not clear how to prove that $\rho_G\approx \val(G)$.

Instead, starting from this ratio we arrive at a slightly weaker relaxation, which we call $\rval$, that is expressed as the ratio of two collision values: one for the game~$G$ and the other involving a trivial game~$T$ that provides suitable normalization.
In order to connect the ratio $\frac{\norm{G\ot H}}{\norm H}$ to a quotient of the values of $G$ and of $T$, we factor the operator $G\ot H$ into two consecutive steps \[ G\ot H = (G\ot \Id)(\Id \ot H)\mper \] The same factorization can be applied to the operator $T\ot H$.
This factorization allows us to magically ``cancel out'' the game $H$, and we are left with an expression
that depends only on $G$.\\

The main technical component of the proof is to prove that $\rval(G)\approx \val(G)$, and this proof has two components.
The first is a rounding algorithm that extracts an assignment from a non-negative function with large ratio between the norm of $Gf$ and the norm of $Tf$. In the case of expanding games, this is the only component necessary and it is obtained rather easily from the expander mixing lemma.
For non-expanding games, we rely on a subtler (more ``parameter-sensitive'') proof, building on a variant of Cheeger's inequality in \cite{Steurer10c}.
Here a non-negative function will only give a good {\em partial assignment}: an assignment that assigns values only to a small subset of the vertices.
We then combine many partial assignments into a proper assignment using \emph{correlated sampling}, first used in this context by \cite{Holenstein09}.

\subsection{The relaxation $\rval$}

Our definition of $\rval$ involves moving from randomized assignments $f:V\times\Sigma\to\Rnn$ to vector assignments $f:V\times\Sigma\to\Rnn^\Omega$ for some finite measure space $\Omega$.
The collision value of a vector assignment is the expectation over $\omega\in \Omega$ of $\snorm{Gf_\omega}$ where $f_\omega:V\times\Sigma\to\Rnn$ is the $\omega$-th component of $f$. This can be written as $\snorm{(G\ot \Id_\Omega)f}$ where $\Id_\Omega$ is the identity operator on $\R^{\Omega}$.

The relaxed value $\rval(G)$ is the maximum of $\snorm{(G\ot \Id_\Omega)f}$ over all $f$ that are ``normalized'':
 \[ \rval(G) = \sup_\Omega \sup_{f\in Assign(\Omega)} \norm{(G\ot \Id)f}
\]
where the normalization requirement, in analogy to demanding $\sum_\beta f(v,\beta)\le 1$ for randomized assignments, is to require that $f$ belong to the following set
\[
Assign(\Omega) = \Set{ f:V\times\Sigma\to\Rnn^\Omega\; :\; \forall v\in V,\; \Bigsnorm{\sum_\beta f(v,\beta) }\le 1 } \mper
\]
Note that this set coincides with the set of randomized assignments when $\Omega= \set 1$, so clearly $\rval(G)\ge \norm G$.

We remark that the set $Assign(\Omega)$ is nothing but the set of all vector assignments $f$ such that $\snorm{(T_v\ot \Id_\Omega)f}\le 1$ for all $v$, where $T_v$ is a specific ``trivial'' game that is essentially defined by the condition above (defined formally in \pref{sec:T}). So $\rval$ can be written as a kind of  \emph{generalized Rayleigh quotient}\footnote{Rayleigh quotients refer to the expressions of the form $\langle  f,A f \rangle/\langle  f,f \rangle$ for operators $A$ and functions $f$.
\emph{Generalized Rayleigh quotients} refer to expressions of the form $f\mapsto \langle  f,A f \rangle/\langle  f,B f \rangle$ for operators $A$ and $B$ and functions $f$.  },
\[ \rval(G) = \sup_\Omega \sup_{f\ge 0}\frac{\norm{(G\ot \Id)f}}{\max_v \norm{(T_v\ot \Id)f} }\mper
\]
The key difference to standard linear-algebraic quantities is that we restrict this generalized Rayleigh quotient to functions that take only {\em nonnegative values}.
Thus, intuitively one can think of the value $\rval(G)$ as a ``positive eigenvalue'' of the game~$G$.

With this definition in hand it is rather straightforward to prove that $\rval$ is multiplicative
and a good approximation of the game value,
\begin{lemma}[Multiplicativity]\label{lem:mult} For every two projection games $G,H$,
\[\rval(G\ot H) = \rval(G)\cdot\rval(H). \]
\end{lemma}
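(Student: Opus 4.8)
The plan is to prove multiplicativity by establishing the two inequalities $\rval(G\ot H)\ge \rval(G)\cdot\rval(H)$ and $\rval(G\ot H)\le \rval(G)\cdot\rval(H)$ separately, exploiting the definition of $\rval$ as a supremum of the generalized Rayleigh quotient $\snorm{(G\ot\Id_\Omega)f}/\max_v\snorm{(T_v\ot\Id_\Omega)f}$ over nonnegative vector assignments $f$, and the factorization identity $(G\ot H)\ot\Id_\Omega=(G\ot\Id)(\,\Id\ot H\ot\Id\,)$ applied to the operators involved.

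For the lower bound ($\ge$): given near-optimal vector assignments $f\from U_G\times\Sigma_G\to\Rnn^{\Omega_1}$ for $G$ and $g\from U_H\times\Sigma_H\to\Rnn^{\Omega_2}$ for $H$, I would form the tensor-product assignment $f\ot g$ on the product game $G\ot H$, taking values in $\Rnn^{\Omega_1\times\Omega_2}$, which is again nonnegative. The key computations are: (i) the collision value factorizes, $\snorm{(G\ot H\ot \Id_{\Omega_1\times\Omega_2})(f\ot g)} = \snorm{(G\ot\Id_{\Omega_1})f}\cdot\snorm{(H\ot\Id_{\Omega_2})g}$, because the operator and the assignment both split as tensor products across the two coordinates; and (ii) the normalization (trivial-game) denominator also factorizes, since the trivial game for $G\ot H$ is built from those of $G$ and $H$ — concretely $\sum_{\beta_1,\beta_2}(f\ot g)((u_1,u_2),(\beta_1,\beta_2)) = \big(\sum_{\beta_1}f(u_1,\beta_1)\big)\ot\big(\sum_{\beta_2}g(u_2,\beta_2)\big)$, whose squared norm is the product of the two squared norms, so $f\ot g\in\Assign(\Omega_1\times\Omega_2)$ whenever $f,g$ are normalized. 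Taking suprema then gives $\rval(G\ot H)\ge\rval(G)\cdot\rval(H)$.

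For the upper bound ($\le$): start from a near-optimal normalized nonnegative $f\from (U_G\times U_H)\times(\Sigma_G\times\Sigma_H)\to\Rnn^\Omega$ for $G\ot H$. Factor $(G\ot H\ot\Id_\Omega) = (G\ot\Id_{U_H\times\Sigma_H\times\Omega})\,(\,\Id_{U_G\times\Sigma_G}\ot H\ot\Id_\Omega\,)$, i.e. apply $H$ first to get an intermediate nonnegative function $h := (\Id\ot H\ot\Id_\Omega)f$ on $U_G\times\Sigma_G$ with values in $\Rnn^{U_H\times\Sigma_H\times\Omega}$ — think of the enlarged index space $\Omega' := U_H\times\Sigma_H\times\Omega$ — and then apply $G$. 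The point is that, for each fixed $(u_G,\beta_G)$, the "$H$-slice" of $f$ is a nonnegative vector assignment for $H$, so applying $H$ scales its collision value by at most $\rval(H)$ relative to the $T^H$-normalization; summing this over the $G$-structure and using that $f$ is normalized for the product trivial game shows that $h$, viewed as a vector assignment for $G$ over the measure space $\Omega'$, is (essentially) normalized up to a factor $\rval(H)$, while $\snorm{(G\ot\Id_{\Omega'})h}=\snorm{(G\ot H\ot\Id_\Omega)f}$ is what we want to bound. Applying the definition of $\rval(G)$ to the normalized rescaling of $h$ then yields $\snorm{(G\ot H\ot\Id_\Omega)f}\le\rval(G)\cdot\rval(H)$, and taking the sup over $f$ finishes.

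The main obstacle I anticipate is the bookkeeping in the upper bound: making precise the claim that applying $H$ "coordinate-wise in the $G$-directions" interacts correctly with the trivial-game normalization. Specifically, one must check that the denominator constraint $\max_{v_G,v_H}\snorm{(T_{v_G}\ot T_{v_H}\ot\Id_\Omega)f}\le 1$ transforms, after applying $H$, into a bound of the form $\max_{v_G}\snorm{(T_{v_G}\ot\Id_{\Omega'})h}\le\rval(H)^{?}$ — getting the right power and verifying that the maximum over $v_H$ can be absorbed into the $\rval(H)$ factor (rather than summed) requires care, and is exactly where the definition of $T_v$ as "essentially the condition $\snorm{\sum_\beta f(v,\beta)}\le 1$" and the structure of $T^{G\ot H}$ as a product of trivial games must be used. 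Everything else — the tensor factorizations of collision values, nonnegativity being preserved under $\ot$ and under the operators $G,H$ (which have nonnegative entries) — is routine once the normalization is handled.
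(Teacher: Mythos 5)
Your overall plan — tensor products for the lower bound, the factorization $(G\ot H\ot\Id_\Omega)=(G\ot\Id)(\Id\ot H\ot\Id)$ for the upper bound — is exactly the paper's approach, and your lower-bound argument is complete. In the upper bound, however, the object you name as the ``$H$-slice'' is not quite the right one, and this is precisely the bookkeeping point you flag as the obstacle. You write ``for each fixed $(u_G,\beta_G)$, the $H$-slice of $f$ is a nonnegative vector assignment for $H$'' — but the $(v_G,\beta_G)$-slice $f(v_G,\cdot,\beta_G,\cdot,\cdot)$ is \emph{not} what you have control over: the normalization you are given is $\snorm{(T_{v_G}\ot T_{v_H}\ot\Id_\Omega)f}\le 1$, which involves the sum over \emph{both} $\beta_G$ and $\beta_H$, and hence constrains $\sum_{\beta_G}f(v_G,\cdot,\beta_G,\cdot,\cdot)$, not the individual $\beta_G$-slices.

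The resolution (which is what the paper does) is to observe that the operator $T_{v_G}$ itself performs exactly the right reduction: $(T_{v_G}\ot\Id\ot\Id_\Omega)f$ is, up to addressing, the function $g_{v_G}(v_H,\beta_H,\omega):=\sum_{\beta_G}f(v_G,v_H,\beta_G,\beta_H,\omega)$, and this is a bona fide vector assignment for $H$ over $\Omega$ that is normalized because $\snorm{(T_{v_H}\ot\Id_\Omega)g_{v_G}}=\snorm{(T_{v_G}\ot T_{v_H}\ot\Id_\Omega)f}\le 1$. Hence for every $v_G$,
\begin{equation*}
\snorm{(T_{v_G}\ot\Id\ot\Id_\Omega)h}=\snorm{(T_{v_G}\ot H\ot\Id_\Omega)f}=\snorm{(H\ot\Id_\Omega)g_{v_G}}\le \rval(H)^2\mcom
\end{equation*}
so $h/\rval(H)$ is normalized for $G$ over $\Omega'=U_H\times\Sigma_H\times\Omega$ and the definition of $\rval(G)$ closes the argument. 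In short: you don't need to track $H$-slices indexed by $(v_G,\beta_G)$; the trivial game $T_{v_G}$ already collapses $\beta_G$ for you, and the product structure $T_{(v_G,v_H)}=T_{v_G}\ot T_{v_H}$ hands you the normalization directly. With that correction your proof becomes the paper's proof (modulo the paper phrasing the upper bound as a contradiction, which is cosmetic).
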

\begin{theorem}[Approximation]
\label{thm:approximation}
  Let $G$ be a game with $\rval(G)^2 > \rho$, then
  \begin{displaymath}
    \val(G)\geq\snorm G > \frac{1-\sqrt{1-\rho^2}}{1+\sqrt{1-\rho^2}}\mper
  \end{displaymath}
Contrapositively, if $\snorm G < \delta$, then
  \begin{displaymath}
    \rval(G)^2 < \frac{2\sqrt{\delta}}{1+\delta}\mper
  \end{displaymath}
  In particular, if $\snorm G$ is small then the above bound becomes $\rval(G)^2\le 2\norm G$; and if $\snorm G < 1-\eps$ then $\rval(G )^2 < 1-\eps^2/8$.
\end{theorem}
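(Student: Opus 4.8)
The plan is to prove the equivalent inequality $\rval(G)^2\le\varphi(\snorm G)$, where $\varphi(x)=2\sqrt x/(1+x)$. Since $\varphi$ is an increasing bijection of $[0,1]$ onto itself with inverse $\psi(\rho)=(1-\sqrt{1-\rho^2})/(1+\sqrt{1-\rho^2})$, this is exactly the displayed implication, and the two ``in particular'' claims then drop out of the elementary estimates $\varphi(\delta)\le 2\sqrt\delta$ (as $1+\delta\ge1$) and $\varphi(1-\eps)=2\sqrt{1-\eps}/(2-\eps)\le 1-\eps^2/8$ (using $\sqrt{1-\eps}\le1-\tfrac\eps2-\tfrac{\eps^2}{8}$). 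Unwinding the definition of $\rval$, it then suffices to take any $f\in\Assign(\Omega)$ over a finite measure space $\Omega$ with $\snorm{(G\otimes\Id_\Omega)f}>\rho$ and round it into an ordinary randomized assignment of collision value exceeding $\psi(\rho)$, which witnesses $\snorm G>\psi(\rho)$.

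First I would normalize the witness so that $\snorm{\sum_\beta f(v,\beta)}=1$ for every $v$ (scaling up each $f(v,\cdot)$ only increases $\snorm{(G\otimes\Id_\Omega)f}$ and keeps $f$ feasible) and restate the hypothesis in terms of overlaps. Write $g_u(v)\in\Rnn^{\Sigma\times\Omega}$ for the pushforward of the nonnegative vector $f(v,\cdot)$ along the projection $\pi_{uv}$; then $\snorm{g_u(v)}\le 1$ (all cross terms being nonnegative) and
\[
  \snorm{(G\otimes\Id_\Omega)f}=\E_{u}\,\E_{v,v'\sim u}\,\iprod{g_u(v),g_u(v')}\mper
\]
Thus the hypothesis says that, on average over a vertex $u$ and two random neighbours $v,v'$, the nonnegative vectors $g_u(v),g_u(v')$ have inner product exceeding $\rho$ while each has norm at most $1$; equivalently, their $\ell_2$-normalizations lie at ``Hellinger-type distance'' at most $\sqrt{1-\rho^2}$ on average. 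The form of $\psi(\rho)=(1-\sqrt{1-\rho^2})/(1+\sqrt{1-\rho^2})$ is the tell-tale sign that $f$ should be rounded by a correlated-sampling coupling: the min-hash coupling makes two distributions at total-variation distance $d$ agree with probability $(1-d)/(1+d)$, so $d\le\sqrt{1-\rho^2}$ yields collision probability $\ge\psi(\rho)$.

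I would carry out the rounding in the two components indicated in the overview. The first component turns the nonnegative vector field $f$ into a (partial) scalar assignment: for the special class of \emph{expanding} projection games a single level set (sweep cut) of $f$ already yields a \emph{full} assignment of collision value $\ge\psi(\rho)$ directly via the expander mixing lemma, and this is the only step needed there. For general projection games a level set of $f$ gives only a good \emph{partial} assignment, living on a vertex subset $S$ of measure $\mu$, and one invokes the parameter-sensitive variant of Cheeger's inequality from \cite{Steurer10c} to control how $\mu$ trades off against the local collision value on $S$. The second component then amalgamates the partial assignments obtained over the various level sets into a single randomized assignment on all of $V$ by correlated sampling in the sense of \cite{Holenstein09}: the min-hash coupling agrees with probability $\ge(1-d)/(1+d)$ on two distributions at total-variation distance $d$, and feeding in $d=\sqrt{1-\rho^2}$ produces the function $\psi$, with $\sqrt{1-\rho^2}$ measuring the deficiency of the vector assignment from attaining collision value $1$.

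The main obstacle is the non-expanding case, and in particular the interface between its two components. A large generalized Rayleigh quotient $\snorm{(G\otimes\Id)f}/\max_v\snorm{(T_v\otimes\Id)f}$ does not by itself force a globally good assignment --- only a good assignment on a possibly tiny subset --- and when one stitches together the $\Theta(1/\mu)$ partial assignments, the correlated-sampling losses threaten to swamp the gain. Arranging that the coverage $\mu$ and the coupling probability multiply out to the clean bound $\psi(\rho)$, with no spurious polynomial loss (which would be fatal for the small-value application driving the \setcover hardness), is the delicate point, and is exactly where the sharp, parameter-sensitive form of Cheeger's inequality from \cite{Steurer10c} is needed in place of the textbook version.
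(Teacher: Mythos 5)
Your plan matches the paper's proof in its essential structure: reduce the theorem to showing $\rval(G)^2\le\varphi(\snorm G)$; pass to deterministic slices; apply a threshold (sweep-cut) rounding slice-by-slice to obtain $0/1$-valued partial assignments; and stitch these together with a min-hash / correlated-sampling coupling. The two ``simple inequalities'' the paper isolates (\pref{lem:gm-vs-min}/\pref{cor:gm-vs-min} and \pref{lem:min-vs-max}) are exactly the arithmetic behind your two components, and your reduction of the ``in particular'' clauses to elementary estimates on $\varphi$ is correct.

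Two small imprecisions worth flagging. First, $\sqrt{1-\rho^2}$ is not literally a Hellinger or total-variation distance between the overlap vectors $g_u(v)$: with $\norm{g_u(v)}\le 1$ and inner product $\ge\rho$ one only gets $\norm{g_u(v)-g_u(v')}\le\sqrt{2(1-\rho)}$, which differs from $\sqrt{1-\rho^2}$. In the paper the factor $\sqrt{1-\rho^2}$ instead arises from a Cauchy--Schwarz step inside the Cheeger lemma ($\E Z\min\set{A,B}\ge(1-\sqrt{1-\rho^2})\E\tfrac12(A+B)$ whenever $\E Z\sqrt{AB}\ge\rho\,\E\tfrac12(A+B)$), which controls the deficiency $\gamma$ of the rounded $0/1$ vector assignment; correlated sampling then contributes the clean factor $(1-\gamma)/(1+\gamma)$, and composing yields $\varphi^{-1}(\rho)$ with no extra loss. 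Second, the expanding-game case in the paper does not use a sweep cut at all: there one derandomizes the slices by convexity and then uses an $L_2$ expansion (Laplacian) argument to show $g(v)=f(v,\beta_v)$ is nearly constant, which is a different mechanism from the level-set rounding used in the non-expanding case. You also omit the preliminary derandomization lemma (\pref{lem:deterministic}); it is needed because the threshold rounding is only sound when applied to deterministic slices. None of this changes the overall architecture, which you have identified correctly.
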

\subsection{Related work}
Already in~\cite{FeigeL92}, Feige and \Lovasz proposed to study parallel
repetition via a relaxation of the game value.
Their relaxation is defined as the optimal value of a semidefinite program.
While this relaxation is multiplicative, it does not provide a good
approximation for the game value.\footnote{%
  In fact, no polynomial-time computable relaxation can provide a good
  approximation for the game value unless $\mathrm P=\mathrm{NP}$,
  because the game value is NP-hard to approximate.}
In particular, the value of this relaxation can be $1$, even if the game
value is close to $0$.
The proof that the \define{Feige--\Lovasz relaxation} is multiplicative uses
semidefinite programming duality (similar to \cite{Lovasz79}).
In contrast, we prove the multiplicativity of $\rval$ in a direct way.\footnote{
  The relaxation $\rval$ can be defined as a convex program (in fact, a copositive program).
  However, it turns out that unlike for semidefinite programs, the
  dual objects are not closed under tensor products.
}

For unique two-player games, Barak et al. \cite{BarakHHRRS08}
introduced a new relaxation, called \define{Hellinger value}, and
showed that this relaxation provides a good approximation to both the game
value and the value of the Feige--\Lovasz relaxation
(see \cite{Steurer10b} for improved approximation bounds).
These quantitative relationships between game value, Hellinger value,
and the Feige--\Lovasz relaxation lead to counter-examples to ``strong
parallel repetition,'' generalizing \cite{Raz08a}.

The relaxation $\rval$ is a natural extension of the Hellinger value
to projection games (and even, general games).
Our proof that $\rval$ satisfies the approximation property for
projection games follows the approach of \cite{BarakHHRRS08}. The proof
is more involved because, unlike for unique games, $\rval$ is no longer
easily expressed in terms of Hellinger distances.
Another difference to \cite{BarakHHRRS08} is that we need to establish
the approximation property also when the game's value is close to $0$.
This case turns out to be related to Cheeger-type inequalities in the
near-perfect expansion regime \cite{Steurer10c}.

\subsection{Organization}
In \pref{sec:technique} we describe the analytic framework in which we
study games. In \pref{sec:basic-approach} we outline the main approach and
show how \pref{thm:collision-val-product} implies \pref{thm:parrep}, \pref{cor:parrep-small-soundness} and \pref{cor:rao} together.
We then give a complete and relatively simple analysis of the parallel repetition bound for expanding projection
games. This proof gives a taste of our techniques in a simplified
setting, and gives gap amplification for \labelcover, thereby proving
the NP-hardness of \labelcover$(1,\delta)$.
In \pref{sec:approx-gen} we prove the
approximation property of $\rval$ for non-expanding games and then multiplicativity of $\rval$ and then derive \pref{thm:collision-val-product}. In
\pref{sec:few-reps} we analyze parallel repetition with few
repetitions, proving \pref{thm:few-repetitions}.
We prove \pref{cor:labelcover} and related
hardness results for \labelcover and \setcover in \pref{sec:inapprox}.

\section{Technique}\label{sec:technique}

\subsection{Games and linear operators}
\label{sec:label-cover-games}

A \emph{two-prover game} $G$ is specified by a bipartite graph with vertex sets $U$ and $V$ and edges decorated by constraints $\pi\sse \Sigma \times \Sigma$ for an alphabet $\Sigma$.
(We allow parallel edges and edges with nonnegative weights.)
The graph gives rise to a distribution on triples $(u,v,\pi)$ (choose an edge of the graph with probability proportional to its weight).
The marginals of this distribution define probability measures on $U$ and $V$.
When the underlying graph is regular, these measures on $U$ and $V$ are uniform.
It's good to keep this case in mind because it captures all of the difficulty.
We write $(v,\pi)\mid u$ to denote the distribution over edges incident to a vertex $u$.
(Formally, this distribution is obtained by selecting a triple $(u,v,\pi)$ conditioned on $u$)

We say that $G$ is a \emph{projection game} if every constraint $\pi$ that appears in the game is a projection constraint, i.e., each $\beta\in \Sigma$ has
at most one $\alpha\in\Sigma$ for which $(\alpha,\beta)\in \pi$.
We write $\alpha\stackrel \pi{\mapsfrom} \beta$ to denote $(\alpha,\beta)\in \pi$ for projection constraints $\pi$.
If the constraint is clear from the context, we just write $\alpha\mapsfrom \beta$.

\Dcomment{needed at this point?:
The computational problem of finding the value of a given two-prover game is called \labelcover.
Concretely, $\labelcover(1,\delta)$ is the problem of distinguishing if a given projection game $G$ has $\val(G)=1$ or $\val(G)\le \delta$.
}

\paragraph{Linear-algebra notation for games}
In this work we will represent an \emph{assignment} for Bob by a nonnegative function $f\from V\times\Sigma \to \R$ such that $\sum_{\beta\in\Sigma}f(v,\beta)=1$ for every $v\in V$.
The value $f(v,\beta)$ is interpreted as the probability that Bob answers $\beta$ when asked $v$.
Similarly, an assignment for Alice is a nonnegative function $g\from U\times \Sigma \to \R$ such that $\sum_\alpha g(u,\alpha)=1$ for all $u\in U$.

Let $L(U\times \Sigma)$ be the space of real-valued functions on $U\times \Sigma$ endowed with the inner product $\iprod{\cdot,\cdot}$,
\begin{displaymath}
  \iprod{g,g'} = \E_u \sum_\alpha g(u,\alpha)g'(u,\alpha).
\end{displaymath}
The measure on $U$ is the probability measure defined by the graph; the measure on $\Sigma$ is the counting measure.
More generally, if $\Omega$ is a measure space, $L(\Omega)$ is the space of real-valued functions on $\Omega$ endowed with the inner product defined by the measure on $\Omega$.
The inner product induces a norm with $\norm{g} = \iprod{g,g}^{1/2}$ for $g\in L(\Omega)$.

We identify a projection game $G$ with a linear operator from $L(V\times\Sigma)$ to $L(U\times \Sigma)$, defined by
\begin{displaymath}
  Gf ( u, \alpha ) = \E_{(v,\pi)|u} \sum_{\beta:\,\alpha\mapsfrom \beta} f(v,\beta)\mper
\end{displaymath}
The bilinear form $\iprod{\cdot, G \cdot}$ measures the value of assignments for Alice and Bob.
\begin{claim}
  If Alice and Bob play the projection game $G$ according to assignments $f$ and $g$, then their success probability is equal to $\langle g, Gf \rangle$,
  \begin{displaymath}
    \iprod{g,Gf}
    = \E_{(u,v,\pi)}\sum_{a\mapsfrom \beta} g(u,\alpha)f(v,\beta) \mper
  \end{displaymath}
\end{claim}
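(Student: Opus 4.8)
The plan is to unwind both sides of the claimed identity down to the same explicit double average over $(u,v,\pi)$ and over answer pairs, and observe that they coincide; the projection property of $G$ plays no role here and only fixes the ambient setting.

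First I would read off the right-hand side directly from the game dynamics. In a play of $G$ with assignments $f$ (for Bob) and $g$ (for Alice), the referee draws a triple $(u,v,\pi)$ from the edge distribution; then, independently, Alice answers $u$ with $\alpha$ with probability $g(u,\alpha)$ and Bob answers $v$ with $\beta$ with probability $f(v,\beta)$. Because these two private choices are independent, the probability of the joint answer $(\alpha,\beta)$ is $g(u,\alpha)f(v,\beta)$, and the referee accepts exactly when $\alpha\mapsfrom\beta$. Averaging over the referee's choice, the success probability equals $\E_{(u,v,\pi)}\sum_{\alpha\mapsfrom\beta} g(u,\alpha)f(v,\beta)$, which is the displayed right-hand side.

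Next I would expand the left-hand side from the definitions of the inner product on $L(U\times\Sigma)$ and of the operator $G$:
\[
\iprod{g,Gf} \;=\; \E_u \sum_\alpha g(u,\alpha)\,(Gf)(u,\alpha) \;=\; \E_u \sum_\alpha g(u,\alpha)\,\E_{(v,\pi)\mid u}\sum_{\beta:\,\alpha\mapsfrom\beta} f(v,\beta)\mper
\]
Pulling the finite sum over $\alpha$ inside $\E_{(v,\pi)\mid u}$ by linearity, and then using that drawing $u$ from its marginal followed by $(v,\pi)$ from the conditional distribution $(v,\pi)\mid u$ reproduces the joint distribution on $(u,v,\pi)$ (the tower property, which holds by the very definition of $(v,\pi)\mid u$), we get
\[
\iprod{g,Gf} \;=\; \E_{(u,v,\pi)} \sum_\alpha \sum_{\beta:\,\alpha\mapsfrom\beta} g(u,\alpha)f(v,\beta)\mper
\]
Finally, the nested sum $\sum_\alpha\sum_{\beta:\,\alpha\mapsfrom\beta}$ is precisely the sum over all pairs $(\alpha,\beta)\in\Sigma\times\Sigma$ with $\alpha\mapsfrom\beta$, i.e.\ $\sum_{\alpha\mapsfrom\beta}$, so the two expressions agree.

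There is no real obstacle: the statement is a bookkeeping identity. The only two points worth a word of care are (i) the independence of Alice's and Bob's private randomness, which is what makes the joint answer probability factor as $g(u,\alpha)f(v,\beta)$, and (ii) the identity $\E_u\,\E_{(v,\pi)\mid u}[\,\cdot\,]=\E_{(u,v,\pi)}[\,\cdot\,]$, which is just the law of total expectation for the edge distribution split along its $U$-marginal.
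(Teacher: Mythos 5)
Your proof is correct: the paper leaves this claim unproved as a routine bookkeeping identity, and your direct expansion of both sides---unwinding $\iprod{g,Gf}$ via the definitions of the inner product and the operator $G$, then collapsing $\E_u\,\E_{(v,\pi)\mid u}$ into $\E_{(u,v,\pi)}$---is exactly the intended verification. The two side remarks you flag (independence of the players' private randomness, and the tower property for the edge distribution) are the right points to be careful about.
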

This setup shows that the value of the game $G$ is the maximum of the bilinear form $\langle  g,Gf \rangle$ over assignments $f$ and $g$.
If we were to maximize the bilinear form over all functions with unit norm (instead of assignments), the maximum value would be the largest singular value of an associated matrix.

\subsection{Playing games in parallel}
Let $G$ be a projection game with vertex sets $U$ and $V$ and alphabet $\Sigma$.
Let $H$ be a projection game with vertex sets $U'$ and $V'$ and alphabet $\Sigma'$.
The \emph{direct product} $G\ot H$ is the following game with vertex sets $U\times U'$ and $V\times V'$ and alphabet $\Sigma\times \Sigma'$:
The referee chooses $(u,v,\pi)$ from $G$ and $(u',v',\pi')$ from $H$ independently.
The referee sends $u,u'$ to Alice and $v,v'$ to Bob.
Alice answers $\alpha,\alpha'$ and Bob answers $\beta,\beta'$.
The players succeed if both $\alpha \mapsfrom \beta$ and $\alpha'\mapsfrom \beta'$.
In linear algebra notation,
\begin{claim}
Given two games $G\from L(V\times \Sigma)\to L(U\times\Sigma)$ and $H\from L(V'\times \Sigma)\to L(U'\times\Sigma)$, the direct product game
\begin{math}
  G\otimes H\from L(V\times V' \times \Sigma\times\Sigma')\to L(U\times U' \times \Sigma\times\Sigma')
\end{math}
is given by the tensor of the two operators $G$ and $H$.
More explicitly, for any $f\in L(V\times V' \times\Sigma\times\Sigma')$,
the operator for $G\ot H$ acts as follows,
\[
(G\otimes H) f (u,u',\alpha,\alpha') = \E_{(v,\pi)|u} ~ \E_{(v',\pi')|u'} ~ \sum_{\beta\from (\alpha,\beta)\in \pi} ~ \sum_{\beta'\from (\alpha',\beta')\in \pi'} f(v,v',\beta,\beta')
\] \qed
\end{claim}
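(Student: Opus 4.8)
The plan is to prove the claim by directly unwinding the two definitions it connects: the combinatorial definition of the direct product game $G\ot H$ given just above, and the dictionary from \pref{sec:label-cover-games} that turns a projection game into a linear operator. The whole content is that these two recipes produce the same object, so the proof is a bookkeeping verification.

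First I would record the label-extended description of $G\ot H$. Its left vertex set is $U\times U'$, its right vertex set is $V\times V'$, and its alphabet is $\Sigma\times\Sigma'$; an edge of $G\ot H$ is a pair consisting of an edge $(u,v,\pi)$ of $G$ and an edge $(u',v',\pi')$ of $H$, and by definition of the product game it is selected with probability equal to the product of the two individual edge probabilities. The constraint attached to such a product edge is $\set{((\alpha,\alpha'),(\beta,\beta')) : (\alpha,\beta)\in\pi \text{ and } (\alpha',\beta')\in\pi'}$. A quick observation I would make here is that this is again a projection constraint on $\Sigma\times\Sigma'$, precisely because $\pi$ and $\pi'$ are: $(\alpha,\alpha')\mapsfrom(\beta,\beta')$ holds iff $\alpha\stackrel{\pi}{\mapsfrom}\beta$ and $\alpha'\stackrel{\pi'}{\mapsfrom}\beta'$, so each $(\beta,\beta')$ has at most one preimage. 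This keeps us inside the class of projection games and justifies the single inner sum per coordinate in the target formula.

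Next I would apply the definition $Gf(u,\alpha)=\E_{(v,\pi)\mid u}\sum_{\beta:\,\alpha\mapsfrom\beta}f(v,\beta)$ to the product game. Substituting the product edge distribution and the product constraint identified above gives
\[
(G\ot H)f(u,u',\alpha,\alpha') \;=\; \E_{((v,v'),(\pi,\pi'))\mid(u,u')}\ \sum_{(\beta,\beta')\,:\,(\alpha,\alpha')\mapsfrom(\beta,\beta')} f(v,v',\beta,\beta')\mper
\]
The key step is then the factorization: the joint law on $(u,u',v,v',\pi,\pi')$ is a product measure, so the conditional distribution of $((v,v'),(\pi,\pi'))$ given $(u,u')$ is the product of the conditional distribution of $(v,\pi)$ given $u$ with that of $(v',\pi')$ given $u'$; this is exactly where the independence of $G$ and $H$ in the product is used. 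Combined with the decoupling $(\alpha,\alpha')\mapsfrom(\beta,\beta') \iff \alpha\mapsfrom\beta \text{ and } \alpha'\mapsfrom\beta'$ from the previous paragraph, the expectation splits into an iterated expectation and the single sum splits into the announced double sum over $\beta$ and over $\beta'$, yielding exactly the displayed formula in the claim.

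Finally I would check that this formula is, by definition, the action of the tensor product operator under the identification $L(V\times V'\times\Sigma\times\Sigma')\cong L(V\times\Sigma)\otimes L(V'\times\Sigma')$: evaluated on a rank-one function $f(v,v',\beta,\beta')=f_1(v,\beta)\,f_2(v',\beta')$ the formula reads $(Gf_1)(u,\alpha)\cdot(Hf_2)(u',\alpha')$, which is the defining property of $G\otimes H$; since the formula is linear in $f$ and the rank-one functions span the space, equality extends to all $f$. I do not expect a genuine obstacle here — the only two points needing a moment of care are the measure-theoretic bookkeeping that the joint law really is a product measure so that conditioning factorizes, and the elementary fact that a product of projection constraints is again a projection constraint, which is what preserves the "at most one preimage" structure and hence the shape of the formula.
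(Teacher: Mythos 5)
Your verification is correct, and it is exactly the bookkeeping the paper leaves implicit: the claim appears with a \textup{\qed} and no written proof because the authors regard it as immediate from the definitions, which is precisely what your unwinding shows. The two points you flag as needing care — that the product measure makes the conditional law factor, and that a product of projection constraints is again a projection constraint so the inner sum stays single-valued per coordinate — are the right ones, and your rank-one check that the explicit formula agrees with the abstract tensor of operators closes the loop cleanly.
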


The notation $G^{\otimes k}$ is short for $G\otimes \cdots \otimes G$ ($k$ times).

\subsection{The collision value of a game}
\newcommand{\gnorm}[1]{\snorm{#1}}

The collision value of a projection game $G$ is a relaxation of the value of a game that is obtained by moving from $G$ to a symmetrized version of it.\footnote{Such a transformation is well-known in e.g. the PCP literature, albeit implicitly. Moving to a symmetrized version occurs for example in low degree tests when when converting a line versus point test to a line versus line test.}
The advantage of the collision value is that it allows us to eliminate one of the players (Alice) in a simple way.
Let the collision value of an assignment $f$ for Bob be $\norm{Gf} = \iprod{Gf,Gf}^{1/2}$.
We define the \emph{collision value} of $G$ to be \[
\norm{G} = \max_f \norm{Gf}
\]
where the maximum is over all assignments $f\in L(V\times \Sigma)$.

The value $\norm {Gf}^2$ can be interpreted as the success probability of the following process: Choose a random $u$ and then choose independently $(v,\pi)|u$ and $(v',\pi')|u$; choose a random label $\beta$ with probability $f(v,\beta)$ and a random label $\beta'$ with probability $f(v',\beta')$, accept if there is some label $\alpha$ such that $\alpha \stackrel \pi \mapsfrom \beta$ and  $\alpha \stackrel {\pi'} \mapsfrom \beta'$ (in this case $\beta,\beta'$ 'collide').

\inote{}

The collision value of a projection game is quadratically related to its value.
This claim is the only place where we use that the constraints are projections.
\begin{claim}\label{claim:collision-val}
  Let $G$ be a projection game.
  Then $\val(G) \le \norm{G} \le\val(G)^{1/2}$.
\end{claim}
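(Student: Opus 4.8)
The plan is to prove the two inequalities $\val(G)\le\norm G$ and $\norm G\le\val(G)^{1/2}$ separately, both via Cauchy--Schwarz, using the operator description $Gf(u,\alpha)=\E_{(v,\pi)|u}\sum_{\beta:\alpha\mapsfrom\beta}f(v,\beta)$ and the interpretation of $\norm{Gf}^2$ as the ``collision'' success probability recorded just above the claim.

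For the lower bound $\val(G)\le\norm G$: take an optimal pair of assignments $f,g$ achieving $\val(G)=\iprod{g,Gf}$. Since $g$ is an assignment, $g(u,\cdot)$ is a probability distribution on $\Sigma$ for each $u$, so in particular $\sum_\alpha g(u,\alpha)^2\le\sum_\alpha g(u,\alpha)=1$, i.e. $\norm g\le 1$. By Cauchy--Schwarz applied in $L(U\times\Sigma)$, $\iprod{g,Gf}\le\norm g\cdot\norm{Gf}\le\norm{Gf}\le\norm G$. Hence $\val(G)\le\norm G$. (One should double-check that $Gf$ is genuinely a valid ``Bob-strategy response'' object, but this step only uses norms, so nothing more is needed.)

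For the upper bound $\norm G\le\val(G)^{1/2}$: fix an assignment $f$ with $\norm{Gf}=\norm G$. Write out $\norm{Gf}^2=\E_u\sum_\alpha\bigl(\E_{(v,\pi)|u}\sum_{\beta:\alpha\mapsfrom\beta}f(v,\beta)\bigr)^2$. Expanding the square as a product of two independent copies $(v,\pi)|u$ and $(v',\pi')|u$, this equals
\begin{displaymath}
  \E_u\ \E_{(v,\pi)|u}\ \E_{(v',\pi')|u}\ \sum_\alpha\Bigl(\sum_{\beta:\alpha\mapsfrom\beta}f(v,\beta)\Bigr)\Bigl(\sum_{\beta':\alpha\mapsfrom\beta'}f(v',\beta')\Bigr)\mper
\end{displaymath}
Here is the one place projections are used: for a projection constraint, the map $\beta\mapsto\alpha$ with $\alpha\mapsfrom\beta$ is a partial function, so for fixed $(v,\pi)$ each $\beta$ contributes to at most one value of $\alpha$; define $g_f(u,\alpha)$ to be (a suitably normalized version of) the inner aggregate, namely the strategy for Alice that, given $u$, samples $(v,\pi)|u$ and then $\beta$ with probability proportional to $f(v,\beta)$ and answers the unique $\alpha\mapsfrom\beta$. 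Then the displayed quantity is exactly $\iprod{g_f,Gf}$ — the probability that Alice playing $g_f$ and Bob playing $f$ both satisfy the edge — once one checks the normalization $\sum_\alpha g_f(u,\alpha)=\E_{(v,\pi)|u}\sum_\beta f(v,\beta)=1$. Therefore $\norm{Gf}^2=\iprod{g_f,Gf}\le\val(G)$, giving $\norm G\le\val(G)^{1/2}$.

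The main obstacle is the bookkeeping in the second inequality: identifying the inner aggregate with a \emph{legitimate} assignment $g_f$ for Alice (nonnegative, summing to $1$ per vertex) requires that each $\beta$ be routed to a unique $\alpha$, which is precisely the projection hypothesis and fails for general constraints. Once the normalization is verified, the rest is a direct expansion of $\norm{Gf}^2$ and the definitions of $Gf$ and $\iprod{\cdot,\cdot}$.
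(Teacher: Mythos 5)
Your proof of the first inequality is identical to the paper's: Cauchy--Schwarz plus the observation that $\norm g\le 1$ for any assignment $g$.

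Your proof of the second inequality is also essentially the paper's argument, though phrased in expanded form: the "inner aggregate" you build for Alice is precisely $Gf$ itself, and the displayed quantity $\iprod{g_f,Gf}$ is just $\iprod{Gf,Gf}=\snorm{Gf}$; bounding this by $\val(G)$ via $\max_g\iprod{g,Gf}$ is exactly what the paper does. One small gap: you assert the normalization $\sum_\alpha g_f(u,\alpha)=\E_{(v,\pi)|u}\sum_\beta f(v,\beta)=1$, but this equality requires every $\beta$ to project to \emph{some} $\alpha$. The paper's definition of a projection constraint only says ``at most one'' $\alpha$ per $\beta$, so $\beta$ may have no accepting preimage, and then $\sum_\alpha Gf(u,\alpha)$ can be strictly less than $1$. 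This does not break the argument — one pads $Gf$ up to a genuine assignment $g$ by increasing some entries, which only increases $\iprod{g,Gf}$ since all entries are nonnegative — and the paper states the inequality $\sum_\alpha Gf(u,\alpha)\le 1$ and performs exactly this padding. So your argument is correct in spirit and takes the same route, but you should replace the claimed equality with an inequality and add the one-line padding step to make it airtight.
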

\begin{proof}

Let $f,g$ be assignments attaining the value of $G$.
The first inequality holds because \[
\val(G) = \iprod{g,Gf} \le \norm{g}\cdot\norm{Gf} \le \norm{Gf} \le \norm{G}\mcom
\] where we used Cauchy--Schwarz followed by the bound $\norm{g}\le 1$ which holds for any assignment $g$.
For the second inequality, let $f$ be an assignment for $G$ such that $\norm{Gf}=\norm{G}$.
Then,
\begin{displaymath}
  \gnorm{G}=\snorm{Gf} = \iprod{Gf,Gf} \le \max_g \iprod{g,Gf} =\val(G)
\end{displaymath}
where the inequality used the fact that if $f$ is an assignment then $\sum_\alpha Gf(u,\alpha) \le 1$ for every $u$, so $Gf$ can be turned into a proper assignment $g\in L(U\times\Sigma)$ by possibly increasing some of its entries.
Thus, $\iprod{Gf,Gf}\le \iprod{g,Gf}$ because all entries of these vectors are non-negative.
\end{proof}

The following claim says that the collision value cannot increase if we play, in parallel with $G$, another game $H$.
\begin{claim}\label{claim:monotone}
Let $G,H$ be two projection games, then $\norm{G\ot H}\le \norm {G}$.
\end{claim}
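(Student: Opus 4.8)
The plan is to establish the exact identity
\[
  \snorm{(G\ot H)F}\;=\;\E_{u'}\sum_{\alpha'}\snorm{G\,h_{u',\alpha'}}
\]
for every assignment $F$ of $G\ot H$, where the $h_{u',\alpha'}\in L(V\times\Sigma)$ are certain nonnegative ``slices'' of $F$, and then to bound the right–hand side by $\snorm G$. Taking the supremum over $F$ then gives $\norm{G\ot H}\le\norm G$.

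For the identity I would use the factorization $G\ot H=(G\ot\Id)(\Id\ot H)$, with the two identity operators acting on $L(U'\times\Sigma')$ and on $L(V\times\Sigma)$ respectively. Writing $h:=(\Id\ot H)F$ and regarding it as a function on $V\times\Sigma$ valued in $L(\Omega)$ for $\Omega:=U'\times\Sigma'$, we get $(G\ot H)F=(G\ot\Id_\Omega)h$. By definition $\snorm{(G\ot\Id_\Omega)h}$ is the collision value of the vector assignment $h$, which by Fubini equals $\E_{u'}\sum_{\alpha'}\snorm{Gh_{u',\alpha'}}$, where the $(u',\alpha')$-slice is, unwinding the definition of $\Id\ot H$,
\[
  h_{u',\alpha'}(v,\beta)=\E_{(v',\pi')|u'}\sum_{\beta'\,:\,\alpha'\mapsfrom\beta'}F(v,v',\beta,\beta')\mper
\]
This step loses nothing: it is the ``defer Alice on the $H$-coordinate'' move, legitimate precisely because $H$ is a projection game, exactly in the spirit of the proof of \pref{claim:collision-val}.

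It remains to bound $\E_{u'}\sum_{\alpha'}\snorm{Gh_{u',\alpha'}}$. Fix $u'$. Each $h_{u',\alpha'}$ is nonnegative, and since $H$ is a projection game each $\beta'$ has at most one $\alpha'$ with $\alpha'\mapsfrom\beta'$, so for every $v$,
\[
  \sum_{\alpha'}\sum_{\beta}h_{u',\alpha'}(v,\beta)\;\le\;\E_{(v',\pi')|u'}\sum_{\beta,\beta'}F(v,v',\beta,\beta')\;=\;1\mcom
\]
using that $F$ is an assignment. Hence $h_{u'}:=\sum_{\alpha'}h_{u',\alpha'}$ is a nonnegative function with $\sum_\beta h_{u'}(v,\beta)\le 1$ for all $v$ --- a ``sub-assignment''. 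The functions $Gh_{u',\alpha'}$ are nonnegative (as $G$ has nonnegative entries), so expanding $\snorm{Gh_{u'}}=\snorm{\sum_{\alpha'}Gh_{u',\alpha'}}$ and discarding the nonnegative cross terms gives $\sum_{\alpha'}\snorm{Gh_{u',\alpha'}}\le\snorm{Gh_{u'}}$. Finally, any sub-assignment $h_{u'}$ is dominated pointwise by a genuine assignment $g$ (add the missing mass $1-\sum_\beta h_{u'}(v,\beta)$ on an arbitrary fixed label in the cloud of each $v$), and since $G$ has nonnegative entries $Gh_{u'}\le Gg$ pointwise, whence $\snorm{Gh_{u'}}\le\snorm{Gg}\le\snorm G$. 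Combining everything, $\snorm{(G\ot H)F}=\E_{u'}\sum_{\alpha'}\snorm{Gh_{u',\alpha'}}\le\E_{u'}\snorm{Gh_{u'}}\le\snorm G$.

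I expect the one place that needs care is the first step: verifying that regrouping the nested expectations and collapsing the $H$-collision into a sum over a single label $\alpha'$ (which is exactly where $H$ being a projection game is used) really produces the clean \emph{equality} $\E_{u'}\sum_{\alpha'}\snorm{G\cdot}$ rather than merely an inequality. Everything after that --- nonnegativity, the ``discard cross terms'' step, and dominating a sub-assignment by an assignment --- is routine and valid because every function in sight is nonnegative. I would also note that $G$ and $H$ enter $G\ot H$ symmetrically up to relabeling, so the same argument gives $\norm{G\ot H}\le\min\{\norm G,\norm H\}$.
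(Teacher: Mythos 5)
Your proof is correct and takes essentially the same approach as the paper: marginalize the optimal assignment for $G\ot H$ over the $H$-coordinates (averaging over $v'\mid u'$) and bound by a single-game assignment for $G$, using nonnegativity throughout. The paper justifies the key inequality $\E_{u'}\snorm{Gf_{u'}}\ge\snorm{(G\ot H)f}$ by an informal coupling argument, whereas you obtain it rigorously via the exact Fubini decomposition $\snorm{(G\ot H)F}=\E_{u'}\sum_{\alpha'}\snorm{Gh_{u',\alpha'}}$ followed by dropping nonnegative cross terms; one small simplification is that the paper's $f_{u'}(v,\beta)=\E_{v'\mid u'}\sum_{\beta'}F(v,v',\beta,\beta')$ (summing over \emph{all} $\beta'$, not just those with a preimage) is already a genuine assignment dominating your $h_{u'}$, which spares the final padding step.
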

\begin{proof}
This claim is very intuitive and immediate for the standard value of a game, as it is always easier to play one game rather than two, and it is similarly proven here for the collision value. Given an assignment for $G\ot H$ we show how to derive an assignment for $G$ that has a collision value that is at least as high. Let $G$ be a game on question sets $U$ and $V$, and let $H$ be a game on question sets $U'$ and $V'$. Let $f\in L(V\times \Sigma \times V'\times \Sigma')$ be such that $\norm{(G\ot H)f} = \norm{G\ot H}$. For each $v'$ we can define an assignment $f_{v'}$ for $G$ by fixing $v'$ and summing over $\beta'$, i.e. $f_{v'}(v,\beta) := \sum_{\beta'} f(v,v',\beta,\beta')$. For each $u' \in U'$ let $f_{u'}$ be an assignment for $G$ defined by $f_{u'} (v,\beta) := \E_{v'|u'} f_{v'}(v,\beta)$. In words, $f_{u'}$ is the assignment obtained by averaging over $f_{v'}$ for all neighbors $v'$ of $u'$. We claim that $\snorm{G}\ge \E_{u'}\snorm{Gf_{u'}}\ge \snorm{(G\ot H )f}$ where the second inequality comes from the following `coupling' argument:
Select a random vertex $uu'$ and then two possible neighbors $v_1v'_1$ and $v_2v'_2$, and then two answers $\beta_1\beta_1'$ and $\beta_2\beta_2'$ according to $f$. With these random choices a collision for $f$ is if $\beta_1\beta_1'$ is consistent with $\beta_2\beta_2'$. With the same random choices a collision for $f_{u'}$ is when $\beta_1$ is consistent with $\beta_2$, an easier requirement.
\end{proof}

\inote{}

\paragraph{Symmetrizing the Game}
An additional way to view the collision value is as the value of a constraint satisfaction problem (CSP) that is obtained by symmetrizing the game.
\begin{definition}[Symmetrized Constraint Graph]\label{def:G2} For a projection game $G$ given by distribution $\mu$, we define its {\em symmetrized constraint graph} to be the weighted graph $G_{sym}$ on vertex set $V$, described as a distribution $\mu_{sym}$ given by
 \begin{itemize}
 \item Select $u$ at random (according to $\mu$), and then select $(v,\pi)|u$ and independently $(v',\pi')|u$.
 \item Output $(v,v',\tau)$ where $\tau\subseteq\Sigma\times\Sigma$ consists of all pairs $(\beta,\beta')$ such that there exists some $\alpha$ such that $\alpha\stackrel \pi \mapsfrom \beta$ and $\alpha \stackrel {\pi'} \mapsfrom \beta'$.
 \end{itemize}
\end{definition}
It is standard to define the {\em value} of a deterministic assignment $a:V\to\Sigma$ in a constraint graph by \[ val(G_{sym};a) = \Pr_{(v,v',\tau)\sim \mu_{sym}} [(a(v),a(v'))\in \tau]\mper \]
We extend this to randomized assignments $a$, i.e. when $a(v)$ is a random variable taking values in $\Sigma$. In this case the value of $a$ in $G_{sym}$ is defined as the expectation over the values of $a$. If the randomized assignment is described by a vector $f\in \Rnn^{V\times\Sigma}$ so that $a(v)=\beta$ with probability $f(v,\beta)$, one can check that the value of this randomized assignment is equal to
\begin{equation}\label{eq:sym}
\E_{(v,v',\tau)\sim \mu_{sym}}\sum_{(\beta,\beta')\in\tau} f(v,\beta)f(v',\beta') = \snorm{Gf} \mcom
\end{equation}
which is the square of the collision value.
Thus, the value of the CSP described by $G_{sym}$ (which is, as usual, the maximum value over all possible assignments) is equal to $\snorm{G}$.

We note that the symmetrized game $(G\ot H)_{sym}$ can be obtained as the natural direct product of $G_{sym}$ and $H_{sym}$, giving an alternative (intuitive) proof for \pref{claim:monotone}.

\subsection{Expanding Games}\label{sec:expander}
A game $G$ will be called {\em expanding} if the constraint graph of the symmetrized game $G_{sym}$ is an expander.

Formally let $A$ be defined by the matrix describing the random walk in the graph underlying $G_{sym}$, so $A_{v_1,v_2} = \mu_{sym}(v_2|v_1)$. In words, this is the probability of the next step in the random walk landing in $v_2$ conditioned on being in $v_1$. Observe that for any two vectors $x,y\in L(V)$, $\iprod{x,Ay}=\iprod{Ax,y}$ where the inner product is taken as usual with respect to the measure of $V$, which is also the stationary measure of $A$. This implies that $A$ is diagonalizable with real eigenvalues $1=\lambda_1 \ge \lambda_2 \ge \cdots \lambda_n > -1$. Since $A$ is stochastic the top eigenvalue is $1$ and the corresponding eigenvector is the all $1$ vector.
We define the spectral gap of the graph to be $1- \max(\abs{\lambda_2},\abs{\lambda_n})$.

A game $G$ is said to be {\em $c$-expanding} if the spectral gap of the Markov chain $A$ corresponding to $G_{sym}$ is at least $c$.

\subsection{A family of trivial games}\label{sec:T}

We will consider parameters of games (meant to approximate the game value) that compare the behavior of a game to the behavior of certain trivial games. As games they are not very interesting, but their importance will be for normalization.

Let $T$ be the following projection game with the same vertex set $V$ on both sides  and alphabet $\Sigma$:
The referee chooses a vertex $v$ from $V$ at random (according to the measure on $V$).
The referee sends $v$ to both Alice and Bob.
The players succeed if Alice answers with $1\in \Sigma$, regardless of Bob's answer.
The operator of this game acts on $L(V\times \Sigma)$ as follows,
\begin{displaymath}
  Tf(v,\alpha) =
  \begin{cases}
    \sum_\beta f(v,\beta)& \text{ if $\alpha=1$},\\
    0& \text{ otherwise.}
  \end{cases}
\end{displaymath}
We consider a related trivial game $T_v$ with vertex sets $\{v\}$ and $V$ and alphabet $\Sigma$.
The operator maps $L(V\times \Sigma)$ to $L(\{v\}\times \Sigma)$ as follows: $T_v f(v,\alpha)=T f(v,\alpha)$ for $\alpha\in\Sigma$. The operator $T_v$ ``removes'' the part of $f$ that assigns values to questions other than $v$.

While both $T$ and $T_v$ have the same value $1$ for every assignment, they behave differently when considering a product game $G\ot H$ and an assignment $f$ for it.
In particular, the norm of $(T\ot H) f$ may differ significantly from the norms of $(T_v\otimes H)f$, and this difference will be important.

\section{The Basic Approach}\label{sec:basic-approach}

We will prove parallel repetition bounds for the collision value of projection games.
Since the collision value is quadratically related to the usual value, due to \pref{claim:collision-val}, these bounds imply the same parallel repetition bounds for the usual value (up to a factor of $2$ in the number of repetitions).
We state again our main theorem,\\

\noindent{\bf Theorem~\ref{thm:collision-val-product}.~}{\em
Any two projections games $G$ and $H$ satisfy
\begin{math}
  \snorm{G\otimes H} \le \varphi(\snorm{G}) \cdot \snorm H\mcom
\end{math}
where $\varphi(x) = \frac{2\sqrt x}{1+x}$.
}\\

Let us first show how to use \pref{thm:collision-val-product} to quickly prove \pref{thm:parrep} and \pref{cor:parrep-small-soundness} and \pref{cor:rao} together.
By repeated application of \pref{thm:collision-val-product},
\[
\val(G^\ok)^2 \le \snorm{G^\ok} \le \varphi(\snorm G)\cdot \snorm{G^{\otimes k-1}} \le \ldots \le \varphi(\snorm G)^{k-1}\cdot\snorm G  \le \varphi(\snorm G)^k\mcom
\]
where the first inequality is due to \pref{claim:collision-val}, and the last uses that $x\le \varphi(x)$ for $0\le x\le 1$. This gives  \pref{thm:parrep}.
On the interval $[0,1]$ the function $\varphi$ satisfies $\varphi(x)\le 2\sqrt x$ and  $\varphi(1-\e)\le 1-\e^2/8$.
The first bound implies \pref{cor:parrep-small-soundness} and the second bound implies  \pref{cor:rao}.

\subsection{Multiplicative game parameters}\label{subsec:rval}
\inote{}
\Dnote{}
Let us now motivate and develop the definition of $\rval$. Let $G$ be a projection game with vertex sets $U$ and $V$ and alphabet $\Sigma$.
We are looking for a parameter $\rho_G$ of $G$ (namely, a function that assigns a non-negative value to a game) that approximates the value of the game and such that for every projection game $H$,
\begin{equation}\label{eq:mult}
  \norm{G\ot H} \le \rho_G \cdot\norm{H} \mper
\end{equation}
Dividing both sides by $\norm H$, we want a parameter of $G$ that upper bounds the ratio $\ffrac{\norm{G\ot H}}{\norm H}$ for all projection games $H$. The smallest such parameter is simply
\begin{equation}\label{eq:env}
\rho_G =  \sup_H \frac{\norm{G\ot H} }{\norm{H}} \mper
\end{equation}
An intuitive interpretation of this value is that it is a kind of ``parallel value" of $G$, in that it measures the relative decrease in value caused by playing $G$ in parallel with any game $H$, compared to playing only $H$.
Clearly $\norm G \le \rho_G$, but the question is whether $\rho_G\approx \norm G$. We will show that this is the case through another game parameter, $\rval(G)$, such that $\rval(G)\ge \rho_G\ge \norm G$ and such that $\rval(G)\approx \norm G$.
First we introduce the game parameter $\fakerval(G)$ which is not a good enough approximation for $\rho_G$ (although it suffices when $G$ is expanding), and then we refine it to obtain the final game parameter, $\rval(G)$.

To lead up to the definition of $\fakerval(G)$ and then $\rval(G)$, let us assume that we know that $\rho_G> \rho$ for some fixed $\rho>0$. This means that there is some specific projection game $H$ for which $\frac{\norm{G\ot H} }{\norm{H}} > \rho$.
Let $H$ be a projection game with vertex sets $U'$ and $V'$ and alphabet $\Sigma'$,
and let $f$ be an optimal assignment for $G\ot H$, so that $\norm{(G \otimes H)f}=\norm{G\otimes H}$.
We can view $f$ also as an assignment for the game $T\otimes H$, where $T$ is the trivial game from \pref{sec:T}.

The assignment $f$ satisfies $\norm{(T\otimes H)f}\le \norm{T\otimes H}\le \norm{H}$ (the second inequality is by \pref{claim:monotone}).
So
\begin{displaymath}
\rho < \frac {\norm{G\ot H}} {\norm{H}}\le \frac{\norm{(G\ot H) f}}{\norm{(T\ot H) f}}\mper
\end{displaymath}
We would like to ``cancel out'' $H$, so as to be left with a quantity that depends only on $G$ and not on $H$.
To this end, we consider the factorizations $G\otimes H = (G\otimes \Id)(\Id \otimes H)$ and $T\otimes H = (T\otimes \Id)(\Id \otimes H)$, where $\Id$ is the identity operator on the appropriate space.
Intuitively, this factorization corresponds to a two step process of first applying the operator $\Id \ot H$ on $f$ to get $h$ and then applying either $G \ot \Id$ or $T\ot \Id$.
\begin{center}
\begin{tikzpicture}
  \matrix (m) [matrix of math nodes,row sep=0em,column sep=7em,minimum width=2em]
  {
    (G\otimes \Id) f  \\
     & h & f \\
    (T\otimes \Id) f  \\
  };
  \path[-stealth]
    (m-2-2) edge node [above] {\small $ G\otimes \Id$} (m-1-1)
            edge node [below] {\small  $T\otimes \Id$} (m-3-1)
    (m-2-3) edge node [above] {\small $\Id \otimes H$} (m-2-2)
    ;
\end{tikzpicture}
\end{center}
If we let $h=(\Id \otimes H) f$, then
\begin{equation}
  \label{eq:mult1}
\rho<  \frac{\norm{G\otimes H}}{\norm H} \le \frac{\norm{(G\otimes H) f}}{\norm{(T\otimes H) f}} = \frac{\norm{(G\otimes \Id)h}}{\norm{(T\otimes \Id) h}}\mper
\end{equation}

By maximizing the right-most quantity in \pref{eq:mult1} over all nonnegative functions $h$, we get a value that does not depend on the game $H$ so it can serve as a game parameter that is possibly easier to relate to the value of $G$ algorithmically. Observe that there is a slight implicit dependence on (the dimensions of) $H$ since the $\Id$ operator is defined on the same space as that of $H$. It turns out though that the extra dimensions here are unnecessary and the maximum is attained already for one dimensional functions $h$, so the identity operator~$\Id$ can be removed altogether. This leads to the following simplified definition of a game parameter
\begin{definition}\label{def:scalar-rval}
For any projection game $G$ let
\[\fakerval(G) =  \max_{h\ge 0} \frac{\norm{Gh}}{\norm{Th}}\mper \]
\end{definition}
\begin{theorem}
\label{thm:simple-bound}
  Any two projection games $G$ and $H$ satisfy
$\norm{G\otimes H} \le \fakerval(G) \cdot  \norm{H}$. Therefore, $\fakerval(G)\ge \rho_G$.
\end{theorem}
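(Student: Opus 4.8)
The plan is to prove the inequality $\norm{G\otimes H} \le \fakerval(G)\cdot\norm{H}$ by exhibiting, for an arbitrary optimal assignment $f$ for $G\otimes H$, a suitable nonnegative function on which $G$ attains ratio at least $\norm{G\otimes H}/\norm{H}$ against $T$. The calculation in the excerpt leading to \pref{eq:mult1} already does most of the work: starting from $f$ with $\norm{(G\otimes H)f} = \norm{G\otimes H}$, and setting $h = (\Id\otimes H)f$, we get from \pref{claim:monotone} that $\norm{(T\otimes H)f}\le\norm{H}$, hence
\[
\frac{\norm{G\otimes H}}{\norm{H}} \;\le\; \frac{\norm{(G\otimes\Id)h}}{\norm{(T\otimes\Id)h}}\mper
\]
So it suffices to show that the right-hand side is at most $\fakerval(G)$, i.e.\ that introducing the identity factor $\Id$ (acting on $L(U'\times\Sigma')$, the ``$H$-side'' space) does not increase the ratio beyond what one-dimensional $h$ can achieve. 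This is the ``removing the $\Id$'' step flagged in the text, and I expect it to be the only real content.

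The key step, then, is the following claim: for any finite measure space $\Omega$ and any nonnegative $h\in L(V\times\Sigma)\otimes L(\Omega)$,
\[
\frac{\norm{(G\otimes\Id_\Omega)h}}{\norm{(T\otimes\Id_\Omega)h}} \;\le\; \fakerval(G)\mper
\]
I would prove this by a direct averaging/convexity argument over the components $h_\omega\in L(V\times\Sigma)$ of $h$ (here $h_\omega$ denotes the slice of $h$ at $\omega\in\Omega$, nonnegative for each $\omega$). Since tensoring with $\Id_\Omega$ acts componentwise, $\norm{(G\otimes\Id_\Omega)h}^2 = \int_\Omega \norm{Gh_\omega}^2\dmu(\omega)$ and likewise $\norm{(T\otimes\Id_\Omega)h}^2 = \int_\Omega \norm{Th_\omega}^2\dmu(\omega)$. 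By Definition~\ref{def:scalar-rval} we have $\norm{Gh_\omega}^2 \le \fakerval(G)^2\cdot\norm{Th_\omega}^2$ pointwise in $\omega$ (valid even when $\norm{Th_\omega}=0$, since then $h_\omega$ is supported where $T$ kills it; one should check that $\norm{Th_\omega}=0 \Rightarrow \norm{Gh_\omega}=0$ for the relevant $h$, or simply restrict $\Omega$ to where the denominator is positive). Integrating over $\Omega$ gives $\norm{(G\otimes\Id_\Omega)h}^2 \le \fakerval(G)^2\cdot\norm{(T\otimes\Id_\Omega)h}^2$, which is the claim. Combining with the displayed inequality above yields $\norm{G\otimes H} \le \fakerval(G)\cdot\norm{H}$, and since $H$ was arbitrary, $\rho_G \le \fakerval(G)$ by the definition \pref{eq:env}.

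The main obstacle, such as it is, is the degenerate case $\norm{Th_\omega}=0$: here $Th_\omega(v,1) = \sum_\beta h_\omega(v,\beta)$, so $\norm{Th_\omega}=0$ forces $h_\omega \equiv 0$ on its support by nonnegativity, hence $Gh_\omega\equiv 0$ as well, so the pointwise bound holds trivially and no division-by-zero issue arises. One should also note that the definition of $\fakerval(G)$ as a maximum over nonnegative $h$ is attained (the constraint set can be normalized to the compact simplex $\{h\ge0 : \norm{Th}=1\}$, on which $\norm{Gh}$ is continuous — or one takes a supremum throughout, which changes nothing). With this, the argument shows that $\fakerval$ is an upper bound on the ``environmental value'' $\rho_G$, as claimed.
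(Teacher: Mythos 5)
Your proof is correct and follows essentially the same route as the paper: both start from the factorization $G\otimes H = (G\otimes\Id)(\Id\otimes H)$ leading to \pref{eq:mult1}, decompose $h=(\Id\otimes H)f$ into slices over the $H$-side coordinates, and reduce the $\Omega$-averaged ratio to the one-dimensional case captured by $\fakerval(G)$. The paper finishes by extracting a single best column $j^*$ via a ratio-of-averages-versus-max-of-ratios averaging argument, whereas you integrate the pointwise bound $\snorm{Gh_\omega}\le\fakerval(G)^2\snorm{Th_\omega}$; these are two phrasings of the same step, and your explicit handling of the degenerate $\norm{Th_\omega}=0$ case is a correct (if routine) addition the paper leaves implicit.
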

Before proving this, we mention again that in general $\fakerval(G)\not\approx \norm G$ which is why we later make a refined definition $\rval(G)$.
\begin{proof}
  By \pref{eq:mult1}, there exists a non-negative function $h\in L(V\times\Sigma\times V'\times\Sigma')$ such that $\norm{G\otimes H}/\norm{H} \le \norm{(G\otimes \Id)h}/\norm{(T\otimes \Id)h}$.
  We can view $h$ as a matrix with each column belonging to $L(V\times\Sigma)$ - the input space for $G$ and $T$.
  Then, $(G\otimes \Id) h$ is the matrix obtained by applying $G$ to the columns of the matrix $h$, and $(T\otimes \Id)h$ is the matrix obtained by applying $T$ to the columns of $h$.
Next, we expand the squared norms of these matrices column-by-column,
\[
 \frac{\snorm{(G\ot \Id)h} }{\snorm{(T\ot \Id)h}} = \frac{\E_j \snorm{Gh_{j}} }{\E_j \snorm{Th_{j}} }
\]
where $j$ runs over the columns (this happens to be $j\in U'\times\Sigma'$ but it is not important at this stage).
An averaging argument implies that there is some $j^*$ for which $\frac{\snorm{Gh_{j^*}} }{\snorm{Th_{j^*}} } $ is at least as large as the ratio of the averages. Removing the squares from both sides, we get
\[ \frac{\norm{G\otimes H}}{\norm H} \le \frac{\norm{Gh_{j^*}}}{\norm{Th_{j^*}}} \le \fakerval(G)\qedhere \]
\end{proof}

 \Dnote{}

In the next subsection we will show that for expanding games $G$, $\fakerval(G)\approx \norm G$, thus proving \pref{thm:collision-val-product} for the special case of expanding games.
For non-expanding games, $\fakerval(G)$ is not a good approximation to $\norm G$ and a more refined argument is called for as follows.
Instead of comparing the value of $G\ot H$ to $T\ot H$, we compare it to the collection $T_v\ot H$ for all $v\in V$ (defined in \pref{sec:T}).
We observe that the inequality \pref{eq:mult1} also holds with $T$ replaced by $T_v$, so that for all $v\in V$
\begin{equation}
  \label{eq:mult-v}
\rho<  \frac{\norm{G\otimes H}}{\norm H} \le \frac{\norm{(G\otimes H) f}}{\norm{(T_v\otimes H) f}} = \frac{\norm{(G\otimes \Id)h}}{\norm{(T_v\otimes \Id) h}}\mper
\end{equation}
Now, by maximizing the right hand side over all measure spaces $L(\Omega)$ for $\Id$ and over all non-negative functions $h\in L(V\times\Sigma\times\Omega)$, we finally arrive at the game parameter
\begin{equation}
\label{eq:def-valplus}
  \rval(G) \defeq \sup_{\Omega}\max_{h\ge 0} \frac{\norm{(G\otimes \Id)h}}{\max_v\norm{(T_v\otimes \Id) h}}\mcom
\end{equation}
where the $\Id$ operator is defined on the measure space $L(\Omega)$ and we are taking the supremum over all finite dimensional measure spaces $\Omega$.
It turns out that the supremum is attained for finite spaces---polynomial in the size of $G$.

\begin{theorem}
\label{thm:valplus-bound}
  Any two projection games $G$ and $H$ satisfy
  \begin{math}
    \norm{G\otimes H} \le \rval(G)\cdot  \norm{H}\mper
  \end{math}
\end{theorem}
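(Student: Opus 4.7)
The plan is to follow the template of the proof of \pref{thm:simple-bound} but with two adjustments: use the refined trivial games $T_v$ in place of $T$, and take the ambient identity in the definition \pref{eq:def-valplus} to act on a suitably chosen (finite) measure space $\Omega$. Specifically, I will take $\Omega = U'\times \Sigma'$ equipped with the measure it inherits from $H$'s input space, and pick a single test function $h$ that realizes the right-hand side of \pref{eq:def-valplus} with equality in the numerator. Because the supremum in the definition of $\rval$ is taken over arbitrary (finite) measure spaces, there is no loss in matching $\Omega$ to $H$'s geometry, and this is what makes a one-shot choice of $h$ suffice (in contrast to the column-averaging step in \pref{thm:simple-bound}).

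Concretely, let $f\in L(V\times V'\times\Sigma\times\Sigma')$ be an assignment for $G\otimes H$ attaining $\norm{(G\otimes H)f} = \norm{G\otimes H}$, and set $h := (\Id \otimes H) f \in L(V\times\Sigma\times\Omega)$, which is nonnegative because $f$ and the entries of $H$ are. Using the factorizations $G\otimes H = (G\otimes \Id)(\Id\otimes H)$ and $T_v\otimes H = (T_v\otimes \Id)(\Id\otimes H)$, we obtain $(G\otimes \Id)h = (G\otimes H)f$ and $(T_v\otimes \Id)h = (T_v\otimes H)f$ for every $v\in V$. Plugging this particular $h$ into \pref{eq:def-valplus} yields
\begin{displaymath}
 \rval(G) \ge \frac{\norm{(G\otimes \Id)h}}{\max_v \norm{(T_v\otimes \Id)h}} = \frac{\norm{G\otimes H}}{\max_v \norm{(T_v\otimes H)f}}\mcom
\end{displaymath}
so the theorem reduces to the ``denominator bound'' $\max_v \norm{(T_v\otimes H)f}\le \norm{H}$.

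The denominator bound is the main technical step, and it follows from the observation that slicing $f$ at a fixed $v\in V$ and marginalizing out $\beta$ yields a bona fide assignment for $H$. Explicitly, define $\tilde f_v\from V'\times\Sigma'\to \R$ by $\tilde f_v(v',\beta') := \sum_\beta f(v,v',\beta,\beta')$. Because $f$ is an assignment for $G\otimes H$, one checks that $\tilde f_v\ge 0$ and $\sum_{\beta'}\tilde f_v(v',\beta') = \sum_{\beta,\beta'}f(v,v',\beta,\beta') = 1$ for every $v'$, so $\tilde f_v$ is an assignment for $H$. Unpacking how $T_v$ acts---it restricts the $V$-side to the single vertex $v$, collapses the $\Sigma$-coordinate to the single symbol $1$, and sums over $\beta$---one sees that $(T_v\otimes H)f$ agrees with $H\tilde f_v$ on the slice $\alpha=1$ and vanishes elsewhere. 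Hence $\norm{(T_v\otimes H)f} = \norm{H\tilde f_v} \le \norm{H}$, as required, and the argument assembles into $\norm{G\otimes H}\le \rval(G)\cdot\norm{H}$.

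The only place where real content enters beyond bookkeeping is this denominator bound: the use of $T_v$ rather than $T$ in the definition of $\rval$ is precisely what makes the bound hold vertex-by-vertex (it would be false for the averaged version with $T$ unless $f$ is balanced across $V$), and the freedom to let $\Id$ live on the product measure space $L(U'\times\Sigma')$ is what allows the clean choice $h = (\Id\otimes H)f$ without any further rounding or projection.
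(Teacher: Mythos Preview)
Your proof is correct and follows essentially the same approach as the paper's: take an optimal assignment $f$ for $G\otimes H$, set $h=(\Id\otimes H)f$, and feed this $h$ into the definition of $\rval(G)$ over the space $\Omega=U'\times\Sigma'$, reducing the theorem to the denominator bound $\norm{(T_v\otimes H)f}\le\norm{H}$. The only cosmetic difference is in justifying that bound: the paper writes the two-step inequality $\norm{(T_v\otimes H)f}\le\norm{T_v\otimes H}\le\norm{H}$ and invokes the monotonicity claim (\pref{claim:monotone}), whereas you unpack $T_v$ directly and observe that $\tilde f_v$ is an assignment for $H$---which is exactly the content of that monotonicity argument specialized to $T_v$.
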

\begin{proof}
We essentially repeat the proof for $\fakerval(\cdot)$. Let $H$ be any projection game, and let $f$ be an optimal assignment for $G\ot H$. For every question $v$ we compare $\norm{(G\ot H)f}$ to $\norm{(T_v\ot H)f}$ for $T_v$ the trivial operator from \pref{sec:T}. Since $\norm{(T_v\ot H)f}\le \norm{T_v\ot H}\le \norm H$ we get,
\begin{multline*}
  \frac{\norm{G\ot H}}{\norm H} \le \frac{\norm{(G\ot H)f}}{ \max_v
    \norm{(T_v\ot H)f}} \le \max_{f'\ge 0} \frac{\norm{(G\ot H)f'}}{ \max_v
    \norm{(T_v\ot H)f'}}\\ \le \max_{h\ge 0} \frac{\norm{(G\ot \Id)h}}{
    \max_v\norm{ (T_v\ot \Id)h}} = \rval(G)\mper
\end{multline*}
\end{proof}
The advantage of $\rval$ over $\fakerval$ will be seen in the next sections, when we show that this value is a good approximation of value of the game $G$.
\subsection{Approximation bound for expanding projection games}\label{sec:expanding-games}

In this section we show that if $G$ is an expanding game then $\fakerval(G)\approx \norm G$. (Recall from \pref{sec:expander} that a game is called $\gamma$-expanding if the graph underlying $G_{sym}$ is has spectral gap at least $\gamma$.) By definition, $\norm G\le \fakerval(G) $. The interesting direction is

\Icomment{This is a bipartite graph, let $G^TG$ be its symmetrization with respect to $V$, i.e. the graph ...}
\begin{theorem}\label{thm:approx-expander}
Let $G$ be a $\gamma$-expanding projection game for some $\gamma>0$.
Suppose $\fakerval(G) > 1-\eps$. Then, $\norm G > 1-O(\eps/\gamma)$.
\end{theorem}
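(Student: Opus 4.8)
The plan is to take a near‑optimal witness $h$ for $\fakerval(G)$, turn it into an honest randomized assignment, and then use expansion of $G_{sym}$ to argue that this assignment has collision value within $O(\eps/\gamma)$ of $1$. Concretely, fix $h\ge 0$ with $\norm{Gh}>(1-\eps)\norm{Th}$ and rescale so that $\snorm{Th}=1$. Writing $\bar h(v):=\sum_\beta h(v,\beta)$, viewed as an element of $L(V)$, this says $\norm{\bar h}^2=\E_v\bar h(v)^2=1$ and $\snorm{Gh}>(1-\eps)^2\ge 1-2\eps$. Let $f$ be the randomized assignment with $f(v,\beta)=h(v,\beta)/\bar h(v)$ (and $f(v,\cdot)$ arbitrary on the set where $\bar h(v)=0$, which carries no $h$‑mass and affects nothing below), and for an edge $(v,v',\tau)$ of $G_{sym}$ let $c(v,v')=\sum_{(\beta,\beta')\in\tau}f(v,\beta)f(v',\beta')\in[0,1]$ be its local collision probability. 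By \pref{eq:sym},
\[ \snorm{Gh}=\E_{(v,v',\tau)\sim\mu_{sym}}\bar h(v)\bar h(v')\,c(v,v')\mcom\qquad \snorm{Gf}=\E_{\mu_{sym}}c(v,v')\mper \]
With $A$ the random‑walk matrix of $G_{sym}$ from \pref{sec:expander}, one has $\E_{\mu_{sym}}\bar h(v)\bar h(v')=\iprod{\bar h,A\bar h}$, and since $\bar h\ge 0$ and $c\le 1$ also $\snorm{Gh}\le\iprod{\bar h,A\bar h}$.

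\emph{Step 1: $\bar h$ is nearly uniform.} Decompose $\bar h=m\mathbf 1+r$ with $m=\E_v\bar h(v)$ and $r\perp\mathbf 1$, so $m^2+\norm r^2=1$. Since $A$ is self‑adjoint with top eigenvector $\mathbf 1$ and all other eigenvalues of absolute value at most $1-\gamma$, we get $\iprod{\bar h,A\bar h}=m^2+\iprod{r,Ar}\le m^2+(1-\gamma)\norm r^2=1-\gamma\norm r^2$. Combining with $\snorm{Gh}\le\iprod{\bar h,A\bar h}$ and $\snorm{Gh}>1-2\eps$ yields $\norm r^2<2\eps/\gamma$. (If $\eps/\gamma$ exceeds a fixed absolute constant the conclusion is vacuous since $\norm G\ge 0$, so assume $\norm r^2<2\eps/\gamma\le\tfrac12$, hence $m^2\ge\tfrac12$.) This is the expander mixing lemma doing its work.

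\emph{Step 2: the collision defect is small.} Put $\phi:=1-c\in[0,1]$ and $P:=\E_{\mu_{sym}}\phi=1-\snorm{Gf}$; the goal is $P=O(\eps/\gamma)$. Set $D:=\iprod{\bar h,A\bar h}-\snorm{Gh}=\E_{\mu_{sym}}\bar h(v)\bar h(v')\phi(v,v')$ and note $D\le 1-(1-\eps)^2\le 2\eps$. Expanding $\bar h(v)\bar h(v')=m^2+m(r(v)+r(v'))+r(v)r(v')$ inside $D$ gives
\[ m^2P=D-m\,\E_{\mu_{sym}}(r(v)+r(v'))\phi-\E_{\mu_{sym}}r(v)r(v')\phi\mper \]
Using $\phi^2\le\phi$, the symmetry of $\mu_{sym}$, and Cauchy–Schwarz (with $\E_{\mu_{sym}}r(v)^2=\norm r^2$, since the first marginal of $\mu_{sym}$ is the measure on $V$), the two correction terms have absolute value at most $2\norm r\sqrt P$ and $\norm r^2$. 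Hence $m^2P\le 2\eps+2\norm r\sqrt P+\norm r^2$; plugging in $\norm r^2<2\eps/\gamma$ and $m^2\ge\tfrac12$ and solving the resulting quadratic inequality in $\sqrt P$ gives $P=O(\eps/\gamma)$. Therefore $\norm{Gf}=\sqrt{1-P}\ge 1-O(\eps/\gamma)$, and since $f$ is an assignment, $\norm G\ge\norm{Gf}>1-O(\eps/\gamma)$, as claimed.

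The main obstacle is Step 1 — converting the analytic hypothesis "$h$ almost maximizes $\norm{Gh}/\norm{Th}$" into the structural statement that the mass profile $\bar h$ is $\ell_2$‑close to the all‑ones vector. This is exactly where expansion enters; without it one only controls the local collision probabilities $c(v,v')$ on edges reweighted by $\bar h(v)\bar h(v')$ rather than under the uniform edge distribution of $\mu_{sym}$ that defines $\snorm{Gf}$, which is precisely why $\fakerval$ fails to approximate $\norm G$ for non‑expanding games and forces the refined parameter $\rval$. One technical point worth flagging in Step 2: the correction terms must be estimated against $\phi=1-c$ (which is small) and not against $c$, using $\phi^2\le\phi$ to gain the extra $\sqrt P$ factor; estimating more crudely would only yield $\norm G>1-O(\sqrt{\eps/\gamma})$. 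The remaining ingredients — passing to the randomized assignment, the bookkeeping with $\mu_{sym}$, and the Cauchy–Schwarz bounds — are routine.
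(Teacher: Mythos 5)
Your proof is correct, and it takes a genuinely different (and somewhat cleaner) route through the two places where the paper has to work. First, the paper's proof begins by derandomizing the witness to a deterministic fractional assignment via a convexity argument, so that each vertex carries at most one nonzero label, and only then reads off the assignment $b(v)=\beta_v$. You avoid this entirely by normalizing $h$ into the randomized assignment $f(v,\beta)=h(v,\beta)/\bar h(v)$ and working with the local collision probabilities $c$ of a randomized strategy; this makes the reduction to an honest assignment a one-liner and sidesteps the convexity step. Second, both proofs use expansion in the same way to show the mass profile ($\bar h$ for you, $g(v)=f(v,\beta_v)$ for the paper) is $\ell_2$-close to constant with $\norm{r}^2 = O(\eps/\gamma)$, via $\iprod{\bar h,A\bar h}\le 1-\gamma\norm r^2$ (equivalently, the Laplacian lower bound in the paper). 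But the final error-control step differs: the paper establishes the ad hoc pointwise inequality $1\le 9ab+9(a-\bar f)^2+9(b-\bar f)^2$ valid when $\bar f\ge 2/3$ and integrates it against $(1-Q_{v,v'})$, whereas you expand $\bar h(v)\bar h(v')=m^2+m(r(v)+r(v'))+r(v)r(v')$ inside the defect $D$ and bound the cross terms by Cauchy--Schwarz, crucially exploiting $\phi^2\le\phi$ (for $\phi=1-c\in[0,1]$) to win the extra $\sqrt P$ factor and then solving the resulting quadratic inequality in $\sqrt P$. You correctly flag that the coarser estimate would only give $1-O(\sqrt{\eps/\gamma})$, which is exactly the weaker bound the paper also notes and then improves. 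Both approaches yield $\snorm{G}\ge 1-O(\eps/\gamma)$ with comparable constants; your algebraic expansion is arguably more systematic, while the paper's pointwise inequality is slightly more self-contained. One small notational caveat: your $c(v,v')$ really depends on the edge $(v,v',\tau)$, and the Cauchy--Schwarz bounds correctly use that the first (and second) marginal of $\mu_{sym}$ on $V$ is the measure defining $\norm{\cdot}$ on $L(V)$, which you noted; this is the only hypothesis on $\mu_{sym}$ beyond self-adjointness of $A$ that the argument needs.
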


\begin{proof}
We may assume $\e/\gamma$ is sufficiently small, say $\e/\gamma\le 1/6$, for otherwise the theorem statement is trivially true.
Let $f \in L(V\times\Sigma)$ be nonnegative, such that
\[ \frac{\norm{Gf}}{\norm {Tf}} >1-\eps \mcom
\]
witnessing the fact that $\fakerval(G)>1-\eps$.
First, we claim that without loss of generality we may assume that $f$ is \emph{deterministic}, i.e., for every vertex $v$, there is at most one label $\beta$ such that $f(v,\beta)>0$.
(This fact is related to the fact that randomized strategies can be converted to deterministic ones without decreasing the value.)
We can write a general nonnegative function $f$ as a convex combination of deterministic functions $f'$ using the following sampling procedure:
For every vertex $v$ independently, choose a label $\beta_v$ from the distribution that gives probability $\frac {f(v,\beta')}{\sum_\beta f(v,\beta)}$ to label $\beta'\in\Sigma$. (If the denominator is zero then choose nothing).
Set $f'(v,\beta_v) = \sum_{\beta} f(v,\beta)$ and $f'(v,\beta')=0$ for $\beta'\neq \beta_v$.
The functions $f'$ are deterministic and satisfy $\E f' = f$.
By convexity of the function $f \mapsto \norm{Gf}$ (being a linear map composed with a convex norm), we have $\E \norm{Gf'} \ge \norm{G \E f'} = \norm {Gf}$.
So, there must be some $f'$ for which $\norm{Gf'}\ge \norm{Gf}$.
By construction, $Tf(v) = \sum_\beta f(v,\beta) = Tf'(v)$, so $\frac {\norm{Gf'}}{\norm{Tf'} }\ge \frac {\norm{Gf}}{\norm{Tf} }$.
(We remark that this derandomization step would fail without the premise that $f$ is nonnegative.)

Thus we can assume that for every vertex $v$, there is some label $\beta_v$ such that $f(v,\beta')=0$ for all $\beta'\neq \beta_v$.
We may also assume that $\lVert  T f \rVert=1$ (because we can scale $f$).
Since $f$ is deterministic, we can simplify the quadratic form $\lVert  G f \rVert^2$,
\begin{equation}
  \label{eq:1-have}
  (1-\e)^2 \le \lVert  G f \rVert^2 =
  \E_{(v,v')} f(v,\beta_v) f(v',\beta_{v'}) \cdot Q_{v,v'},
\end{equation}
where the pair $(v,v')$ is distributed according to the edge distribution of the symmetrized game $G_{\mathrm{sym}}$
and $Q_{v,v'}=\Pr_{\tau \mid (v,v')}\set{(\beta_v,\beta_{v'})\in \tau}$.
Since we can view $b(v)\seteq \beta_v$ as an assignment for $G_{\mathrm{sym}}$, we can lower bound the value of the symmetrized game in terms of $Q_{v,v'}$,
\begin{equation}
  \label{eq:2-want}
  \lVert  G \rVert^2=\val(G_{sym})\ge \val(G_{sym},b) = \E_{(v,v')} Q_{v,v'}\mper
\end{equation}
To prove the theorem, we will argue that the right hand sides of \pref{eq:1-have} and \pref{eq:2-want} are close.
The key step toward this goal is to show that $f(v,\beta_v)\approx 1$ for a typical vertex $v$. A priori $f(v,\beta_v)$ can be very high for some vertices, and very small for others, while maintaining $\E_v f(v,\beta_v)^2=1$. However, the expansion of $G$ will rule this case out.

Denote $g(v) = f(v,\beta_v)$ and let $A$ be the matrix corresponding to the Markov chain on the graph underlying $G_{sym}$ and let $L = Id - A$. The smallest eigenvalue of $L$ is 0 corresponding to the constant functions, and the second smallest eigenvalue is at least $\gamma>0$ because of the expansion of $G$. Now, using inner products with respect to the natural measure on $V$,
\[
\iprod{g,Lg} = \E_v f(v,\beta_v)^2 - \E_{(v,v')}f(v,\beta_v)f(v',\beta_{v'}) \le 1 - (1-\eps)^2 \le 2\eps\mper
\]
On the other hand if we write $g = \bar f+ g^\perp$ for $\bar f = \E_v f(v,\beta_v) =\E_v g(v)$, we have $\iprod{g^\perp,\bar f}=0$ and
\[ \iprod{g,Lg} = \iprod{\bar f,L\bar f} +\iprod{g^\perp,Lg^\perp} = 0+ \iprod{g^\perp,Lg^\perp}  \ge \gamma \snorm{g^\perp}\mper
\]
Combining the above we get $\E_v (f(v,\beta_v)-\bar f)^2 = \snorm{g^\perp}\le 2\eps/\gamma$ which means that $g \approx \bar f \approx 1$.
At this point we could write, using \pref{eq:1-have},
\[
    (1-\eps)^2 \le \E_{(v,v')} f(v,\beta_v) f(v',\beta_{v'}) \cdot Q_{v,v'} =
    \E_{(v,v')} \bar f \cdot \bar f\cdot Q_{v,v'} + err \] where
    \[err = \frac 1 2\left ( \E_{(v,v')} f(v,\beta_v)(f(v',\beta_{v'})-\bar f )Q_{v,v'} + \E_{(v,v')} (f(v,\beta_v)-
    \bar f)f(v',\beta_{v'})Q_{v,v'}\right)
\]
can be bounded by $\norm{g^\perp}$ using the Cauchy-Schwarz inequality.
This already gives a meaningful lower bound of $\snorm{G} = \E Q_{v,v'} \ge 1-O(\sqrt{\eps/\gamma})$, but not the ``strong'' lower bound of $1-O(\eps/\gamma)$ that we are after.
Instead, we conclude the proof of the theorem with the following chain of inequalities,
\begin{align*}
  1-\lVert G \rVert ^2
  & \le \E_{(v,v')} (1-Q_{v,v'}) \quad\using{\pref{eq:2-want}}\\
  & \le \E_{(v,v')} (1-Q_{v,v'}) \cdot 9\Bigparen{\bigparen{f(v,\beta_v)-\bar f}^2 + \bigparen{f(v',\beta_{v'})-\bar f}^2 + f(v,\beta_v) f(v',\beta_{v'}) }\\
  & \le 36 \e/\gamma + 9\cdot  \E_{(v,v')} (1-Q_{v,v'}) \cdot f(v,\beta_v) f(v',\beta_{v'})\\
  & \le 36 \e/\gamma + 18 \e\mper
\end{align*}
The second step uses that all nonnegative numbers $a$ and $b$ satisfy the inequality $1 \le 9 a b + 9\paren{a-\bar f}^2 + 9\paren{b-\bar f}^2$ (using $\bar f\ge 2/3$).
To verify this inequality we will do a case distinction based on whether $a$ or $b$ are smaller than $1/3$ or not.
If one of $a$ or $b$ is smaller than $1/3$, then one of the last two terms contributes at least $1$ because $\bar f\ge 2/3$.
On the other hand, if both $a$ and $b$ are at least $1/3$, then the first term contributes at least $1$.
The third step uses the $f(v,\beta_v)$ is close to the constant function $\bar f\cdot \Ind$.
The fourth step uses \pref{eq:1-have} and the fact that $\E_{(v,v')} f(v,\beta_v)f(v',\beta_{v'})\le  \E_v f(v,\beta_v)^2=1$.
\end{proof}

\subsection{Short proof for the hardness of \labelcover}\label{sec:labelcover-short}

$\labelcover(1,\delta)$ is the gap problem of deciding if the value of a given projection game is $1$ or at most $\delta$. The results of this section suffice to give the following hardness of \labelcover, assuming the PCP theorem. This result is a starting point for many hardness-of-approximation results.
\begin{theorem*}
$\labelcover(1,\delta)$ is NP-hard for all $\delta>0$.
\end{theorem*}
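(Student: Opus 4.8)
The plan is to derive the statement from the (basic) PCP theorem along the classical route --- reduce $3$-SAT to a projection game with a constant soundness gap, then drive the soundness down by parallel repetition --- except that in place of Raz's theorem we use the clean multiplicativity of the collision value for \emph{expanding} games proved above, namely \pref{thm:simple-bound} together with \pref{thm:approx-expander}. Since we only need the soundness below a \emph{fixed} constant~$\delta$, a constant number of repetitions suffices, so every blow-up is polynomial and the whole reduction runs in polynomial time.

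\emph{Step one: extract a projection game with a constant gap and a constant alphabet from the PCP theorem.} Start from the NP-hardness of distinguishing satisfiable $3$-SAT instances from $(1-\eps_0)$-unsatisfiable ones, for an absolute constant $\eps_0>0$, and form the clause--variable game: let $V$ be the set of clauses and $U$ the set of variables; the referee picks a uniformly random clause and then a uniformly random variable occurring in it, sends the clause to Bob and the variable to Alice; Bob answers a Boolean assignment to the three variables of his clause and Alice answers a bit; the constraint accepts iff Bob's assignment satisfies his clause and agrees with Alice's bit on her variable. Bob's answer determines Alice's, so this is a projection game; it has value $1$ on satisfiable instances and value at most $1-\eps_0/3$ otherwise, over an alphabet of constant size (pad $\Sigma$ so both sides use the same one).

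\emph{Step two: make the game expanding.} Apply the standard ``expanderizing'' transformation used in Dinur's proof of the PCP theorem (passing through the $2$-CSP representation if convenient): blow up each vertex into a cloud of vertices whose size equals its degree, arranged as a constant-degree non-bipartite (e.g.\ lazy) expander, route each original constraint to a fresh copy in the relevant cloud, and add equality constraints along the expander edges inside each cloud. Since equality is a projection constraint, the result $G$ is again a projection game over a constant-size alphabet, it still has value $1$ on YES instances, it has value at most $1-\eps_1$ on NO instances for an absolute constant $\eps_1>0$ (the transformation loses only a constant factor in the gap), and --- this is the point of the construction --- its symmetrized constraint graph $G_{\mathrm{sym}}$ (\pref{def:G2}) has spectral gap at least $\gamma$ for an absolute constant $\gamma>0$. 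By \pref{claim:collision-val}, in the NO case $\norm G\le (1-\eps_1)^{1/2}\le 1-\eps_1/2$. I expect this step to be the main obstacle: verifying that the transformation is genuinely compatible with the bipartite projection-game formalism, and that it indeed turns $G_{\mathrm{sym}}$ into a bounded-degree expander with a two-sided gap.

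\emph{Step three: amplify by parallel repetition.} Put $G'\seteq G^{\ot k}$. On YES instances $\val(G')=1$, by playing each coordinate with the perfect single-shot strategy. For soundness, observe first that $G^{\ot k}$ remains $\gamma$-expanding: by the remark following \pref{def:G2} the random-walk matrix of $(G^{\ot k})_{\mathrm{sym}}$ is the $k$-fold tensor power of that of $G_{\mathrm{sym}}$, and any eigenvalue of such a tensor power other than the top eigenvalue $1$ has absolute value at most $1-\gamma$, since at least one of its $k$ factors is a non-top eigenvalue of $G_{\mathrm{sym}}$ (of modulus $\le 1-\gamma$) while the others have modulus $\le 1$, and non-bipartiteness of $G_{\mathrm{sym}}$ keeps the gap two-sided. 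Hence, by the contrapositive of \pref{thm:approx-expander}, $\fakerval(G)\le 1-\Omega(\eps_1\gamma)$, and iterating \pref{thm:simple-bound} with $H=G^{\ot(k-1)},G^{\ot(k-2)},\dots$ gives $\norm{G^{\ot k}}\le \fakerval(G)^{k}\le\bigparen{1-\Omega(\eps_1\gamma)}^{k}$. Combined with $\val\le\norm{\cdot}$ from \pref{claim:collision-val}, taking $k=\Theta\bigparen{\log(1/\delta)/(\eps_1\gamma)}=O(\log(1/\delta))$ forces $\val(G')\le\delta$. Finally $G'$ is a projection game of size $\card{G}^{k}=\poly(n)$ over a constant-size alphabet, computable in polynomial time from the input formula, so the reduction witnesses that $\labelcover(1,\delta)$ is NP-hard.
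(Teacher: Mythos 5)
Your overall architecture is exactly the paper's: start from the PCP theorem to get a constant-gap projection game, make it expanding, then iterate \pref{thm:simple-bound} with \pref{thm:approx-expander} to drive the soundness below $\delta$ using $O(\log(1/\delta))$ repetitions. Your step three is essentially correct and matches the paper. Two issues, one minor and one more substantive.

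The minor one: the observation that $G^{\ot k}$ remains $\gamma$-expanding is unnecessary. \pref{thm:simple-bound} states $\norm{G\ot H}\le\fakerval(G)\cdot\norm H$ for \emph{arbitrary} projection games $H$; only the base game $G$ needs to be expanding, and only so that \pref{thm:approx-expander} can be applied to it. You apply \pref{thm:approx-expander} to $G$ alone, so the spectral analysis of the tensor power is doing no work.

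The substantive one is your step two, which you rightly flag as the obstacle. Dinur's expanderization, as you describe it (clouds plus an internal expander with equality constraints), adds edges between two vertices on the \emph{same side} of the bipartition. Those equality constraints are $V$--$V$ (or $U$--$U$) constraints, so the output is a general $2$-CSP, not a bipartite projection game, and it is not at all clear how to re-bipartize while controlling the spectral gap of the \emph{symmetrized} graph $G_{\mathrm{sym}}$ (which is a $v\to u\to v'$ walk, not the constraint graph Dinur's lemma makes expanding). The paper's \pref{claim:expanding} sidesteps all of this with a much lighter-weight construction: introduce a fresh Alice vertex $u_0$ (or, for regularity, a fresh set $U_0$) connected to all of $V$ with a ``trivial'' projection constraint (accept iff Alice answers a fixed symbol), and let the referee pick the original edge distribution with probability $\tfrac12$ and the new trivial edges with probability $\tfrac12$. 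This stays bipartite and projection by construction, it transforms the value affinely ($\val(G')=\tfrac12+\tfrac12\val(G)$, so a constant gap is preserved), and $G'_{\mathrm{sym}}$ is, with probability $\tfrac12$, the random walk of the complete graph on $V$, so its spectral gap is an absolute constant. You should replace your Dinur-style step two with this; with that substitution your argument becomes the paper's.
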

Let us sketch a proof of this. The PCP theorem~\cite{AS,ALMSS} directly implies that $\labelcover(1,1-\eps)$ is NP-hard for some constant $\eps>0$. Let $G$ be an instance of $\labelcover(1,1-\eps)$. We can assume wlog that $G$ is expanding, see~\pref{claim:expanding}. We claim that $G^\ok$ for $k= O(\frac{\log 1/\delta  }{\eps})$ has the required properties. If $\val(G)=1$ clearly $\val(G^\ok)=1$. If $\val(G) < 1-\eps$, then
\[ \val(G^\ok) \le \norm{G^\ok} \le \fakerval(G)^k \le (1-\Omega(\eps))^k \le \delta\]
where the second inequality is due to repeated applications of \pref{thm:simple-bound}, and the third inequality is due to \pref{thm:approx-expander}.
\qed

\section{General projection games}
\label{sec:approx-gen}

The bulk of this section is devoted to proving the approximation property of $\rval$ (\pref{thm:approximation}). Towards the end of the section we also quickly prove the multiplicativity of $\rval$ (\pref{lem:mult}). We end by proving \pref{thm:collision-val-product}.

\subsection{Approximation bound for (non-expanding) projection games}
We prove\\
\noindent {\bf Theorem~\ref{thm:approximation}.~~}{\em
  Let $G$ be a game with $\rval(G)^2 > \rho$, then
  \begin{displaymath}
    \val(G)\geq\snorm G > \frac{1-\sqrt{1-\rho^2}}{1+\sqrt{1-\rho^2}}\mper
  \end{displaymath}
Contrapositively, if $\snorm G < \delta$, then
  \begin{displaymath}
    \rval(G)^2 < \frac{2\sqrt{\delta}}{1+\delta}\mper
  \end{displaymath}
  In particular, if $\snorm G$ is small then the above bound becomes $\rval(G)^2\le 2\norm G$; and if $\snorm G < 1-\eps$ then $\rval(G )^2 < 1-\eps^2/8$.
}\\~\\
Let $G$ be a projection game with vertex set $U,V$ and alphabet $\Sigma$.
Our assumption that $\rval(G)^2>\rho$ implies the existence of a measure space $\Omega$ and a non-negative function $f\in L(V\times\Sigma\times\Omega)$ such that
\[ \snorm{(G\ot \Id_\Omega)f} > \rho \max_v \snorm{(T_v\ot  \Id_\Omega)f} \]

(In this section we denote by $\Id_\Omega$, rather than $\Id$, the identity operator on the space $L(\Omega)$, to emphasize the space on which it is operating.)

Without loss of generality, by rescaling, assume that $f\le 1$.
Since the integration over $\Omega$ occurs on both sides of the inequality, we can rescale the measure on $\Omega$ without changing the inequality, so that $\max_v \norm{(T_v\ot  \Id_\Omega)f} = 1$.
\paragraph{Proof Overview}
The proof is by an algorithm that extracts from $f$ an assignment for $G$.
We have seen that for expanding games $G$ there is always a single element $\omega\in \Omega$ such that the slice $f_\omega$ obtained by restricting $f$ to $\omega$ is already ``good enough'', in that it can be used to derive a good assignment for $G$ (see \pref{thm:approx-expander}). When $G$ is not expanding this is not true because each $f_\omega$ can potentially concentrate its mass on a different part of $G$. For example, imagine that $G$ is made of two equal-sized but disconnected sub-games $G_1$ and $G_2$, with optimal assignments $f_1,f_2$ respectively, and let $f$ be a vector assignment defined as follows. For every $v,\beta$ we will set $f(v,\beta,\omega_1)=f_1(v,\beta)$ if $v$ is in $G_1$ and $f(v,\beta,\omega_2)=f_2(v,\beta)$ if $v$ is in $G_2$. Everywhere else we set $f$ to $0$. The quality of $f$ will be proportional to the average quality of $f_1,f_2$, yet there is no single $\omega$ from which an assignment for $G$ can be derived.
Our algorithm, therefore, will have to construct an assignment for $G$ by combining the assignments derived from different components ($\omega$'s) of $f$.
Conceptually the algorithm has three steps.
\begin{enumerate}
\item In the first step, \pref{lem:deterministic}, we (easily) convert $f$ into a ``deterministic'' vector assignment, i.e. such that for each $v,\omega$ assigns at most one $\beta$ a non-zero value.
\item In the second step we convert each $f_\omega$ into distribution over partial assignments, namely,  $0/1$ functions that might be non-zero only on a small portion of $G$. We call this a ``Cheeger inequality for projection games'', because it converts fractional assignments to distributions over $0/1$ assignments, via randomized rounding. This is done in \pref{lem:threshold-rounding} and is analogous to but more subtle than \pref{thm:approx-expander}.
\item The third and last step, done in \pref{lem:correlated-sampling}, is to combine the different partial assignments into one global assignment. For this step to work we must ensure that the different partial assignments cover the entire game in a uniform way. Otherwise one may worry that all of the partial assignments are concentrated on the same small part of $G$. Indeed, that would have been a problem had we defined $\rval(G)$ to be
$\sup_{\Omega,f} \tfrac{\snorm{(G\otimes \Id_\Omega)f}}{\snorm {(T\otimes \Id_\Omega)f}} $. Instead, the denominator in the definition of $\rval$ is the maximum over $v$ of $\snorm{(T_v\otimes \Id_\Omega)f}$. This essentially forces sufficient mass to be placed on each vertex $v$ in $G$, so $G$ is uniformly covered by the collection of partial assignments.

This step corresponds to ``correlated sampling'', introduced to this context in \cite{Holenstein09}, because when viewed as a protocol between two players, the two players will choose $\omega$ using shared randomness, and then each player will answer his question according to the partial assignment derived from $f_\omega$.
(To be more accurate, shared randomness is also used in the second step, for deciding the rounding threshold through which a partial assignment is derived).
\end{enumerate}
\paragraph{Terminology} Throughout this section we use the following terminology. A {\em fractional assignment} is a non-negative function $h\in L(V\times\Sigma)$. 
A {\em deterministic fractional assignment} is a fractional assignment that for each $v\in V$ assigns at most one $\beta\in\Sigma$ a non-zero value. A {\em partial assignment} is a deterministic fractional assignment whose values are either $0$ or $1$.

A {\em vector assignment} is a non-negative function $f\in L(V\times\Sigma\times\Omega)$ also viewed as a function $f:V\times\Sigma\to\Rnn^\Omega$. A {\em slice} $f_\omega$ of $f$ is the fractional assignment defined by $f_\omega(v,\beta) = f(v,\beta,\omega)$. $f$ is called a {\em deterministic vector assignment} if every slice $f_\omega$ is deterministic.
\subsubsection{Making $f$ deterministic}
%
\begin{lemma}\label{lem:deterministic}
Let $f\in L(V\times\Sigma\times\Omega)$ be non-negative. Then there is a deterministic vector assignment $f'\in L(V\times\Sigma\times\Omega)$ such that
\begin{itemize}
\item For each $v$, $\snorm{(T_v\ot  \Id_\Omega)f}=\snorm{(T_v\ot  \Id_\Omega)f'}$.
\item $\snorm{(G\ot  \Id_\Omega)f'}\ge\snorm{(G\ot  \Id_\Omega)f}$.
\end{itemize}
\end{lemma}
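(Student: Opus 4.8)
The plan is to derandomize slice by slice, reusing the randomized‑rounding idea already used in the proof of \pref{thm:approx-expander}. For each $\omega\in\Omega$ and each vertex $v\in V$, independently sample a label $\beta^\omega_v\in\Sigma$ from the distribution that gives probability $f(v,\beta,\omega)\big/\sum_{\beta'}f(v,\beta',\omega)$ to $\beta$ (and, if this denominator vanishes, select nothing); then set $f'(v,\beta^\omega_v,\omega)=\sum_{\beta'}f(v,\beta',\omega)$ and $f'(v,\beta,\omega)=0$ for every $\beta\neq\beta^\omega_v$. By construction every slice $f'_\omega$ assigns a non‑zero value to at most one label per vertex, so $f'$ is a deterministic vector assignment, and a one‑line computation gives $\E f'=f$, the expectation being over the (independent across $v$ and $\omega$) random choices.

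For the first bullet, I would observe that $(T_v\otimes\Id_\Omega)f$ depends on $f$ only through the quantities $\sum_{\beta}f(v,\beta,\omega)$: indeed $(T_v\otimes\Id_\Omega)f(v,\alpha,\omega)$ equals $\sum_\beta f(v,\beta,\omega)$ when $\alpha=1$ and $0$ otherwise. Since the rounding leaves these column sums untouched, $\snorm{(T_v\otimes\Id_\Omega)f'}=\snorm{(T_v\otimes\Id_\Omega)f}$ holds \emph{for every outcome} of the sampling, not merely in expectation.

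For the second bullet, I would invoke convexity: the map $h\mapsto\norm{(G\otimes\Id_\Omega)h}$ is a linear operator followed by a norm, hence convex, so by Jensen's inequality $\E\,\norm{(G\otimes\Id_\Omega)f'}\ge\norm{(G\otimes\Id_\Omega)\,\E f'}=\norm{(G\otimes\Id_\Omega)f}$. Thus at least one realization of the sampling achieves $\norm{(G\otimes\Id_\Omega)f'}\ge\norm{(G\otimes\Id_\Omega)f}$, and squaring preserves this. Fixing such a realization, the first bullet holds automatically (it holds for all realizations), so this $f'$ satisfies both conclusions. If $\Omega$ is not finite one may first pass to a finite sub‑$\sigma$‑algebra, or simply note the argument only needs the per‑slice rounding distribution to be well defined, which it is because $f\ge 0$.

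The argument is essentially routine; the only point deserving care is that the \emph{same} $f'$ must satisfy both bullet points at once, which is handled precisely by the observation that the $T_v$‑equality is an identity valid for every outcome while only the $G$‑inequality requires an averaging step. As in \pref{thm:approx-expander}, the hypothesis $f\ge 0$ is exactly what makes the per‑slice rounding distribution meaningful and keeps $f'$ a legitimate non‑negative vector assignment.
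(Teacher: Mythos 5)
Your proof is correct and takes essentially the same approach as the paper: the same per-$(v,\omega)$ randomized rounding that concentrates the column mass $\sum_\beta f(v,\beta,\omega)$ on a single randomly chosen label, the same observation that the $T_v$-norm is preserved exactly for every outcome, and the same convexity/Jensen step to deduce that some realization does not decrease the $G$-norm. The only cosmetic difference is that the paper applies the convexity argument slice by slice (fixing $\omega$ and choosing a good deterministic $f_\omega'$, then assembling), whereas you sample all $(v,\omega)$ jointly and apply convexity to the global map $h\mapsto\norm{(G\otimes\Id_\Omega)h}$; both routes are valid.
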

\begin{proof}
We show that any fractional assignment $h$ can be replaced by a deterministic fractional assignment $h'$ such that $\snorm{T_v h'} = \snorm{T_v h}$ for all $v$ and $\snorm{G h'} \ge \snorm{G h}$. The function $f'$ will be defined by replacing each slice $f_\omega$ by $(f_\omega)'$.

Fix $h\in L(V\times\Sigma)$. For each $v$ let $z(v) = \sum_\beta h(v,\beta)$ and define a distribution over deterministic fractional assignments $\set{h_{r}}_r$ according to the following random process. If $z(v)=0$ let $h_{r}(v,\beta)=0$ for all $\beta$. If $z(v)>0$ choose $\beta$ with probability $h(v,\beta)/z(v)$, let $h_r(v,\beta)$ equal $z(v)$ for that $\beta$ and zero for all other $\beta$. Clearly $h= \E_r h_r$. For each $r$, $h_r$ is a  deterministic fractional assignment such that $\norm{T_v h_r} = z(v) = \norm {T_v h}$ for all $v$ and all $r$.
%
%
By convexity $\E_r\norm{Gh_r} \ge \norm{G\E_r h_r} = \norm{G h}$ so there must be some $r$ for which $\norm { Gh_r}$ is at least as high as the average. Set $h' = h_r$.
\end{proof}
From the claim it is clear that
\[\snorm{(G\ot \Id_\Omega)f'}\ge \snorm{(G\ot \Id_\Omega)f} > \rho \max_v \snorm{(T_v\ot  \Id_\Omega)f} = \rho \max_v \snorm{(T_v\ot  \Id_\Omega)f'}\]
so wlog we may assume from now on that $f$ is a deterministic vector assignment.
\subsubsection{Cheeger inequality for projection games}

\begin{lemma}\label{lem:threshold-rounding}
Let $f\in L(V\times\Sigma\times\Omega)$ be a deterministic vector assignment such that $\snorm{(G\ot \Id_\Omega)f}\ge \rho $ and such that $\max_v \snorm{(T_v\ot\Id_\Omega) f}=1$.  Then there exists a $0/1$-valued function $f'\in L(V\times\Sigma\times\Omega\times [0,1])$, such that for $\psi(x)=1-\cramped{\paren{1-\cramped{x^2}}^{1/2}}$,
  \[ \snorm{(G\ot \Id_\Omega\ot \Id_{[0,1]})f'} > \psi(\rho) \]
  and yet \[\max_v \norm{(T_v\ot  \Id_\Omega\ot \Id_{[0,1]})f'} \le  1 \mper \]
\end{lemma}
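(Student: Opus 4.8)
The plan is a threshold-rounding argument --- a Cheeger-type inequality for the symmetrized game $G_{\mathrm{sym}}$; the extra coordinate $[0,1]$ in $f'$ will serve as the rounding threshold. Since $f$ is a deterministic vector assignment, for each $v,\omega$ there is at most one label $b(v,\omega)$ with $f(v,b(v,\omega),\omega)>0$; set $g(v,\omega)=\sum_\beta f(v,\beta,\omega)=f(v,b(v,\omega),\omega)$. Rescaling $f$ and the measure on $\Omega$ simultaneously (which changes neither hypothesis nor conclusion) we may assume $g\le 1$, and we may assume $\rho<1$ since otherwise there is nothing to prove. Unwinding the two hypotheses through \pref{eq:sym} gives $\int_\Omega g(v,\omega)^2\,d\omega=\snorm{(T_v\ot\Id_\Omega)f}\le 1$ for every $v$, and, writing $c=c(v,v',\tau,\omega)=\Ind[(b(v,\omega),b(v',\omega))\in\tau]\in\{0,1\}$ for the collision indicator,
\[
  \int_\Omega \E_{(v,v',\tau)\sim\mu_{\mathrm{sym}}}\!\bigl[c\cdot g(v,\omega)g(v',\omega)\bigr]\,d\omega \;=\; \snorm{(G\ot\Id_\Omega)f}\;\ge\;\rho .
\]

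The rounding I would use is $f'(v,\beta,\omega,t)=\Ind[\beta=b(v,\omega)]\cdot\Ind[g(v,\omega)^2>t]$ for $t\in[0,1]$; this is $0/1$-valued, and each slice $f'_{\omega,t}$ is a partial assignment. From $\int_0^1\Ind[g(v,\omega)^2>t]\,dt=g(v,\omega)^2$ (here $g\le1$ is used) the normalization conclusion is immediate, $\snorm{(T_v\ot\Id_\Omega\ot\Id_{[0,1]})f'}=\int_\Omega\int_0^1\Ind[g(v,\omega)^2>t]\,dt\,d\omega=\int_\Omega g(v,\omega)^2\,d\omega\le1$. Expanding the squared norm slice-by-slice, applying \pref{eq:sym} to each partial assignment $f'_{\omega,t}$, and using $\int_0^1\Ind[g(v,\omega)^2>t]\Ind[g(v',\omega)^2>t]\,dt=\min(g(v,\omega)^2,g(v',\omega)^2)$, yields
\[
  \snorm{(G\ot\Id_\Omega\ot\Id_{[0,1]})f'} \;=\; \int_\Omega \E_{(v,v',\tau)}\!\Bigl[c\cdot\min\bigl(g(v,\omega)^2,g(v',\omega)^2\bigr)\Bigr]\,d\omega ,
\]
so the whole lemma reduces to lower-bounding this quantity by $\psi(\rho)$.

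For that, abbreviate $\sfE[X]:=\int_\Omega\E_{(v,v',\tau)\sim\mu_{\mathrm{sym}}}[X]\,d\omega$ and put $p=g(v,\omega)$, $q=g(v',\omega)$, $E:=\sfE[c\,pq]=\snorm{(G\ot\Id_\Omega)f}\ge\rho$ and $A:=\sfE[c\,p^2]$. By exchangeability of $(v,v')$ under $\mu_{\mathrm{sym}}$ and symmetry of $c$ we have $\sfE[c\,p^2]=\sfE[c\,q^2]=A$; Cauchy--Schwarz gives $E=\sfE[(\sqrt{c}\,p)(\sqrt{c}\,q)]\le A$, and the per-vertex bound gives $A\le\sfE[p^2]=\E_v\!\int_\Omega g(v,\omega)^2\,d\omega\le1$, so $A\in[E,1]$. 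Using $\min(a,b)=\tfrac12(a+b)-\tfrac12|a-b|$ and then Cauchy--Schwarz again,
\[
  \sfE\!\bigl[c\min(p^2,q^2)\bigr]=A-\tfrac12\,\sfE\!\bigl[c\,|p-q|(p+q)\bigr]\ge A-\tfrac12\sqrt{\sfE[c(p-q)^2]\cdot\sfE[c(p+q)^2]}=A-\sqrt{A^2-E^2},
\]
where the last equality uses $\sfE[c(p-q)^2]=2A-2E$ and $\sfE[c(p+q)^2]=2A+2E$. The map $A\mapsto A-\sqrt{A^2-E^2}$ has derivative $1-A/\sqrt{A^2-E^2}<0$ on $[E,1]$, so it attains its minimum there at $A=1$, giving $\snorm{(G\ot\Id_\Omega\ot\Id_{[0,1]})f'}\ge 1-\sqrt{1-E^2}=\psi(E)\ge\psi(\rho)$ since $\psi$ is increasing on $[0,1]$. (The strict inequality of the statement follows because $\snorm{(G\ot\Id_\Omega)f}>\rho$ strictly in the application, or because $A<1$ in all non-degenerate cases, giving strict decrease.)

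The step I expect to be the real obstacle is pinning down the \emph{exact} constant $\psi(\rho)=1-\sqrt{1-\rho^2}$ rather than something useless. If one bounds $\sfE[c\,|p^2-q^2|]$ crudely --- dropping the factor $c$, or estimating $\sfE[c(p+q)^2]\le4$ --- the bound collapses to roughly $\rho-O\!\bigl(\sqrt{1-\rho}\bigr)$, which is negative for small $\rho$ and useless. The point is that both Cauchy--Schwarz factors must be evaluated exactly in terms of the two scalars $A$ and $E$ (this is precisely where symmetry of $\mu_{\mathrm{sym}}$ is used, to identify $\sfE[c\,p^2]$ with $\sfE[c\,q^2]$), after which the one-parameter minimization over $A\in[E,1]$ is monotone and determines the constant. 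A secondary subtlety is making sure the normalization $g\le1$ (equivalently $f\le1$) is genuinely available at this point --- otherwise the truncation at threshold $1$ built into the rounding loses too much mass and the claimed bound fails.
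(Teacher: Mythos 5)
Your rounding $f'(v,\beta,\omega,t)=\Ind[\beta=b(v,\omega)]\Ind[g(v,\omega)^2>t]$ is exactly the paper's threshold rounding, and the final Cauchy--Schwarz step is the same engine as the paper's \pref{lem:gm-vs-min}. So the core approach is the same; where you differ is in the \emph{organization} of the analysis. The paper first proves a per-slice ``Cheeger inequality for projection games'' giving, for each fixed $\omega$, a bound $\E_\tau\snorm{Gh_\tau}\ge\psi(\rho_\omega)\snorm{f_\omega}$, and then aggregates over $\Omega$ using convexity of $\psi$ together with $\snorm{f}\le1$; within the per-slice step it also passes from the $Z$-free \pref{lem:gm-vs-min} to \pref{cor:gm-vs-min} via a second convexity argument. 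You instead integrate over $\Omega$ and $(v,v',\tau)$ all at once, exploit the exchangeability of $(v,v')$ under $\mu_{\mathrm{sym}}$ and the symmetry of the collision indicator $c$ to get $\sfE[c\,p^2]=\sfE[c\,q^2]=:A$, apply Cauchy--Schwarz with $c$ kept inside both factors, and then do a single explicit one-variable minimization of $A-\sqrt{A^2-E^2}$ over $A\in[E,1]$ (using the normalization $\max_v\snorm{(T_v\ot\Id_\Omega)f}\le 1$ to get $A\le1$). This collapses the paper's two convexity steps into one monotonicity argument and avoids introducing $\rho_\omega$ and the auxiliary Cheeger lemma as a separate object, at the modest cost of invoking the $\mu_{\mathrm{sym}}$ symmetry explicitly. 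Both routes give the tight constant $\psi(\rho)=1-\sqrt{1-\rho^2}$; your remark about not dropping $c$ inside Cauchy--Schwarz is exactly the role played in the paper by the convexity step in \pref{cor:gm-vs-min}. Your caveats about rescaling to ensure $g\le1$ and about the strict-versus-weak inequality at the end are both apposite (the paper rescales the same way, and the $>$ in the statement comes from the strict inequality in the application).
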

We will obtain $f'$ from $f$ by applying the following lemma to each slice $f_\omega$, for all $\omega$  simultaneously.
%
\begin{lemma}[Cheeger inequality for projection games] Let $h \in L(V\times\Sigma)$ be a deterministic fractional assignment such that $\snorm{Gh} = \rho \snorm{h}$. There is a distribution $\set{h_\tau}_\tau$ over partial assignments $h_\tau:V\times\Sigma\to \set{0,1}$ such that for $\psi(x)=1-\cramped{\paren{1-\cramped{x^2}}^{1/2}}$,
\[
\forall v\in V;\quad \E_\tau \snorm{T_v h_\tau} = \snorm{T_v h}\qquad\hbox{and}\qquad  \E_\tau \snorm{Gh_\tau} \ge \psi(\rho)\snorm{h}\mper\]
\end{lemma}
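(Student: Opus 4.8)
The plan is a threshold‑rounding of $h$, in the spirit of the expander argument in \pref{thm:approx-expander} but dispensing with the expansion hypothesis. First I would fix notation: since $h$ is a deterministic fractional assignment, for each $v\in V$ there is at most one label $\beta_v$ with $h(v,\beta_v)>0$; put $g(v)=h(v,\beta_v)$, and $g(v)=0$ if no such label exists. I would assume $g(v)\le 1$ for every $v$ — this is exactly the condition $\snorm{T_v h}\le 1$, without which the first conclusion is not even satisfiable, and it is automatic in the application (where $f\le 1$). Then $\snorm h=\E_v g(v)^2$ and $\snorm{T_v h}=g(v)^2$, and by \pref{eq:sym}, writing $X=X(v,v',\tau)\defeq\Ind[(\beta_v,\beta_{v'})\in\tau]$ for the collision indicator, one has $\snorm{Gh}=\E_{(v,v',\tau)\sim\mu_{sym}}X\cdot g(v)\,g(v')$.

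The rounding I would use: draw $t$ uniformly from $[0,1]$ (I write $t$ for the threshold, to avoid a clash with the constraint $\tau$ in $\mu_{sym}$) and let $h_t$ be the partial assignment with $h_t(v,\beta_v)=1$ when $g(v)^2\ge t$ and $h_t(v,\beta)=0$ otherwise; this is indeed $0/1$‑valued and, for each $v$, supported on $\beta_v$. Since $h_t$ is $0/1$‑valued and deterministic, $\snorm{T_v h_t}=\Ind[g(v)^2\ge t]$, so $\E_t\snorm{T_v h_t}=\Pr_t[\,t\le g(v)^2\,]=g(v)^2=\snorm{T_v h}$ (using $g(v)^2\le 1$), which is the first claimed identity. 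Applying \pref{eq:sym} to $h_t$ and averaging over $t$, and using $\Pr_t[\,t\le g(v)^2\text{ and }t\le g(v')^2\,]=\min(g(v)^2,g(v')^2)$ (again because the minimum is at most $1$), one gets $\E_t\snorm{Gh_t}=\E_{(v,v',\tau)\sim\mu_{sym}}X\cdot\min(g(v)^2,g(v')^2)$; call this quantity $P$.

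It then remains to show $P\ge\psi(\rho)\snorm h$. I would also introduce $Q:=\E_{(v,v',\tau)\sim\mu_{sym}}X\cdot\max(g(v)^2,g(v')^2)$ and record two facts. First, since $X\le 1$ and the $v$‑ and $v'$‑marginals of $\mu_{sym}$ both equal the measure on $V$, $P+Q=\E_{(v,v',\tau)}X\cdot\bigl(g(v)^2+g(v')^2\bigr)\le\E_v g(v)^2+\E_{v'}g(v')^2=2\snorm h$. Second, factoring $g(v)g(v')=\min(g(v),g(v'))\cdot\max(g(v),g(v'))$ and applying Cauchy–Schwarz with respect to the sub‑probability weight $X\,d\mu_{sym}$ (using $X^2=X$ and $\min(g(v),g(v'))^2=\min(g(v)^2,g(v')^2)$, likewise for $\max$), $\rho\snorm h=\snorm{Gh}=\E_{(v,v',\tau)}X\cdot g(v)g(v')\le\sqrt{PQ}$, so $PQ\ge\rho^2\norm h^4$; moreover $\rho\le 1$ since $\snorm{Gh}\le\E_{(v,v',\tau)}g(v)g(v')\le\snorm h$ by Cauchy–Schwarz. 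Combining the two, $\rho^2\norm h^4\le P(2\snorm h-P)$, i.e. $P^2-2\snorm h\cdot P+\rho^2\norm h^4\le 0$; the roots of this quadratic in $P$ are $\snorm h\bigl(1\pm\sqrt{1-\rho^2}\bigr)$, whence $P\ge\snorm h\bigl(1-\sqrt{1-\rho^2}\bigr)=\psi(\rho)\snorm h$, as needed.

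I do not expect a genuine obstacle here; the argument is short once two design choices are made correctly. The first is to threshold $g(v)^2$ — not $g(v)$ — against a level uniform on $[0,1]$, which makes the $T_v$‑bookkeeping an exact identity (matching $\snorm{T_v h}$ on the nose) rather than a mere inequality. The second is to split the collision mass into its $\min$‑part $P$ and its $\max$‑part $Q$: the bound $P+Q\le 2\snorm h$ — the one point where the $0/1$ (collision) structure special to projection games is used — together with $PQ\ge\rho^2\norm h^4$ is precisely what produces the Cheeger‑type value $\psi(\rho)=1-\sqrt{1-\rho^2}$ rather than the (false) linear bound $P\ge\rho\snorm h$ one might naively hope for.
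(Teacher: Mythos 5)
Your proof is correct and uses the same rounding (threshold $h(v,\beta)^2$ against a uniform level in $[0,1]$, so that $\E_\tau \snorm{T_v h_\tau}=\snorm{T_v h}$ and $\E_\tau\snorm{Gh_\tau}=\E_{\mu_{sym}} X\cdot\min(g(v)^2,g(v')^2)$), but the final inequality is established by a genuinely different argument. The paper abstracts this last step into \pref{lem:gm-vs-min}: writing $\min\{A,B\}=\tfrac12(A+B)-\tfrac12|A-B|$ and bounding $\E\tfrac12|A-B|$ via Cauchy--Schwarz applied to $|A^{1/2}-B^{1/2}|\cdot|A^{1/2}+B^{1/2}|$, and then \pref{cor:gm-vs-min} reinserts the $0/1$ indicator $Z$ (your $X$) by a separate convexity-of-$\psi$ argument. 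You instead keep the indicator weight $X$ inside throughout, introduce both $P=\E X\min$ and $Q=\E X\max$, observe $P+Q\le 2\snorm h$ (marginals plus $X\le1$) and $\rho\snorm h=\E X\,g g'\le\sqrt{PQ}$ (Cauchy--Schwarz with weight $X\,d\mu_{sym}$, using $\min\cdot\max=gg'$ and $X^2=X$), and solve the resulting quadratic. This is a clean alternative: it buys you a single, self-contained computation that avoids the extra convexity step needed in the paper's \pref{cor:gm-vs-min} to reintroduce $Z$, at the mild cost of tracking two quantities $P,Q$ rather than one. Your explicit remark that $g(v)\le 1$ (equivalently $\snorm{T_v h}\le 1$) is needed for the first identity matches the paper's implicit normalization $f\le 1$ made upstream of this lemma.
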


\begin{proof}
  Let $h_\tau:V\times \Sigma\to\set{0,1}$ be defined as
  \begin{displaymath}
    h_\tau(v,\beta) =
    \begin{cases}
      1 & \text{ if }h(v,\beta)^2>\tau\mper\\
      0 & \text{ otherwise.} \\
    \end{cases}
  \end{displaymath}
  Since $h$ is deterministic, each $h_{\tau}$ is also deterministic.
  %
  %
%
  %
  For every $v,\beta$, we have $\E_\tau h_\tau(v,\beta)^2=h(v,\beta)^2$.
  Thus, for each $v\in V$,
\[ \E_\tau \snorm{T_vh_\tau} = \E_\tau \big(\sum_\beta h_\tau(v,\beta)\big)^2 =
\E_\tau \sum_\beta h_\tau(v,\beta)^2 = \sum_\beta h(v,\beta)^2 = \snorm{T_vh}
\]
where the second and the last equalities hold because $h$ and $h_\tau$ are deterministic.
It remains to show that
  \[ \E_\tau \snorm {G h_\tau} \ge \psi(\rho) \snorm h \mper\]
Let $x \in \Sigma^V$ be an assignment that is consistent with $h$ (so that $h(v,\beta)=0$   for all $\beta\neq x_{v}$).
Note that also $h_{\tau}(v,\beta)=0$ whenever $\beta\neq x_{v}$, so
\begin{equation}
  \label{eq:pi}
  \lVert  G h_\tau \rVert^2 =
  \E_{(v_1,v_1,\pi)} \sum_{\beta_1,\beta_2}
  h_\tau(v_1,\beta_{1}) h_\tau(v_2,\beta_{2}) \pi(\beta_1,\beta_2)
  = \E_{(v_1,v_1,\pi)} h_\tau(v_1,x_{v_1}) h_\tau(v_2,x_{v_2}) \cdot \pi(x_{v_1},x_{v_2}) ,
\end{equation}
where $(v_1,v_2,\pi)$ is distributed according to the symmetrized game $G_{\mathrm{sym}}$, see \pref{eq:sym}.

For every pair $v_1,\beta_1$ and $v_2,\beta_2$,
  \begin{align}
    \E_{\tau\sim [0,1]}h_{\tau}(v_1,\beta_1)\cdot h_{\tau}(v_2,\beta_2)
    &=\min\Set{h(v_1,\beta_1)^2,h(v_2,\beta_2)^2}
    \label{eq:min}
  \end{align}
  Now, combining \pref{eq:pi} and \pref{eq:min},
  \begin{align}
    \E_{\tau}\snorm{G h_{\tau}}
    & = \E_{(v_1,v_2,\pi)} \pi(x_{v_1},x_{v_2}) \cdot
    \min\Set{h(v_1,x_{v_1})^2,h(v_2,x_{v_2})^2}\label{eq:intermediate}
    \mper
  \end{align}
We can express $\snorm{G h}$ in a similar way,
  \begin{equation}
    \snorm{Gh}
     = \E_{(v_1,v_2,\pi)} \pi(x_{v_1},x_{v_2}) \cdot
    h(v_1,x_{v_1})h(v_2,x_{v_2})
    \label{eq:intermediate2}
  \end{equation}
  At this point we will use the following simple inequality (see
  \pref{cor:gm-vs-min} toward the end of this section):

  \begin{quote}
    \emph{Let $A,B,Z$ be jointly-distributed random variables, such that
    $A,B$ take only nonnegative values and $Z$ is $0/1$-valued.
    Then, $\E Z \min\set{A,B}\ge \psi(\rho)\E \tfrac12(A+B)$ holds
    as long as $\E Z \sqrt{AB}\ge \rho \E \tfrac12(A+B)$.}
  \end{quote}
  We will instantiate the inequality with $Z=\pi(x_{v_1},x_{v_2})$,
  $A=h(v_1,x_{v_1})^2$, and $B=h(v_2,x_{v_2})^2$ (with $(v_1,v_2,\pi)$
  drawn as above).
  With this setup, $\E \tfrac12(A+B)=\snorm{h}$.
  Furthermore, $\E Z \min\set{A,B}$ corresponds to the right-hand side of
  \pref{eq:intermediate} and $\E Z \sqrt{AB}$ corresponds to the
  right-hand side of \pref{eq:intermediate2}.
  Thus, $\snorm{Gh}\ge \rho\snorm{h}$
  means that the condition above is satisfied, and we get the desired
  conclusion, $\E_\tau \snorm{Gh_\tau}\ge
  \psi(\rho)\snorm{h}$ as required.
 \end{proof}

 We now prove \pref{lem:threshold-rounding},
 \begin{proof}
 For each slice $f_\omega$ of $f$ let $\rho_\omega$ be such that $\snorm{Gf_\omega} =\rho_\omega\snorm {f_\omega}$. For $f_\omega$ we have, by the previous lemma, a distribution of $0/1$ functions  $\set{f_{\omega,\tau}}_\tau$ such that $\E_\tau \snorm{Gf_{\omega,\tau}}\ge \psi(\rho_\omega)\snorm{f_{\omega}}$, and such that for every $v$, $\snorm{T_v f_\omega} = \E_\tau \snorm{T_v f_{\omega,\tau}}$.
Let $f'\in L(V\times\Sigma\times\Omega\times[0,1])$ be defined by $f'(v,\beta,\omega,\tau) = f_{\omega,\tau}(v,\beta)$.
For every $v$,
\[  \snorm {(T_v \ot \Id_{\Omega\times [0,1]})f'} =
\int_\Omega \E_\tau \snorm {T_v f'_{\omega,\tau}} = \int_\Omega \snorm{T_v f_\omega} = \snorm {(T_v \ot \Id_{\Omega})f} \le 1 \mper
 \]
and
  \begin{displaymath}
    \snorm{(G\ot \Id_{\Omega\times[0,1]})f'}=
    \int_\Omega \E_\tau \snorm{Gf'_{\omega,\tau}} \, d \omega
    \ge\int_\Omega \psi(\rho_\omega)\cdot \snorm{f_\omega} \, d\omega
    \ge \psi(\rho) .
  \end{displaymath}
where for the last step, we use the convexity of $\psi$ and that \inote{}
  $\rho=\int_\Omega \rho_\omega\snorm{f_\omega}\,d\omega$ and
  $\int_\Omega \snorm{f_\omega}\, d\omega = \snorm{f} \le 1$. The bound $\snorm f\le 1$ follows by
\[\snorm{f} =  \E_v \sum_\beta \int_\Omega   f(v,\beta,\omega)^2 d\omega
= \E_v \int_\Omega \big(\sum_\beta f(v,\beta,\omega)\big)^2  d\omega
= \E_v \int_\Omega \snorm{T_v f_\omega} d\omega
= \E_v \snorm{(T_v \ot \Id_\Omega) f_\omega}
\le 1
\]
where we have used in the second equality that $f$ is deterministic.

Let $\Omega'=\Omega\times [0,1]$. We now have a $0/1$ function $f'\in L(V\times\Sigma\times \Omega')$ such that \[\snorm{(G\ot \Id_{\Omega'})f'}\ge \psi(\rho) \qquad \hbox{and} \qquad \max_v \snorm{(T_v\ot \Id_{\Omega'})f'} = 1  \]
as required.
\end{proof}

\subsubsection{Combining the partial assignments}
The next step is to combine the different slices $f'_{\omega,\tau}$ of $f'$ into one assignment for $G$.
\Dnote{}%
\begin{lemma}[Correlated Sampling]
\label{lem:correlated-sampling}
  There exists an assignment $x$ for $G$ with value at least
  $\frac{1-\gamma}{1+\gamma}$ for $1-\gamma=\snorm{(G\ot \Id_{\Omega'})f'}\ge \psi(\rho)$.
\end{lemma}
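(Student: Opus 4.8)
The plan is to realize $f'$ as a family of partial assignments indexed by $\Omega'$ and to glue them into one global assignment by the correlated-sampling scheme of~\cite{Holenstein09}. Since $f'$ is $0/1$-valued and each slice $f'_{\omega'}$ is a partial assignment (hence deterministic), for each $\omega'\in\Omega'$ and $v\in V$ there is at most one label $a_{\omega'}(v)\in\Sigma$ with $f'(v,a_{\omega'}(v),\omega')=1$; set $C_v=\set{\omega'\in\Omega' : a_{\omega'}(v)\text{ is defined}}$. Unwinding the definitions of $T_v$ and of $G$ exactly as in the derivation of \pref{eq:sym}, and using that $f'$ is $0/1$-valued and deterministic, yields the two facts I will rely on: $\mu(C_v)=\snorm{(T_v\ot\Id_{\Omega'})f'}\le 1$ for every $v$ (where $\mu$ denotes the finite measure on $\Omega'$), and, setting $g(v,v',\tau)=\mu\bigl(\set{\omega'\in C_v\cap C_{v'} : (a_{\omega'}(v),a_{\omega'}(v'))\in\tau}\bigr)$,
\[
\E_{(v,v',\tau)\sim G_{\mathrm{sym}}}\ g(v,v',\tau)\ =\ \snorm{(G\ot\Id_{\Omega'})f'}\ =\ 1-\gamma\mper
\]
Note also that $g(v,v',\tau)\le\mu(C_v\cap C_{v'})\le 1$ pointwise, so $1-\gamma\in[0,1]$.

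Now draw an i.i.d.\ sequence $\omega'_1,\omega'_2,\dots$ from $\mu$ (normalized to a probability measure) and define a random assignment $X$ by $X(v)=a_{\omega'_{j(v)}}(v)$, where $j(v)$ is the least index with $\omega'_{j(v)}\in C_v$ (set $X(v)$ arbitrarily if $\mu(C_v)=0$; otherwise $j(v)<\infty$ almost surely). Fix an edge $(v,v',\tau)$ of $G_{\mathrm{sym}}$ with $\mu(C_v\cup C_{v'})>0$. Conditioned on which index $i$ is the first with $\omega'_i\in C_v\cup C_{v'}$, the point $\omega'_i$ is distributed as $\mu$ conditioned on $C_v\cup C_{v'}$; and if $\omega'_i$ lands in the ``good'' set $\set{\omega'\in C_v\cap C_{v'} : (a_{\omega'}(v),a_{\omega'}(v'))\in\tau}$, then $j(v)=j(v')=i$ and $(X(v),X(v'))\in\tau$. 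Hence, writing $g=g(v,v',\tau)$ and using $g\le\mu(C_v\cap C_{v'})$ together with $\mu(C_v),\mu(C_{v'})\le1$,
\[
\Pr_X\bigl[(X(v),X(v'))\in\tau\bigr]\ \ge\ \frac{g}{\mu(C_v\cup C_{v'})}\ =\ \frac{g}{\mu(C_v)+\mu(C_{v'})-\mu(C_v\cap C_{v'})}\ \ge\ \frac{g}{2-g}\mcom
\]
and the extreme inequality $\Pr_X[\cdots]\ge g/(2-g)$ holds trivially (both sides $0$) when $\mu(C_v\cup C_{v'})=0$.

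Finally, average over $(v,v',\tau)\sim G_{\mathrm{sym}}$ and apply Jensen's inequality to the convex function $t\mapsto t/(2-t)$ on $[0,1]$:
\[
\E_X\ \val(G_{\mathrm{sym}};X)\ =\ \E_{(v,v',\tau)}\,\Pr_X\bigl[(X(v),X(v'))\in\tau\bigr]\ \ge\ \E_{(v,v',\tau)}\ \frac{g(v,v',\tau)}{2-g(v,v',\tau)}\ \ge\ \frac{1-\gamma}{1+\gamma}\mper
\]
Therefore some deterministic $x\from V\to\Sigma$ achieves $\val(G_{\mathrm{sym}};x)\ge\tfrac{1-\gamma}{1+\gamma}$; regarding $x$ as the Bob-strategy $\hat x(v,\beta)=\Ind[\beta=x(v)]$, \pref{eq:sym} gives $\snorm{G\hat x}=\val(G_{\mathrm{sym}};x)$, so $\snorm{G}\ge\tfrac{1-\gamma}{1+\gamma}$, and with \pref{claim:collision-val} also $\val(G)\ge\snorm{G}\ge\tfrac{1-\gamma}{1+\gamma}$. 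Since $1-\gamma=\snorm{(G\ot\Id_{\Omega'})f'}\ge\psi(\rho)=1-\sqrt{1-\rho^2}$ and $t\mapsto\tfrac{1-t}{1+t}$ is decreasing, this yields the bound $\snorm{G}>\tfrac{1-\sqrt{1-\rho^2}}{1+\sqrt{1-\rho^2}}$ asserted in \pref{thm:approximation}.

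The one genuinely delicate point is the middle display of the second paragraph: keeping $\mu(C_v\cup C_{v'})$ — rather than the larger $\mu(C_v)+\mu(C_{v'})$ — in the denominator is precisely what the correlated (``first-hit'') sampling buys us, and it is what upgrades the trivial estimate $\tfrac{1-\gamma}{2}$ to the sharp $\tfrac{1-\gamma}{1+\gamma}$; one must also make the ``conditioned on the first hit'' claim rigorous and dispose of the degenerate coordinates where $\mu(C_v)=0$. The remaining ingredients — the two opening identities, convexity of $t/(2-t)$, Jensen's inequality, and the identity $\snorm{G\hat x}=\val(G_{\mathrm{sym}};x)$ — are routine given \pref{sec:technique}.
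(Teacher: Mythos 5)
Your proof is correct and follows essentially the same correlated-sampling strategy as the paper's: realize the slices of $f'$ as partial assignments, glue them by a first-hit rule on an i.i.d.\ stream, lower-bound the per-edge success probability by the conditional probability that the first slice hitting $C_v\cup C_{v'}$ lands in the ``good'' agreement set, and then average with a convexity/Jensen argument. The one place where you diverge from the paper is worth noting: the paper first pads $\Omega'$ so that $\snorm{(T_v\ot\Id)f'}=1$ holds \emph{exactly} for every $v$, and then invokes \pref{lem:min-vs-max} (the $\min$/$\max$ inequality, with $A,B$ the $0/1$ indicators of $C_{v_1},C_{v_2}$) to turn the per-edge $(1-\gamma_{v_1,v_2,\pi})$-bound into $\tfrac{1-\gamma_{v_1,v_2,\pi}}{1+\gamma_{v_1,v_2,\pi}}$. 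You sidestep both the padding and \pref{lem:min-vs-max} by noting $g\le\mu(C_v\cap C_{v'})$ and applying inclusion--exclusion directly: $\mu(C_v\cup C_{v'})=\mu(C_v)+\mu(C_{v'})-\mu(C_v\cap C_{v'})\le 2-g$, which gives $\tfrac{g}{\mu(C_v\cup C_{v'})}\ge\tfrac{g}{2-g}=\tfrac{1-\gamma'}{1+\gamma'}$ with $\gamma'=1-g$. This is a genuine streamlining --- it works under the weaker hypothesis $\max_v\snorm{(T_v\ot\Id)f'}\le 1$ without any renormalization and eliminates one auxiliary lemma --- while being numerically equivalent (the paper's instantiation of \pref{lem:min-vs-max} with $0/1$-valued $A,B$ is, after unwinding, exactly your inclusion--exclusion estimate). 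Your closing Jensen step on the convex function $t\mapsto t/(2-t)$ is the same as the paper's convexity argument for $\varphi'(x)=(1-x)/(1+x)$ under the change of variable $g=1-\gamma'$.
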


\begin{proof}
  We may assume $\snorm{(T_v\ot \Id)f'}=1$ for all $v\in V$.
  (We can arrange this condition to hold by adding additional points to $\Omega'$, one for each vertex in $V$, and extending $f'$ in a suitable way.)

  Rescale $\Omega'$ to a probability measure.
  Let $\lambda$ be the scaling factor, so that $\snorm{(G\ot I_{\Omega'})f'}=(1-\gamma)\lambda$ (after rescaling).
  Then, $f'$ also satisfies $\snorm{(T_v\ot \Id)f'} = \lambda$ for all $v\in V$.

  For every $\omega\in \Omega'$, the slice $f'_\omega$ is a partial
  assignment (in the sense that it uniquely assigns a label per vertex to a
  subset of vertices).
  We will construct (jointly-distributed) random variables
  $\set{X_v}_{v\in V}$, taking values in $\Sigma$, by combining the
  partial assignments $f'_{\omega}$ in a probabilistic way.
(The sampling procedure we describe corresponds to correlated sampling applied to the distributions $\Omega'_v=\{\omega\in \Omega'\mid f'_{\omega}(v,\beta)>0\hbox{ for some (unique) }\beta \}$.)

  Conceptually we define $X$ by randomly permuting $\Omega'$ and then assigning each $v$ according to the first $f'_{\omega}$ that assigns it a non zero value. Formally, let $\set{\omega(n)}_{n\in\N}$ be an infinite sequence of independent
  samples from $\Omega'$ and consider the infinite sequence of slices $f'_{\omega(1)}, f'_{\omega(2)}, f'_{\omega(3)}, $ and so on.
  For each $v$ let $R(v)$ be the smallest index $r$ such that the partial assignment $f'_{\omega(r)}$ assigns a label to $v$, i.e. such that $f'_{\omega(r)}(v,\beta) >0 $ for some (unique) $\beta\in \Sigma$.
  We define $X_v$ to be the label assigned to $v$ by this procedure, so that $f'_{\omega(R(v))}(v,X_v)= 1$.

  In this randomized process, the probability that the random assignment $X$ satisfies a constraint $(v_1,v_2,\pi)$ is bounded from below by the probability that both $R(v_1)= R(v_2)$ and the partial assignment $f'_{\omega(R(v_1))}=f'_{\omega(R(v_2))}$ assigns consistent values to $v_1,v_2$.
The probability of this event is equal to the probability (over $\omega$) that the partial assignment $f'_{\omega}$ satisfies the constraint $(v_1,v_2,\pi)$ conditioned on the event that $f'_\omega$ assigns a label to either $v_1$ or $v_2$. Therefore,
letting $x_\omega\in \Sigma^W$ denote any assignment consistent with the partial assignment
  $f'_{\omega}$,
  \begin{multline}
    \Prob[X]{\pi(X_{v_1},X_{v_2})=1}
    \ge \Prob[X]{\pi(X_{v_1},X_{v_2})=1 \land R(v_1)=R(v_2)}\\
    = \Prob[\omega\sim \Omega']{\pi(x_{\omega,v_1},x_{\omega,v_2})=1 \mid f_\omega(v_1,x_{\omega,v_1})=1 \lor f_\omega(v_2,x_{\omega,v_2})=1}\\
    = \frac{\E_{\omega\sim\Omega'}
      \pi(x_{\omega,v_1},x_{\omega,v_2})
      \min\set{f'_{\omega}(v_1,x_{\omega,v_1}),f'_{\omega}(v_2,x_{\omega,v_2})}}
    {\E_{\omega\sim\Omega'}
      \max\set{f'_{\omega}(v_1,x_{\omega,v_1}),f'_{\omega}(v_2,x_{\omega,v_2})}}\mper
\label{eq:intermediate3}
  \end{multline}
  At this point, we will use the following simple inequality (see
  \pref{lem:min-vs-max} toward the end of this subsection):
  \begin{quote}
    \emph{Let $A,B,Z$ be jointly-distributed random variables such
      that $A,B$ are nonnegative-valued and $Z$ is $0/1$-valued.
      Then, $\E Z\cdot \min\set{A,B}\ge (\tfrac{1-\gamma'}{1+\gamma'})\E
      \max\set{A,B}$ as long as $\E Z \cdot \min\set{A,B}\ge (1-\gamma')\E\tfrac12(A+B)\mper$}
  \end{quote}
  We instantiate this inequality with $A=f'_\omega(v_1,x_{\omega,v_1})$, $B=f'_{\omega}(v_2,x_{\omega,v_2})$, $Z=\pi(x_{\omega,v_1},x_{\omega,v_2})$ for $\omega\sim\Omega'$, and $\gamma'=\gamma_{v_1,v_2,\pi}$, where
  \begin{displaymath}
    \E_\omega  \pi(x_{\omega,v_1},x_{\omega,v_2})
    f'_\omega(v_1,x_{\omega,v_1})
    f'_\omega(v_2,x_{\omega,v_2})=(1-\gamma_{v_1,v_2,\pi})\lambda.
  \end{displaymath}
  The condition of the $A,B,Z$-inequality corresponds to the condition
  $\E_\omega  \pi(x_{\omega,v_1},x_{\omega,v_2})
  \min\{f'_\omega(v_1,x_{\omega,v_1}),
  f'_\omega(v_2,x_{\omega,v_2})\}\ge (1-\gamma_{v_1,v_2,\pi})\lambda$,
  because $f'$ is $0/1$-valued and $\E_{\omega} f'_\omega(v,x_{\omega,v})^2=\snorm{(T\otimes \Id )f'}=\lambda$.
  The conclusion of the $A,B,Z$-inequality shows that the right-hand
side of \pref{eq:intermediate3} is bounded from below by
$\nfrac{1-\gamma_{v_1,v_2,\pi}}{1+\gamma_{v_1,v_2,\pi}}$.

  Hence, by convexity of $\varphi'(x)=(1-x)/(1+x)$,
  \begin{displaymath}
    \E_X \val(G_{sym};X)
    =\E_{(v_1,v_2,\pi)} \Prob[X]{\pi(X_{v_1},X_{v_2})=1}
    \ge \E_{(v_1,v_2,\pi)} \varphi'(\gamma_{v_1,v_2,\pi}) \ge \varphi'\Bigparen{\E_{(v_1,v_2,\pi)}\gamma_{v_1,v_2,\pi}}\mcom
  \end{displaymath}
  where $(v_1,v_2,\pi)$ are distributed according to $G_{sym}$.
  It remains to lower-bound the expectation of $\gamma_{v_1,v_2,\pi}$ over $(v_1,v_2,\pi)\sim G_{sym}$.
  \begin{align*}
    \E_{(v_1,v_2,\pi)} 1-\gamma_{v_1,v_2,\pi}
    &= \E_{(v_1,v_2,\pi)} \tfrac{1}{\lambda}
    \E_{\omega}\pi(x_{\omega,v_1},x_{\omega,v_2}) f'_\omega(v_1,x_{\omega,v_1}) f'_\omega(v_2,x_{\omega,v_2})
    \\
    &= \tfrac{1}{\lambda} \snorm{(G\ot I_{\Omega'})f'} = 1-\gamma\mper
    \qedhere
  \end{align*}
\end{proof}

\subsubsection{Some inequalities}
\label{sec:some-inequalities}

The following lemma show that if the expected geometric average of two
random variable is close to their expected arithmetic average, then
the expected minimum of the two variables is also close to the
expected arithmetic average.
A similar lemma is also used in proofs of Cheeger's inequality.

\begin{lemma}
\label{lem:gm-vs-min}
  Let $A,B$ be jointly-distributed random variables, taking nonnegative values.
  If $\E  \sqrt{AB} = \rho \E \tfrac12(A+B)$, then
  \begin{displaymath}
    \E \min\set{A,B} \ge \varphi(\rho)\cdot  \E \tfrac12(A+B)\mcom
  \end{displaymath}
  for $\varphi(x)=1-\sqrt{1-x^2}$.
\end{lemma}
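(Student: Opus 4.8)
The plan is to reduce the statement to a one-line application of Jensen's inequality for the Euclidean norm on $\R^2$, once the right algebraic identity is in place. First I would normalize: both sides are homogeneous of degree one in $(A,B)$ and the hypothesis is scale-invariant, so we may assume $\E\tfrac12(A+B)=1$ (the case $\E\tfrac12(A+B)=0$ forces $A=B=0$ almost surely and is trivial). Then $\E\sqrt{AB}=\rho$, and since $\min\set{A,B}=\tfrac12(A+B)-\tfrac12\abs{A-B}$, the goal becomes the bound $\E\tfrac12\abs{A-B}\le\sqrt{1-\rho^2}$.

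The key observation is the pointwise identity $\bigparen{\sqrt{AB}}^2+\bigparen{\tfrac12\abs{A-B}}^2=\tfrac14(A+B)^2$, which says that the nonnegative vector $w=\bigparen{\sqrt{AB},\,\tfrac12\abs{A-B}}\in\R^2$ has Euclidean norm exactly $\tfrac12(A+B)$ at every outcome. Consequently $\E\normt{w}=\E\tfrac12(A+B)=1$, whereas $\normt{\E w}=\sqrt{\rho^2+\bigparen{\E\tfrac12\abs{A-B}}^2}$. By convexity of the norm (Jensen), $\normt{\E w}\le \E\normt{w}=1$, and squaring gives $\rho^2+\bigparen{\E\tfrac12\abs{A-B}}^2\le 1$, i.e. $\E\tfrac12\abs{A-B}\le\sqrt{1-\rho^2}$. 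Therefore
\[
 \E\min\set{A,B}=1-\E\tfrac12\abs{A-B}\ge 1-\sqrt{1-\rho^2}=\varphi(\rho)\mcom
\]
and undoing the normalization reinstates the factor $\E\tfrac12(A+B)$ on both sides, which is the claimed inequality.

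There is essentially no obstacle here beyond spotting the identity $AB+\tfrac14(A-B)^2=\tfrac14(A+B)^2$, which packages the geometric mean and the half-difference as the two legs of a right triangle whose hypotenuse is the arithmetic mean; after that the proof is just Jensen/Cauchy--Schwarz in two dimensions. (As the surrounding text remarks, this is the same mechanism used in proofs of Cheeger's inequality, where $\E\abs{A-B}$ is controlled by $\E\sqrt{AB}$.) The only things worth double-checking are the degenerate normalization case already mentioned and the fact that $\rho\le 1$ automatically, since $\sqrt{AB}\le\tfrac12(A+B)$ pointwise, so $\sqrt{1-\rho^2}$ is well defined and $\varphi(\rho)\in[0,1]$.
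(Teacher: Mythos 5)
Your proof is correct, and it takes a route that differs from the paper's in the choice of inequality, even though both start from the same decomposition $\min\set{A,B}=\tfrac12(A+B)-\tfrac12\abs{A-B}$. The paper factors $\abs{A-B}=\bigabs{A^{1/2}-B^{1/2}}\cdot\bigabs{A^{1/2}+B^{1/2}}$ and applies Cauchy--Schwarz to the two factors, using $\E\tfrac12\bigparen{A^{1/2}\mp B^{1/2}}^2=(1\mp\rho)\,\E\tfrac12(A+B)$ to get $\E\tfrac12\abs{A-B}\le\sqrt{(1-\rho)(1+\rho)}\,\E\tfrac12(A+B)$. You instead bundle $\bigparen{\sqrt{AB},\,\tfrac12\abs{A-B}}$ into an $\R^2$-valued random vector whose Euclidean length is pointwise equal to $\tfrac12(A+B)$ (the identity $AB+\tfrac14(A-B)^2=\tfrac14(A+B)^2$), then apply the Minkowski/Jensen inequality $\normt{\E w}\le\E\normt{w}$. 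The two algebraic identities are equivalent (the paper's factorization is precisely the ``difference of squares'' version of your Pythagorean identity), but Cauchy--Schwarz between two scalar random variables and the triangle inequality for a vector-valued expectation are distinct inequalities, so this is a genuinely alternative derivation. Your version has the appealing geometric picture of geometric mean and half-difference as the legs of a right triangle with the arithmetic mean as hypotenuse, which some readers may find more memorable; the paper's version is the more standard move seen in proofs of Cheeger's inequality. Your handling of the degenerate normalization and the observation that $\rho\le 1$ by AM--GM are both appropriate sanity checks.
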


\begin{proof}
  Since $\min\set{A,B}=\tfrac12(A+B)-\tfrac12\Abs{A-B}$, it is enough
  to lower bound
  \begin{align*}
    \E \tfrac12\Abs{A-B}
&    = \E \tfrac12\Abs{A^{1/2}-B^{1/2}}\Abs{A^{1/2}-B^{1/2}}\\
&    \le \Paren{\E \tfrac12\Paren{A^{1/2}-B^{1/2}}^2\cdot
      \E \tfrac12\Paren{A^{1/2}-B^{1/2}}^2}^{1/2}\\
&    = \Bigparen{\Paren{\E \tfrac12(A+B) - \E \sqrt{AB} }
      \cdot \Paren{\E \tfrac12(A+B) +\E \sqrt{AB} } }^{1/2}
    \\
&    = \sqrt{1-\rho^2} \cdot \E \tfrac12(A+B)\mper
  \end{align*}
  The second step uses Cauchy--Schwarz.
\end{proof}

The following corollary will be useful for us to prove Cheeger's
inequality for two-player games.

\begin{corollary}
\label{cor:gm-vs-min}
  Let $A,B$ be as before. Let $Z$ be a $0/1$-valued random variable,
  jointly distributed with $A$ and $B$.
  If $\E Z \cdot \sqrt{AB}= \rho \E \tfrac12(A+B)$, then $\E Z\cdot
  \min\set{A,B}\ge \varphi(\rho)\E \tfrac12(A+B)$, for $\varphi$ as before.
\end{corollary}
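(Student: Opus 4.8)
The plan is to reduce the statement to \pref{lem:gm-vs-min} by absorbing the indicator $Z$ into the random variables. Set $A' \defeq ZA$ and $B' \defeq ZB$. Since $Z$ is $\{0,1\}$-valued we have $Z^2=Z$, hence $\sqrt{A'B'}=Z\sqrt{AB}$ and $\min\set{A',B'}=Z\min\set{A,B}$ pointwise, while $A'+B' = Z(A+B)\le A+B$ pointwise. Taking expectations, $\E\sqrt{A'B'}=\E Z\sqrt{AB}$ and $\E\min\set{A',B'}=\E Z\min\set{A,B}$, so it suffices to show $\E\min\set{A',B'}\ge\varphi(\rho)\cdot\E\tfrac12(A+B)$ under the hypothesis $\E\sqrt{A'B'}=\rho\cdot\E\tfrac12(A+B)$.

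Next I would apply \pref{lem:gm-vs-min} to the pair $(A',B')$. If $\E\tfrac12(A'+B')=0$ then $A'=B'=0$ almost surely, which forces $\rho=0$ and makes the conclusion trivial; so assume $\E\tfrac12(A'+B')>0$ and put $\rho'\defeq \E\sqrt{A'B'}\,/\,\E\tfrac12(A'+B')$. By AM--GM, $\rho'\le 1$, so \pref{lem:gm-vs-min} applies and gives $\E\min\set{A',B'}\ge\varphi(\rho')\cdot\E\tfrac12(A'+B')$. Moreover, since $A'+B'\le A+B$ pointwise we have $\E\tfrac12(A'+B')\le\E\tfrac12(A+B)$, and because the numerators coincide ($\rho'\cdot\E\tfrac12(A'+B')=\E\sqrt{A'B'}=\rho\cdot\E\tfrac12(A+B)$) this yields $\rho'\ge\rho$.

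To finish, I would rewrite $\varphi(\rho')\cdot\E\tfrac12(A'+B')=\tfrac{\varphi(\rho')}{\rho'}\cdot\rho'\,\E\tfrac12(A'+B')=\tfrac{\varphi(\rho')}{\rho'}\cdot\E\sqrt{A'B'}=\tfrac{\varphi(\rho')}{\rho'}\cdot\rho\cdot\E\tfrac12(A+B)$, and then invoke that $x\mapsto\varphi(x)/x$ is nondecreasing on $(0,1]$ --- which follows either from convexity of $\varphi$ together with $\varphi(0)=0$, or directly from the rationalized identity $\varphi(x)/x=x/(1+\sqrt{1-x^2})$ --- together with $\rho'\ge\rho$ to get $\tfrac{\varphi(\rho')}{\rho'}\ge\tfrac{\varphi(\rho)}{\rho}$. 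Combining the inequalities gives $\E Z\min\set{A,B}=\E\min\set{A',B'}\ge\varphi(\rho)\cdot\E\tfrac12(A+B)$, as claimed.

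The whole argument is essentially bookkeeping once the substitution $A'=ZA$, $B'=ZB$ is in place. The one point that needs care --- and the only real obstacle --- is the last step: replacing the weight by the indicator has \emph{two} competing effects, namely it can raise the effective ratio ($\rho'\ge\rho$) but simultaneously shrink the normalizing mass ($\E\tfrac12(A'+B')\le\E\tfrac12(A+B)$), and one must check these push in compatible directions. This is exactly what the monotonicity of $\varphi(x)/x$ encodes, so verifying that monotonicity (equivalently, that $x/(1+\sqrt{1-x^2})$ is increasing) is the crux.
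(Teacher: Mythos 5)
Your proof is correct and takes essentially the same route as the paper: both apply \pref{lem:gm-vs-min} to the absorbed pair $(ZA,ZB)$ and then close the gap using the fact that $\varphi$ is convex with $\varphi(0)=0$ (the paper writes this as $\varphi(\lambda\rho')\le\lambda\varphi(\rho')$; you phrase the identical fact as monotonicity of $\varphi(x)/x$ together with $\rho'\ge\rho$). The two formulations are interchangeable, so this is a reparametrization of the paper's argument rather than a genuinely different proof.
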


\begin{proof}
  The corollary follows from the convexity of $\varphi$.
  For notational simplicity, assume $\E\tfrac12(A+B)=1$ (by scaling).
  Let $\lambda=\E Z \cdot  \tfrac12(A+B)$.
  Write $\rho$ as a convex combination of $0$ and a number $\rho'$
  such that $\rho=\rho'\cdot \lambda+0\cdot (1-\lambda)$.
  Since $\E Z \sqrt{AB}=\rho' \cdot \lambda$, \pref{lem:gm-vs-min}
  implies $\E Z\min\set{A,B}\ge \varphi(\rho')\cdot \lambda$.
  The convexity of~$\varphi$ implies $\varphi(\rho)\le \lambda\cdot
  \varphi(\rho')+(1-\lambda)\varphi(0)=\lambda\cdot \varphi(\rho')$.
\end{proof}

\begin{lemma}
\label{lem:min-vs-max}
  Let $A,B,Z$ be jointly-distributed random variables as before ($A,B$
  taking nonnegative values and $Z$ taking $0/1$ values).
  If $\E Z \cdot \min\set{A,B}= (1-\gamma)\E \tfrac12(A+B)$, then
  \begin{displaymath}
    \frac{\E Z \cdot\min\set{A,B}}
    {\E \max\set{A,B}}
    \ge \frac{1-\gamma}{1+\gamma}
    \mper
  \end{displaymath}
\end{lemma}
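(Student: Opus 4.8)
The plan is to exploit the elementary identity $\min\set{A,B}+\max\set{A,B}=A+B$, so that controlling $\E\min\set{A,B}$ immediately controls $\E\max\set{A,B}$.

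First I would observe that, since $Z$ is $0/1$-valued, $Z\cdot\min\set{A,B}\le \min\set{A,B}$ pointwise, hence
\begin{displaymath}
  \E\min\set{A,B}\ge \E Z\cdot\min\set{A,B}=(1-\gamma)\E\tfrac12(A+B)\mper
\end{displaymath}
Next, using $\max\set{A,B}=(A+B)-\min\set{A,B}$ and the bound just derived,
\begin{displaymath}
  \E\max\set{A,B}=\E(A+B)-\E\min\set{A,B}\le \E(A+B)-(1-\gamma)\E\tfrac12(A+B)=(1+\gamma)\E\tfrac12(A+B)\mper
\end{displaymath}
Finally, dividing the hypothesis $\E Z\cdot\min\set{A,B}=(1-\gamma)\E\tfrac12(A+B)$ by this upper bound on $\E\max\set{A,B}$ gives
\begin{displaymath}
  \frac{\E Z\cdot\min\set{A,B}}{\E\max\set{A,B}}\ge\frac{(1-\gamma)\E\tfrac12(A+B)}{(1+\gamma)\E\tfrac12(A+B)}=\frac{1-\gamma}{1+\gamma}\mper
\end{displaymath}

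There is essentially no obstacle here: the only thing to be slightly careful about is the degenerate case $\E\tfrac12(A+B)=0$, which forces $A=B=0$ almost surely and makes both sides of the claimed inequality vacuous (or one can simply assume $\E\tfrac12(A+B)=1$ by scaling, as is done in the neighbouring lemmas). Otherwise the argument is just the two one-line estimates above combined with the identity $\min+\max=A+B$.
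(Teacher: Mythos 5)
Your proof is correct and is essentially the same as the paper's: both use $Z\cdot\min\{A,B\}\le\min\{A,B\}$ to lower-bound $\E\min\{A,B\}$, then exploit $\min+\max=A+B$ (the paper writes this via $\min=\tfrac12(A+B)-\tfrac12|A-B|$ and $\max=\tfrac12(A+B)+\tfrac12|A-B|$) to upper-bound $\E\max\{A,B\}$, and divide. The scaling normalization $\E\tfrac12(A+B)=1$ appears in both.
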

\begin{proof}
  For simplicity, assume $\E\tfrac12(A+B)=1$ (by scaling).
  Since $\min\set{A,B}=\tfrac12(A+B)-\tfrac12\abs{A-B}$,
  we get
  $1-\gamma\le \E \min\set{A,B} = 1-\E \tfrac12\abs{A-B}$,
  which means that $\E \tfrac12\abs{A-B}\le \gamma$.
  Since
  $\max\set{A,B}=\tfrac12(A+B)+\tfrac12\abs{A-B}$,
  it follows that
  \begin{displaymath}
    \frac{\E Z\cdot \min\set{A,B}}
    {\E \max\set{A,B}}
    = \frac{1-\gamma}{1+\E\tfrac12\abs{A-B}}
    \ge \frac{1-\gamma}{1+\gamma}
    \mper\qedhere
  \end{displaymath}
\end{proof}

\subsection{Multiplicativity of $\rval$ - proof of Lemma~\ref{lem:mult}}
%
We prove the multiplicativity of $\rval$, namely that for projection games $G,G'$,
\[ \rval(G\ot G') = \rval(G)\cdot\rval(G')\mcom
\]
using ideas that were presented in \pref{sec:basic-approach}.

In this subsection a {\em vector assignment} for $G$ with value $\rho$ is a function $f:V\times\Sigma\times\Omega \to\Rnn$ such that $\snorm{(T_v \ot \Id_\Omega)f}\le 1$ for all $v$, and $\snorm{(G \ot \Id_\Omega)f}\ge \rho$.

First note that $\rval(G\ot G') \ge \rval(G)\cdot \rval(G')$ since for any two vector assignments
$f:V\times\Sigma\to\Rnn^{\Omega}$ for $G$ and
$f':V'\times\Sigma'\to\Rnn^{\Omega'}$ for $G'$,
their tensor product
$f\ot f':V\times V'\times\Sigma\times\Sigma'\to \Rnn^{\Omega\times\Omega'}$
defined by
$f\ot f'(v,v',\beta,\beta') = f(v,\beta)\ot f(v',\beta')$ obeys
\[
\snorm{(G\ot G'\ot \Id_{\Omega\times\Omega'})f\ot f'} = \snorm{(G\ot \Id_\Omega)f}\cdot \snorm{(G'\ot \Id_{\Omega'})f'}\]
Similarly, since for each $(v,v')\in V\times V'$, $T_{v,v'} = T_v\ot T_{v'}$, so
\[
\snorm{(T_{v,v'}\ot \Id_{\Omega\times\Omega'})f\ot f'} = \snorm{(T_v\ot \Id_\Omega)f}\cdot \snorm{(T_{v'}\ot \Id_{\Omega'})f'} \le 1\cdot 1 = 1\mper \]
For the converse direction, assume for the sake of contradiction that $g:V\times V'\times\Sigma\times\Sigma'\to \Rnn^{\Omega}$ is a vector assignment for $G\ot G'$ with value $\rho > \rval(G)\cdot \rval(G')$.
Define $\Omega' = V'\times \Sigma'\times\Omega$ so that now $g$ can be viewed as a nonnegative function $g:V\times\Sigma\to \Rnn^{\Omega'}$ mapping to each $(v,\beta)$ a vector.
Using the factorization
\[
(G\ot G' \ot \Id_\Omega) =
(G\ot \Id_{V'\times\Sigma'} \ot \Id_\Omega)
(\Id_{V\times\Sigma}\ot G'\ot \Id_\Omega)
\]
and letting $h = (\Id_{V\times\Sigma}\ot G'\ot \Id_\Omega)g$ we have on one hand that
\[  \snorm {(G\ot \Id_{V'\times\Sigma'} \ot \Id_\Omega)h} = \rho\mcom
\]
and on the other hand
\[ \snorm  {(T_v\ot \Id_{V'\times\Sigma'} \ot \Id_\Omega)h}  \le \rval(G')
\]
because for each $v$, the function $\sum_\beta g(v,v',\beta,\beta)$ is a function from $V'\times \Sigma'$ to $\Rnn^{\Omega}$ that itself is a vector assignment for $G'$, thus its value is upper bounded by $\rval(G')$.
If we multiply $h$ by the scalar $\frac 1{ \rval(G')}$ we get a vector assignment for $G$, whose value is $ \rho /\rval(G')\le \rval(G)$, a contradiction.

\subsection{Proof of Main Theorem, Theorem~\ref{thm:collision-val-product}}

By \pref{thm:valplus-bound} we have
\[ \snorm{G\ot H} \le \rval(G)^2 \cdot \snorm{H} \mper \]
Let $\delta = \snorm{G}$. By \pref{thm:approximation}, $\rval(G)^2 \le \varphi(\delta) =  \frac{2\sqrt \delta}{1+\delta}$, by monotonicity of $\varphi(x) := \frac {2\sqrt x}{1+x}$ on the interval $[0,1]$.
Together,
\[ \snorm{G\ot H} \le \rval(G)^2 \cdot \snorm H \le \varphi(\snorm G)\cdot \snorm H
\]

\section{Few Repetitions --- Proof of Theorem~\ref*{thm:few-repetitions}}
\label{sec:few-reps}

\Dnote{}%

The following theorem will allow us to prove a tight bound on the
value of projection games after few parallel-repetitions
(\pref{thm:few-repetitions}).

\begin{theorem}
  \label{thm:few-reps-1}
  \label{thm:two-games}
  Let $G$ and $H$ be two projection games.
  Suppose $\lVert H\rVert^2 \le 1-\gamma$ but $\norm{G\ot H}^2\ge 1-\eta-\gamma$
  (with $\gamma$ small enough, say $\gamma\le \nfrac1{100}$).
  Then,
  \begin{displaymath}
    \lVert G\rVert^2 \ge 1-O\bigparen{\eta +\sqrt{\gamma\eta}}\mper
  \end{displaymath}
\end{theorem}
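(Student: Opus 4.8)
The plan is to reduce, via the factorization $G\otimes H=(G\otimes\Id)(\Id\otimes H)$, to a statement purely about $G$ and a well-behaved vector assignment, and then extract an honest assignment for $G$ by a refinement of the rounding behind \pref{thm:approximation}. The whole point is that the hypothesis $\lVert H\rVert^2\le 1-\gamma$ must be used to upgrade the generic square-root loss (which would only give $\lVert G\rVert^2\ge 1-O(\sqrt\eta)$ and is too weak) to the sharper geometric-mean loss $\sqrt{\gamma\eta}$.

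\textbf{Setup.} Let $f$ be an assignment with $\lVert(G\otimes H)f\rVert^2=\lVert G\otimes H\rVert^2\ge 1-\eta-\gamma$ and put $h=(\Id\otimes H)f$. Then $\lVert(G\otimes\Id)h\rVert^2=\lVert G\otimes H\rVert^2\ge 1-\eta-\gamma$, while for every $v$, $\lVert(T_v\otimes\Id)h\rVert^2=\lVert(T_v\otimes H)f\rVert^2\le\lVert T_v\otimes H\rVert^2\le\lVert H\rVert^2\le 1-\gamma$, where $\lVert T_v\otimes H\rVert=\lVert H\otimes T_v\rVert\le\lVert H\rVert$ by \pref{claim:monotone}. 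One checks $h\le 1$ pointwise (since $f$ is an assignment, $\sum_{\alpha'\mapsfrom\beta'}f(v,v',\beta,\beta')\le 1$), and by \pref{lem:deterministic} we may assume $h$ is a deterministic vector assignment; this leaves every $\lVert(T_v\otimes\Id)h\rVert^2$ unchanged and only increases $\lVert(G\otimes\Id)h\rVert^2$, and preserves $h\le 1$. For deterministic $h$ one has $\lVert h\rVert^2=\lVert(T\otimes\Id)h\rVert^2=\E_v\lVert(T_v\otimes\Id)h\rVert^2\le 1-\gamma$, and also $\lVert h\rVert^2\ge\lVert(G\otimes\Id)h\rVert^2\ge 1-\eta-\gamma$ (because $\lVert(G\otimes\Id)h\rVert\le\lVert(T\otimes\Id)h\rVert$ for deterministic inputs in any projection game). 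So it suffices to prove: if $h$ is a deterministic vector assignment with $h\le 1$, $\lVert(G\otimes\Id)h\rVert^2\ge 1-\eta-\gamma$, and $\lVert(T_v\otimes\Id)h\rVert^2\le 1-\gamma$ for all $v$ (hence $\lVert h\rVert^2\in[1-\eta-\gamma,\,1-\gamma]$), then $\lVert G\rVert^2=\val(G_{sym})\ge 1-O(\eta+\sqrt{\gamma\eta})$.

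\textbf{From the vector assignment to an assignment for $G$.} For a slice $\omega$ write $g_\omega(v)=h_\omega(v,x^\omega_v)\in[0,1]$, with $x^\omega$ the assignment consistent with the deterministic slice $h_\omega$, and $W^\omega(v)=\Pr_{(v',\pi)\mid v}[(x^\omega_v,x^\omega_{v'})\text{ consistent}]\in[0,1]$, so that $\E_vW^\omega(v)=\val(G_{sym};x^\omega)$ and $\E_vg_\omega(v)^2=\lVert h_\omega\rVert^2$. By AM--GM on $g_\omega(v_1)g_\omega(v_2)$ and the symmetry of $G_{sym}$,
\begin{displaymath}
  \lVert Gh_\omega\rVert^2\le\E_v\,g_\omega(v)^2W^\omega(v)=\lVert h_\omega\rVert^2\,\val(G_{sym};x^\omega)+\mathrm{Cov}_v\!\big(g_\omega(v)^2,\,W^\omega(v)\big).
\end{displaymath}
Integrating over $\omega$, using $\val(G_{sym};x^\omega)\le\lVert G\rVert^2$ for every $\omega$ and $\lVert h\rVert^2\le 1-\gamma$,
\begin{displaymath}
  1-\eta-\gamma\le\lVert(G\otimes\Id)h\rVert^2\le\lVert G\rVert^2(1-\gamma)+\mathcal E,\qquad \mathcal E:=\int_\Omega\mathrm{Cov}_v\!\big(g_\omega(v)^2,W^\omega(v)\big)\,d\omega,
\end{displaymath}
so $\lVert G\rVert^2\ge\big(1-\eta-\gamma-\mathcal E\big)/(1-\gamma)$, and it remains to show $\mathcal E=O(\sqrt{\gamma\eta})$.

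\textbf{The error term --- the main obstacle.} To bound $\mathcal E$ one controls two separate quantities. First, the ``$G$-inconsistency'' of $h$ is $O(\eta)$: from $\lVert Gh_\omega\rVert^2\le\E_vg_\omega(v)^2W^\omega(v)$ we get $\E_vg_\omega(v)^2(1-W^\omega(v))\le\lVert h_\omega\rVert^2-\lVert Gh_\omega\rVert^2$, hence $\int_\Omega\E_vg_\omega(v)^2(1-W^\omega(v))\,d\omega\le\lVert h\rVert^2-\lVert(G\otimes\Id)h\rVert^2\le\eta$. Second, the ``non-flatness'' of the slices is where $\gamma$ enters: the bound $\lVert(T_v\otimes\Id)h\rVert^2\le 1-\gamma$ together with $h\le 1$ and $\lVert h\rVert^2\ge 1-\eta-\gamma$ forces the slices, suitably weighted and after the proper normalization of the ambient measure space $\Omega$, to be close to flat, and from this one should bound the part of $\mathrm{Cov}_v(g_\omega(v)^2,W^\omega(v))$ not already absorbed by the $G$-inconsistency by a quantity of order $\gamma$ on integration. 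A Cauchy--Schwarz step in $\omega$ then combines these ``$O(\eta)$'' and ``$O(\gamma)$'' contributions into $\mathcal E=O(\eta+\sqrt{\gamma\eta})$. Making the non-flatness estimate rigorous is the crux and the main difficulty: unlike in the expanding case (\pref{thm:approx-expander}), where the spectral gap controls how far a slice is from constant, here one has no expansion, and the control must be routed through the interplay of the bound $\lVert(T_v\otimes\Id)h\rVert^2\le 1-\gamma$ with the threshold-rounding and correlated-sampling machinery of \pref{lem:threshold-rounding} and \pref{lem:correlated-sampling} (which is also precisely what lets one beat the square-root loss of \pref{thm:approximation}). Granting $\mathcal E=O(\eta+\sqrt{\gamma\eta})$, and using $\gamma\le\nfrac1{100}$ so that $1/(1-\gamma)\le\nfrac{100}{99}$,
\begin{displaymath}
  \lVert G\rVert^2\ \ge\ \frac{1-\eta-\gamma-\mathcal E}{1-\gamma}\ =\ 1-\frac{\eta+\mathcal E}{1-\gamma}\ \ge\ 1-O\bigparen{\eta+\sqrt{\gamma\eta}},
\end{displaymath}
which is the assertion of the theorem.
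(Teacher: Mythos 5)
Your setup matches the paper exactly: you take $f$ optimal for $G\otimes H$, pass to $h=(\Id\otimes H)f$, observe $\snorm{(G\otimes\Id)h}\ge 1-\eta-\gamma$ and $\snorm{(T_v\otimes\Id)h}\le 1-\gamma$, note $0\le h\le 1$, and derandomize by \pref{lem:deterministic}. From that point you diverge into a covariance decomposition, and this is where the proof stops being a proof: the bound $\mathcal E=O(\eta+\sqrt{\gamma\eta})$ is precisely the hard part, and you explicitly punt on it (``Granting $\mathcal E=\ldots$''), gesturing at the same threshold-rounding and correlated-sampling machinery the paper actually uses without carrying it out. So there is a genuine gap.

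Worse, the covariance route as framed is shaky. Your identity $\E_v g_\omega(v)^2 W^\omega(v)=\snorm{h_\omega}\,\val(G_{sym};x^\omega)+\mathrm{Cov}_v(g_\omega(v)^2,W^\omega(v))$ is fine, but both summands on the right depend on how the partial assignment $x^\omega$ is extended to vertices where $g_\omega(v)=0$, while the left-hand side does not. With an arbitrary (bad) extension, $\mathcal E$ can be $\Theta(1)$ even when $\eta=0$: take $G$ a disjoint union of two satisfiable halves, one slice supported on each half, and choose $x^\omega$ to be maximally inconsistent off its support; then $\mathrm{Cov}_v\approx\tfrac14$ per slice while $\eta=0$, and your chain $\snorm G\ge(1-\eta-\gamma-\mathcal E)/(1-\gamma)$ degenerates to a vacuous bound. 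To make the covariance bound work one would have to argue that some clever choice of extensions forces $\mathcal E$ small, which is essentially equivalent to the theorem itself — no leverage has been gained. The weighted-inconsistency bound $\int_\Omega\E_v g_\omega(v)^2(1-W^\omega(v))\,d\omega\le\eta$ that you do prove is correct and relevant, but it only controls the part of $\mathrm{Cov}$ weighted by $g_\omega^2$; the remainder is exactly the part you never bound.

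What the paper actually does is structurally different. It rounds $h$ to a $0/1$-valued function $h'$ by thresholding each slice at a \emph{random threshold $\tau$ drawn from $[\nfrac1{10},\nfrac9{10}]$}, not $[0,1]$. This restricted range is the essential new idea that beats the square-root loss: the ``boundary'' indicator $B_\omega=\Ind[h_\omega^2\in[\nfrac1{10},\nfrac9{10}]]$ satisfies $B_\omega\le 100\,h_\omega(1-h_\omega)$, so $\snorm{B_\omega}\le 100\,\gamma_\omega$ where $\gamma_\omega=\normo{h_\omega}-\snorm{h_\omega}$, and $\int_\Omega\gamma_\omega\,d\omega\le\gamma$. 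The hybrid inequality \pref{eq:hybrid-bound} together with a Cauchy--Schwarz in $v$ bounds the per-slice Laplacian increase by $O(\eta_\omega+\sqrt{\eta_\omega}\norm{B_\omega})$, and a second Cauchy--Schwarz in $\omega$ turns $\int\sqrt{\eta_\omega\gamma_\omega}$ into $\sqrt{\eta\gamma}$ — this is exactly where the product $\sqrt{\gamma\eta}$ is born, and it is absent from your argument. The rounded $h'$ is then fed into the correlated-sampling step (\pref{lem:few-round-2}), which needs $\snorm{h'_{v,*}}\ge\nfrac13$ on all but $O(\eta)$ of the vertices — a property guaranteed by the restricted threshold range and the pointwise bound $h\le 1$. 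None of this machinery appears in your proposal beyond a name-check; supplying it is the whole content of the theorem.
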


Let us first explain how the bound above improves over the bounds in
the previous sections.
In the notation of the above theorem, \pref{thm:valplus-bound}
implies that $\rval(G)^2= \frac{\lVert G\otimes H\rVert^2}{\lVert H\rVert^2}\ge 1-O(\eta)$.
Thus, $\lVert G\rVert^2\ge 1-O(\sqrt \eta)$ by \pref{thm:approximation}.
We see that this bound is worse than the above bound whenever $\gamma$ is close to $0$.

Before proving the theorem, we will show how it implies
\pref{thm:few-repetitions} (parallel-repetition bound for few
repetitions).

\begin{proof}[Proof of \pref{thm:few-repetitions}]
  Let us first reformulate \pref{thm:two-games}.
  (The above formulation reflects our proof strategy for
  \pref{thm:two-games} but for our current application, the following
  formulation is more convenient.)
  Let $G$ and $H$ be two projection games.
  Suppose $\lVert G\rVert^2\le 1-\e$ and $\lVert H\rVert^2\le 1-t \e$ for some $\e>0$ and
  $1\le t\ll 1/\e$.
  Then,  \pref{thm:two-games} shows that
  \begin{displaymath}
    \lVert G\ot H\rVert^2\le 1-\Big(t+\Omega\big(\tfrac1 t\big)\Big)\cdot \e
  \end{displaymath}
  (Here, we use that for $\gamma=t\e$ and $\eta=\Omega(1/t)\e$, we have
  $\eta+\sqrt{\eta\gamma}=\Omega(\e)$.)
  From this bound, we can prove by induction on $k$ that $\lVert  G^{\ot
    k}\rVert^2 \le 1-\cramped{\Omega(k)^{1/2}}\cdot \e$ for $k\ll 1/\e^2$.
  Concretely, let $\set{t(k)}_{k\in\N}$ be the sequence such that $\lVert G^{\ot k}\rVert^2 =
  1-t(k)\cdot \e$.
  The above formulation of \pref{thm:two-games} implies $t(k+1)\ge
  t(k)+\Omega(1/t(k))$ (as long as $t(k)\ll 1/\e$).
  Since the sequence increases monotonically, it follows that
  $t(k+1)^2\ge t(k)^2+\Omega(1)$ (by multiplying the recurrence relation with $t(k+1)$ on both
  sides).
  Hence, as long as $t(k)\ll 1/\e$, we have $t(k)^2=\Omega(k)$ and
  $t(k)=\cramped{\Omega(k)^{1/2}}$, as desired.
\end{proof}

The proof of \pref{thm:two-games} follows a similar structure as the
proofs in \pref{sec:basic-approach} and \pref{sec:approx-gen}.
If $\norm{H}^2\le 1-\gamma$ and $\snorm{G}\ge 1-\eta-\gamma$, then
following the proof of \pref{thm:valplus-bound} we can construct a
function for $G$ that satisfies certain properties.
(In particular, the function  certifies $\rval(G)^2\ge 1-O(\eta)$).
The challenge is to construct an assignment for $G$ from this function.
This construction is the content of the following two lemmas,
\pref{lem:few-round-1} and \pref{lem:few-round-2}.
\Dnote{}%

\Dnote{}%

\medskip

In the following $G$ will be the ``square'' of a projection game $G_0$ (formally, the operator $G=G_0^TG_0$).
Let $\Assign_{W,\Sigma}(\Omega)$ be the set of nonnegative functions $f\in L(W\times \Sigma\times \Omega)$ such that every slice $f_{\omega}$ is deterministic, i.e., for every $v\in W$, there exists at most one label $\beta$ with $f_{\omega}(v,\beta)>0$.
We write $f_{v,*}$ to denote the function $(T_v \otimes \Id_\Omega )f$.
\begin{lemma}
  \label{lem:few-round-1}
  Let $f\in \Assign_{W,\Sigma}(\Omega)$ be a $[0,1]$-valued function
  with $\snorm{f_{u,*}}\le 1-\gamma$ for all $u\in W$ and $\normo{f}=1$.
  Suppose every slice of $f$ is a deterministic fractional assignment.
  Suppose $\iprod{f,(G\ot I_\Omega)f}\ge 1-\eta-\gamma$ and that
  $\gamma$ is small enough (say, $\gamma<\nfrac1{100}$).
  Then, there exists a $0/1$-valued function
  $f'\in\Assign_{W,\Sigma}(\Omega')$ with
  \begin{displaymath}
    \iprod{f',(G\ot I_{\Omega'})f'}
    \ge \bigparen{1-O(\eta +\sqrt{\eta\gamma})}\snorm{f'}\mcom
  \end{displaymath}
  and $\snorm{f'_{u,*}}\ge \nfrac1 3$ for all but an $O(\eta)$ fraction of vertices in $W$.
\end{lemma}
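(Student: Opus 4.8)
The plan is to obtain $f'$ by a threshold rounding of each slice of $f$, exactly as in \pref{lem:threshold-rounding}, and to control the resulting loss using the new hypothesis $\snorm{f_{u,*}}\le 1-\gamma$. First I would record the two elementary consequences of the hypotheses. Since $G=G_0^TG_0$, we have $\iprod{f,(G\ot\Id_\Omega)f}=\snorm{(G_0\ot\Id_\Omega)f}$, and because every slice of $f$ is deterministic, $\snorm{(G_0\ot\Id_\Omega)f}\le\snorm{f}$ (expand the square and use Jensen together with the fact that each $\beta$ has at most one $\alpha$ with $\alpha\mapsfrom\beta$). Hence, using also $\snorm{f}=\E_u\snorm{f_{u,*}}\le 1-\gamma$, the quantity $\snorm{f}$ is pinned to the window $[1-\eta-\gamma,\,1-\gamma]$, and the ``deficit'' is small: writing the collision value in the symmetrized form \pref{eq:sym} for $(G_0)_{\mathrm{sym}}$ (the symmetrized constraint graph of $G_0$ on $W$) and putting $a=f_\omega(v_1,x_\omega(v_1))^2$, $b=f_\omega(v_2,x_\omega(v_2))^2$ and $z$ for the symmetrized-constraint indicator (with $a=z=0$ when $f_\omega$ assigns nothing to $v_1$), where $(v_1,v_2,\pi)\sim(G_0)_{\mathrm{sym}}$ and $\omega\sim\Omega$, one gets $\E\bigbrac{\tfrac12(a+b)-z\sqrt{ab}}=\snorm{f}-\iprod{f,(G\ot\Id_\Omega)f}\le\eta$; since this integrand is $\tfrac12(\sqrt a-\sqrt b)^2$ when $z=1$ and $\ge 0$ always, both $\E\brac{z\tfrac12(\sqrt a-\sqrt b)^2}\le\eta$ and $\E\brac{(1-z)\tfrac12(a+b)}\le\eta$.

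The coverage claim uses only the window for $\snorm{f}$. If a $\delta$-fraction of vertices $u$ had $\snorm{f_{u,*}}<\tfrac13$, then $\snorm{f}=\E_u\snorm{f_{u,*}}<\tfrac\delta3+(1-\delta)(1-\gamma)$, which together with $\snorm{f}\ge 1-\eta-\gamma$ forces $\delta<\eta/(\tfrac23-\gamma)=O(\eta)$ (using $\gamma<\tfrac1{100}$). The threshold rounding $f'(v,\beta,\omega,\tau)=\Ind[f(v,\beta,\omega)^2>\tau]$ over $\Omega'=\Omega\times[0,1]$ keeps every slice deterministic and satisfies $\snorm{f'_{u,*}}=\snorm{f_{u,*}}$ exactly, so this immediately yields ``$\snorm{f'_{u,*}}\ge\tfrac13$ for all but an $O(\eta)$-fraction of $u$''; if an exact normalization of $\snorm{f'_{u,*}}$ is needed one can enlarge $\Omega'$ by one point per vertex, as at the start of \pref{lem:correlated-sampling}.

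The crux is the quality bound. For the uniform threshold rounding, $\snorm{f'}=\snorm{f}$ and, via $\E_\tau\Ind[a>\tau]\Ind[b>\tau]=\min\{a,b\}$, $\iprod{f',(G\ot\Id_{\Omega'})f'}=\snorm{(G_0\ot\Id_{\Omega'})f'}=\E\brac{z\min\{a,b\}}$, so the loss is $\E\brac{\tfrac12(a+b)-z\min\{a,b\}}=\E\brac{z\tfrac12\abs{a-b}}+\E\brac{(1-z)\tfrac12(a+b)}$, whose second term is $\le\eta$ by the deficit bound. It remains to show $\E\brac{z\tfrac12\abs{a-b}}=O(\eta+\sqrt{\eta\gamma})$. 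The naive estimate — Cauchy--Schwarz as in \pref{lem:gm-vs-min}, $\E\brac{z\abs{a-b}}\le\sqrt{\E\brac{z(a-b)^2}}\le\sqrt{8\eta}$ — only gives the Cheeger-type bound $O(\sqrt\eta)$, which is not enough, and it is genuinely lossy for the plain rounding since a coherent $\sqrt\eta$-scale fluctuation of $f$ across constraints saturates it. The improvement must come from the per-vertex hypothesis: $\snorm{f_{u,*}}\le 1-\gamma$ for every $u$ together with $\E_u\snorm{f_{u,*}}\ge 1-\eta-\gamma$ forces $\snorm{f_{v,*}}\in[1-\gamma-O(\eta),\,1-\gamma]$ for all but an $O(\eta)$-fraction of $v$, hence $\mathrm{Var}_\omega\bigparen{f_\omega(v,x_\omega(v))^2}\le(1-\gamma)\gamma$ — that is, the values $a,b$ concentrate (in $L_2$, at scale $\sqrt\gamma$) in a window just below $1$. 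I expect the clean way to exploit this is to draw the rounding level not uniformly from $[0,1]$ but from a measure supported on a dilate of that window, and then to bound $\E\brac{z\tfrac12\abs{a-b}}$ by splitting into the pairs that straddle the threshold — whose measure is small precisely because of the concentration, at cost $O(\sqrt{\eta\gamma})$ after optimizing the window width against the two bounds $\E\brac{z(a-b)^2}\le 8\eta$ and $\E\brac{(1-a)+(1-b)}\le 2(\eta+\gamma)$ — and the rest, where $\abs{a-b}$ is controlled by the window width. Carrying out and optimizing this trade-off (in the spirit of the near-perfect-expansion Cheeger argument of \cite{Steurer10c}) is the main obstacle, and is what produces the mixed term $\sqrt{\eta\gamma}$ alongside the leftover additive $O(\eta)$; combining it with the two bounds above and the coverage claim gives the lemma.
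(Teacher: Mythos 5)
Your reduction of the lemma to bounding the quantity $\E\brac{\tfrac12(a+b)-z\min\{a,b\}}$ (after uniform threshold rounding), the coverage claim via $\snorm{f}\in[1-\eta-\gamma,1-\gamma]$, and the observation that plain Cauchy--Schwarz only gives the Cheeger-type $O(\sqrt\eta)$ are all correct and match the paper's preliminaries. But the crux of the lemma --- the step you correctly flag as ``the main obstacle'' --- is not carried out, and the direction you sketch for it does not work.

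First, the concentration picture is off. The hypothesis $\snorm{f_{u,*}}\le 1-\gamma$ together with the deficit bound controls $\normo{f}-\snorm{f}$, i.e.\ the quantity $\E_{\omega,u}\,h_\omega(u)\bigparen{1-h_\omega(u)}$ is $O(\eta+\gamma)$ (the paper's $\int\gamma_\omega$). That says the \emph{per-slice} values $h_\omega(u)$ cluster near $\{0,1\}$, not that $h_\omega(u)^2$ clusters near $1-\gamma$. Your ``$L_2$-concentration at scale $\sqrt\gamma$ in a window just below $1$'' is consistent with $h_\omega(u)^2$ being split between $0$ and $1$ with mass $\gamma$ at $0$ --- the two-point distribution saturates your variance bound and has no mass near $1-\gamma$. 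So the proposed remedy --- drawing $\tau$ from a measure supported on a dilate of $[1-\gamma-O(\eta),1-\gamma]$ --- is aimed at the wrong region: a narrow threshold window near $1$ would zero out essentially everything, breaking the coverage claim $\snorm{f'_{u,*}}\ge\nfrac13$. Second, and more to the point, it is not possible to deduce $\E\brac{z\tfrac12\abs{a-b}}=O(\eta+\sqrt{\eta\gamma})$ from only the two scalar facts $\E\brac{z(a-b)^2}\le O(\eta)$ and $\E\brac{(1-a)+(1-b)}\le O(\eta+\gamma)$: taking $z\equiv1$, $a\equiv1$, $b\equiv1-\e$ gives $\eta\approx\e^2$, $\gamma\approx\e$, but $\E z\abs{a-b}=\e\gg\e^2+\e^{3/2}$. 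So the uniform-threshold loss $\E z\abs{a-b}$ is the wrong quantity to bound.

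The paper avoids both problems by a different mechanism. It thresholds uniformly over the fixed constant window $[\nfrac1{10},\nfrac9{10}]$ (far from both endpoints), so the loss for a pair is $0$ whenever both of $h_\omega(u)^2,h_\omega(u')^2$ lie on the same side of that window --- in particular in the benign case $a=1,b=1-\e$ above. The rounding loss is then bounded by the hybrid inequality \pref{eq:hybrid-bound}, which charges either to $(h_\omega(u)-h_\omega(u'))^2$ (the Laplacian energy $\eta_\omega$) or to the indicator $B_\omega$ of the ``middle zone'' $h_\omega^2\in[\nfrac1{10},\nfrac9{10}]$ times $\abs{h_\omega(u)^2-h_\omega(u')^2}$. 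The crucial numeric fact is $B_\omega\le 100\,h_\omega(1-h_\omega)$ pointwise, giving $\snorm{B_\omega}\le 100\gamma_\omega$, and then Cauchy--Schwarz on the cross-term yields $O(\sqrt{\eta_\omega\gamma_\omega})$ per slice, which integrates to $O(\eta+\sqrt{\eta\gamma})$. Your plan is missing this window-plus-middle-zone decomposition, which is the source of the mixed term; without it the argument cannot close.
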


\begin{proof}
  Since $\snorm{f_{u,*}}\le 1-\gamma$ for all $u\in W$, the condition
  $\iprod{f,(G\ot I_\Omega) f}\ge 1-\eta-\gamma$ implies that
  $\snorm{f_{u,*}}\ge 0.9$ for all but $O(\eta)$ fraction of vertices $u\in W$.

  Let $\Omega'=\Omega\times [\nfrac1{10},\nfrac{9}{10}]$ (with the
  uniform measure on the interval $[\nfrac1{10},\nfrac{9}{10}]$).
  Let $f'\from W\times \Sigma\times \Omega'\to \bits$ be such that for all $u\in W$, $\alpha\in\Sigma$, $\omega\in \Omega$, $\tau\in[\nfrac 1{10},\nfrac9{10}]$,
  \begin{displaymath}
    f'_{\omega,\tau}(u,\alpha) =
    \begin{cases}
      1 & \text{ if $f_\omega(u,\alpha)^2>\tau$\mcom}\\
      0 & \text{ otherwise.}
    \end{cases}
  \end{displaymath}
  It follows that $\snorm{f'_{u,*}}\ge 1/3$ for all but $O(\eta)$
  vertices $u\in W$ (using that $\snorm{f_{u,*}}\ge 0.9$ for all but $O(\eta)$ vertices).
  We see that $f_\omega(u,\alpha)^2-\nfrac1{10}\le \E_\tau f'_{\omega,\tau}(u,\alpha)^2\le \nfrac{10}9 \,f_\omega(u,\alpha)^2 $.
  Let $B_\omega\from W\times \Sigma\to \R$ be the $0/1$ indicator
  function of the event $\set{f(u,\alpha)^2\in [\nfrac 1{10},\nfrac9{10}]}$.
  Then,
  \begin{align}
    \E_\tau \Paren{f'_{\omega,\tau}(u,\alpha)-f'_{\omega,\tau}(u',\alpha')}^2
    &\le 2\cdot\Bigparen{f_\omega(u,\alpha)-f_\omega(u',\alpha')}^2 \notag\\
    & + 2\Bigparen{B_\omega(u,\alpha)+B_\omega(u',\alpha')}\cdot
    \Abs{f_\omega(u,\alpha)^2-f_\omega(u',\alpha')^2}\label{eq:hybrid-bound}
  \end{align}
  We can see that the left-hand side is always at most
  $2\abs{f_\omega(u,\alpha)^2-f_\omega(u',\alpha')^2}$.
  Hence, the inequality holds if $B_\omega(u,\alpha)+B_\omega(u',\alpha')\ge 1$.
  Otherwise, if $B_\omega(u,\alpha)+B_\omega(u',\alpha')=0$, then the left-hand
  side is either $0$ or $1$. In both cases, the left-hand side is
  bounded by $ 2\paren{f_\omega(u,\alpha)-f_\omega(u',\alpha')}^2$.

  For every $\omega\in\Omega$, let $x_\omega\in\Sigma^V$ be an assignment
  consistent with the fractional assignment $f_\omega$.
  Let $h_\omega\from W\to [0,1]$ be the function
  $h_\omega(u)=f_\omega(u,x_{\omega,u})$.
  Similarly, $h'_{\omega,\tau}(u)=f'_{\omega,\tau}(u,x_{\omega,u})$.
  Let $G_\omega$ be the linear operator on functions on $W$, so that for every $g\from W\to\R$
  \begin{displaymath}
    \iprod{g,G_\omega g}
    = \E_{(u,v,\pi)\sim G} \pi(x_{\omega,u},x_{\omega,v}) \cdot g(u)\cdot g(v)
    \mper
  \end{displaymath}
  (As a graph, $G_\omega$ corresponds to the constraint
  graph of $G$, but with some edges deleted.)
  Let $L_\omega$ be the corresponding Laplacian, so that for all $g\from W\to \R$
  \begin{displaymath}
    \iprod{g,L_\omega g}
    = \E_{(u,v,\pi)\sim G} \pi(x_{\omega,u},x_{\omega,v}) \cdot \tfrac12 \bigparen{g(u)- g(v)}^2
    \mper
  \end{displaymath}
  With these definitions, $\iprod{f_\omega,G
    f_\omega}=\iprod{h_\omega,G_\omega h_\omega}$ and
  $\iprod{f'_{\omega,\tau}, G f'_{\omega,\tau}}=\iprod{h'_{\omega,\tau},G_\omega h'_{\omega,\tau}}$.
  We also use the short-hand, $B_\omega(u)=B_\omega(u,x_{\omega,u})$.
  With this setup, we can relate
  $\E_{\tau}\iprod{h'_{\omega,\tau},L_\omega,h'_{\omega,\tau}}$ and
  $\iprod{h_\omega,L_\omega,h_\omega}$,
  \begin{align*}
    &\E_{\tau} \iprod{h'_{\omega,\tau},L_\omega h'_{\omega,\tau}} \\
    &\le 2     \iprod{h_{\omega},L_\omega h_{\omega}}
    +  \E_{(u,v,\pi)\sim G} \bigparen{B_\omega(u)+B_\omega(v)}
    \cdot \pi(x_{\omega,u},x_{\omega,v})\cdot \Abs{h_\omega(u)^2-h_\omega(v)^2}
    \quad\by{\pref{eq:hybrid-bound}}\\
    & \le 2     \iprod{h_{\omega},L_\omega,h_{\omega}}
    + 10 \iprod{h_\omega,L_\omega h_\omega}^{1/2} \cdot \norm{B_\omega}
    \quad\using{Cauchy--Schwarz}\mper
  \end{align*}
  Let $M_\omega$ be the linear operator on functions on $W$ with the following quadratic form,
  \begin{displaymath}
    \iprod{h_\omega,M_\omega h_\omega} = \E_{(u,v,\pi)\sim G} (1-\pi(x_u,x_v))
    \tfrac12\bigparen{h_\omega(u)^2+h_\omega(v)^2}
  \end{displaymath}
  Let $L'_\omega=L_\omega+M_\omega$.
  The following identity among these operators holds
  \begin{equation}
    \label{eq:laplacian-identity}
    \iprod{h_\omega,G_\omega h_\omega} = \snorm{h_\omega} - \iprod{h_\omega,L'_\omega h_\omega}
    = \snorm{h_\omega} - \iprod{h_\omega,L_\omega h_\omega} - \iprod{h_\omega,M_\omega h_\omega}.
  \end{equation}
  Let $\gamma_\omega=\normo{h_\omega}-\snorm{h_\omega}$,
  and let $\eta_\omega=\iprod{h_\omega ,L'_\omega  h_\omega}$.
  We see that $\E_{\tau}\iprod{h'_{\omega,\tau},M_\omega
    h'_{\omega,\tau}}\le 2 \iprod{h_\omega,M_\omega h_\omega}$.
  Thus, $\E_\tau \iprod{h'_{\omega,\tau},L'_\omega
    h'_{\omega,\tau}}\le O(\eta_\omega + \sqrt{\eta_\omega}\cdot
  \norm{B_\omega})$ (using \pref{eq:hybrid-bound}).

  Next, we claim that $\snorm{B_\omega} = O(\gamma_\omega)$.
  On the whole domain $W$, we have $B_\omega\le 100 (1-h_\omega)h_\omega=100 (h_\omega -h_\omega^2)$.
  So, $\snorm{B_\omega} \le 100 (\normo{h_\omega}-\snorm{h_\omega})=100\gamma_\omega$.
  Hence,
  \begin{equation}
    \label{eq:laplacian-bound}
    \E_\tau \iprod{h'_{\omega,\tau},L'_\omega
      h'_{\omega,\tau}}\le O(\eta_\omega +
    \sqrt{\eta_\omega\gamma_\omega}).
  \end{equation}
  Using \pref{eq:laplacian-identity} and the relation between $G$ and
  $G_\omega$, we can integrate \pref{eq:laplacian-bound} over $\Omega$,
  \begin{align*}
    \iprod{f',(G\ot I_{\Omega'})f'}
    =\int_\Omega \E_{\tau}\iprod{h'_{\omega,\tau},G h'_{\omega,\tau}} \, d\omega
    &= \snorm{f'} - O(1)\int_\Omega \eta_\omega + \sqrt{\eta_\omega\gamma_\omega} \, d\omega\\
    &= \snorm{f'} - O(\eta + \sqrt{\eta\gamma})\mper
  \end{align*}
  The last step uses that $\int_\omega \eta_\omega =
  \snorm{f}-\iprod{f,(G\ot I_\Omega)f}\le \eta$ and $\int
  \gamma_\omega = \normo{f}-\snorm{f}\le \gamma$, as well as
  Cauchy--Schwarz.
  Since $\snorm{f'}\ge 0.9-\eta-\gamma$ (using that $\norm{f}\ge
  1-\eta-\gamma$), we see that $\iprod{f',(G\ot I_{\Omega'})f'}\ge (1-O(\eta+\sqrt{\eta\gamma}))\snorm{f'}$.
\end{proof}

\begin{lemma}
\label{lem:few-round-2}
  Let $f\in\Assign_{W,\Sigma}(\Omega)$ be a $0/1$-valued function with
  $\norm{f_{u,*}}\ge \nfrac13$ for all but $O(\e)$ vertices, and
  $\iprod{f,(G\ot I_\Omega)f}= (1-\e)\snorm{f}$.
  Then,
  \begin{displaymath}
    \val(G)\ge 1-O(\e)\mper
  \end{displaymath}
\end{lemma}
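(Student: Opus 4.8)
The plan is to extract a (randomized) assignment for the symmetrized game of $G_0$ by correlated sampling, as in \pref{lem:correlated-sampling}, but tracking parameters carefully enough to lose only $O(\e)$. Let $G_{\mathrm{sym}}$ denote the symmetrized constraint graph of $G_0$ (\pref{def:G2}), a constraint graph on $W$ with vertex–marginal $\mu$ and with $\snorm{G_0}=\val(G_{\mathrm{sym}})$; since $G=G_0^TG_0$, producing an assignment $x$ with $\val(G_{\mathrm{sym}};x)\ge 1-O(\e)$ is exactly the claim $\val(G)\ge 1-O(\e)$. First I would unpack the hypotheses. As in the proof of \pref{lem:few-round-1}, $\iprod{f,(G\ot I_\Omega)f}=\int_\Omega\snorm{G_0f_\omega}\,d\omega$ and $\snorm f=\int_\Omega\snorm{f_\omega}\,d\omega$, and since each slice $f_\omega$ is $0/1$-valued and deterministic it is a partial assignment: the indicator of a set $S_\omega\sse W$ together with a labeling $x_\omega$ (extended to all of $W$ arbitrarily). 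By \pref{def:G2} and \pref{eq:sym},
\[
  \snorm{G_0f_\omega}=\Prob[(v_1,v_2,\tau)\sim G_{\mathrm{sym}}]{v_1,v_2\in S_\omega\text{ and }(x_{\omega,v_1},x_{\omega,v_2})\in\tau}\mcom\qquad \snorm{f_\omega}=\mu(S_\omega)\mcom
\]
and $\snorm{f_{u,*}}=\Prob[\omega]{u\in S_\omega}=:p_u$ (here I normalize $\Omega$ to be a probability space, which affects neither $\e$ nor, up to an absolute constant, the threshold $\nfrac13$; so $p_u\le 1$ and $\snorm f=\Ex[v]{p_v}\le 1$). Thus $p_u\ge\nfrac19$ for all but an $O(\e)$ fraction of $u\in W$; call the exceptional vertices \emph{bad}, and note that since the marginals of both endpoints of a random edge of $G_{\mathrm{sym}}$ equal $\mu$, a random edge is incident to a bad vertex with probability $O(\e)$.

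Next I would run the correlated-sampling procedure of \pref{lem:correlated-sampling}: draw $\omega(1),\omega(2),\dots$ i.i.d.\ from $\Omega$, and let $X_v$ be the label assigned to $v$ by the first slice $f_{\omega(r)}$ covering $v$ (arbitrary if $p_v=0$). Exactly as there, for an edge $(v_1,v_2,\tau)$ of $G_{\mathrm{sym}}$,
\[
  \Prob[X]{(X_{v_1},X_{v_2})\in\tau}\ \ge\ \frac{\Ex[\omega]{\Ind[(x_{\omega,v_1},x_{\omega,v_2})\in\tau]\cdot\Ind[v_1\in S_\omega]\cdot\Ind[v_2\in S_\omega]}}{\Ex[\omega]{\max\set{\Ind[v_1\in S_\omega],\Ind[v_2\in S_\omega]}}}\mper
\]
Applying \pref{lem:min-vs-max} to each edge with $A=\Ind[v_1\in S_\omega]$, $B=\Ind[v_2\in S_\omega]$, $Z=\Ind[(x_{\omega,v_1},x_{\omega,v_2})\in\tau]$, and $\gamma_{v_1,v_2,\tau}$ defined by $\Ex[\omega]{Z\min\set{A,B}}=(1-\gamma_{v_1,v_2,\tau})\cdot\tfrac12(p_{v_1}+p_{v_2})$, the right-hand side is at least $\varphi'(\gamma_{v_1,v_2,\tau})$ for $\varphi'(x)=\tfrac{1-x}{1+x}$; moreover $\gamma_{v_1,v_2,\tau}\in[0,1]$ since $\Ex[\omega]{Z\min\set{A,B}}\le\min\set{p_{v_1},p_{v_2}}\le\tfrac12(p_{v_1}+p_{v_2})$. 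Averaging over edges and using convexity of $\varphi'$ on $[0,1]$,
\[
  \Ex[X]{\val(G_{\mathrm{sym}};X)}\ \ge\ \Ex[(v_1,v_2,\tau)]{\varphi'(\gamma_{v_1,v_2,\tau})}\ \ge\ \varphi'\Bigparen{\Ex[(v_1,v_2,\tau)]{\gamma_{v_1,v_2,\tau}}}\mper
\]

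It remains to show $\Ex[(v_1,v_2,\tau)]{\gamma_{v_1,v_2,\tau}}=O(\e)$, which is the heart of the matter. Writing $s_{v_1,v_2,\tau}=\Ex[\omega]{Z\min\set{A,B}}$, we have $\gamma_{v_1,v_2,\tau}=\bigparen{\tfrac12(p_{v_1}+p_{v_2})-s_{v_1,v_2,\tau}}/\bigparen{\tfrac12(p_{v_1}+p_{v_2})}$, and the numerator integrates exactly: $\Ex[(v_1,v_2,\tau)]{\tfrac12(p_{v_1}+p_{v_2})-s_{v_1,v_2,\tau}}=\Ex[v]{p_v}-\iprod{f,(G\ot I_\Omega)f}=\snorm f-(1-\e)\snorm f=\e\snorm f\le\e$. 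To convert this into a bound on the expected ratio I would split edges: an edge incident to a bad vertex occurs with probability $O(\e)$, and there I use only $\gamma_{v_1,v_2,\tau}\le 1$; for every other edge both $p_{v_i}\ge\nfrac19$, so $\tfrac12(p_{v_1}+p_{v_2})\ge\nfrac19$ and hence $\gamma_{v_1,v_2,\tau}\le 9\bigparen{\tfrac12(p_{v_1}+p_{v_2})-s_{v_1,v_2,\tau}}$, whose edge-average is $\le 9\e$. Combining, $\Ex{\gamma_{v_1,v_2,\tau}}=O(\e)$, so $\Ex[X]{\val(G_{\mathrm{sym}};X)}\ge\varphi'(O(\e))=1-O(\e)$, and any realization $x$ of $X$ meeting the expectation gives $\val(G_{\mathrm{sym}};x)\ge 1-O(\e)$. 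The step I expect to be the main obstacle is precisely this last estimate: unlike in \pref{lem:correlated-sampling}, the coverage $p_v$ is not uniform but only bounded below by a constant on the good vertices, so the quantity that integrates cleanly, $\tfrac12(p_{v_1}+p_{v_2})-s_{v_1,v_2,\tau}$, lives in the numerator while what we must control is an expectation of the ratio $\gamma_{v_1,v_2,\tau}$; reconciling the two needs the uniform lower bound on $p_v$ at good vertices together with the $O(\e)$-rarity of bad vertices, and obtaining the linear-in-$\e$ loss (rather than $\sqrt\e$, as the cruder arguments of the earlier sections would give) relies on using the sharp $\tfrac{1-\gamma}{1+\gamma}$ conclusion of \pref{lem:min-vs-max} edge-by-edge and then Jensen. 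The remaining loose ends are routine: the normalization of $\Omega$ to a probability space, and handling vertices that are never covered (an $O(\e)$ fraction, absorbed into the bad set, or fixed by adding negligible mass to $\Omega$).
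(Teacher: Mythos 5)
Your proposal is correct and follows essentially the same route as the paper's own proof: normalize $\Omega$ to a probability space, run the correlated-sampling argument of \pref{lem:correlated-sampling} to reduce to bounding the edge-average of the ratio $\gamma_{v_1,v_2,\tau}$, observe that the numerator integrates exactly to $\e\snorm f$, and then convert the numerator bound into a ratio bound by splitting edges according to whether the coverage (your $p_u$, the paper's $\Delta$) is below a fixed constant — using $\gamma\le 1$ on the $O(\e)$-rare bad edges and a constant-factor linear bound on the rest, followed by Jensen for $\varphi'$. The only cosmetic difference is your threshold $\nfrac19$ versus the paper's $\nfrac13$, which traces back to the $\norm{\cdot}$ versus $\snorm{\cdot}$ mismatch between the statement of this lemma and the conclusion of \pref{lem:few-round-1}; both are constants and the argument is unaffected.
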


\begin{proof}
  Rescale $\Omega$ so that it becomes a probability measure.
  Let $\lambda$ be the scaling factor so that $\nfrac13 \lambda\le \snorm{f_{u,*}}\le
  \lambda$ for all $u\in W$ %
  (after
  rescaling of $\Omega$).
  Following the analysis in \pref{lem:correlated-sampling}, there
  exists an assignment for $G$ with value at least $\val(G)\ge
  (1-\e')/(1+\e')$ for
  \begin{displaymath}
    \e'=
    \E_{(u,v,\pi)\sim G}
    1-\tfrac{2}{\snorm{f_{u,*}}+\snorm{f_{v,*}}} \cdot \E_{\omega}
    \pi(x_{\omega,u},x_{\omega,v})\cdot f_{\omega}(u,x_{\omega,u}) f_{\omega}(u,x_{\omega,v})\mper
  \end{displaymath}
  Here, $x_\omega\in \Sigma^W$ is a consistent assignment with
  $f_\omega$ for all $\omega\in \Omega$.
  We are to show $\e'=O(\e)$.
  To ease notation, let use define two jointly-distributed random
  variables,   for $(u,v,\pi)\sim G$,
  \begin{align*}
    \Delta\cdot \snorm{f}
    &=\tfrac12(\snorm{f_{u,*}}+\snorm{f_{v,*}})
    \\
    \Gamma\cdot \snorm{f}
    &=\tfrac12(\snorm{f_{u,*}}+\snorm{f_{v,*}})-\E_{\omega}
    \pi(x_{\omega,u},x_{\omega,v})\cdot f_{\omega}(u,x_{\omega,u}) f_{\omega}(u,x_{\omega,v})
  \end{align*}
  In expectation, %
  $\E \Gamma=\e$.
  Also, %
  $0\le\Gamma\le \Delta$  with probability $1$.
  Furthermore, $\Prob{\Delta<\nfrac13}=O(\e)$.
  We can express $\e'$ in terms of these variables,
  \begin{displaymath}
    \e' = \E 1 - \tfrac{1}{\Delta} \cdot (\Delta-\Gamma) = \E \tfrac{1}{\Delta}\Gamma\mper
  \end{displaymath}
  Let $B$ (for bad) be the $0/1$-indicator variable of the event
  $\set{\Delta<-\nfrac23}$ (which is the same as $\set{1/(1+\Delta)>3}$).
  Since this event happens with probability at most $O(\e)$, we have $\E B=O(\e)$.
  Then, we can bound $\e'$ as
  \begin{displaymath}
    \e' = \E \tfrac{1}{\Delta}\Gamma
    = \E B \cdot \underbrace{\tfrac{1}{\Delta}\Gamma}_{\le 1} \,+\, \underbrace{(1-B)\cdot \tfrac{1}{\Delta}}_{\le 3}\cdot\,\Gamma
    \le \E B + 3 \E \Gamma = O(\e)\mper
    \qedhere
  \end{displaymath}
\end{proof}

\begin{proof}[Proof of \pref{thm:two-games}]
  Let $G$ and $H$ be two projection games.
  Suppose $G$ has vertex sets $U$ and $V$ and alphabet $\Sigma$.
  Suppose $\lVert H\rVert^2 \le 1-\gamma$ but $\norm{G\ot H}^2\ge 1-\eta-\gamma$
  (with $\gamma$ small enough, say $\gamma\le \nfrac1{100}$).
  We are to show
  \begin{math}
    \lVert G\rVert^2 \ge 1-O\paren{\eta +\sqrt{\gamma\eta}}\mper
  \end{math}

  Let $f$ be an optimal assignment for $G\otimes H$ so that $\snorm{(G\otimes H) f}\ge 1-\eta-\gamma$.
  Consider the function $h=(\Id\otimes H) f$.
  This function satisfies $\snorm{(G\otimes \Id) h}\ge 1-\eta -\gamma$.
  Since $\lVert  H \rVert^2 \le 1-\gamma $, we know that $\snorm{(T_v \otimes \Id) h}\le 1-\gamma$ for all vertices $v\in V$.
  As before (by a convexity argument), we can derandomize $h$ so that $h\in \Assign_{V,\Sigma}(\Omega)$.
  Since $0\le h\le 1$, we can apply \pref{lem:few-round-1} to the operator $G'=G^TG$ and the function $h$ to obtain a $0/1$-valued function $h'\in \Assign_{V,\Sigma}(\Omega')$ such that $\norm{(G\otimes \Id) h'}\ge 1-O(\eta +\sqrt{\eta\gamma})\snorm{h'}$ and $\snorm{h'_{v,*}}\ge 1/3$ for all but an $O(\eta)$ fraction of vertices in $V$.
  (The transposition in $G^T$ is with respect to the underlying inner product so that $\iprod{p,G q}=\iprod{G^Tp,q}$ for all functions $p$ and $q$.)
  By \pref{lem:few-round-2}, we can use this function to get an assignment for $G'$ of value at least $1-O(\eta + \sqrt{\eta\gamma})$.
  The value of an assignment in $G'$ corresponds exactly to the collision value of the assignment in $G$.
  Therefore,  $\snorm{G}\ge 1-O(\eta+\sqrt{\gamma \eta})$.
\end{proof}

\section{Inapproximability Results}\label{sec:inapprox}
\subsection{\labelcover}

In this section we prove a new hardness result for \labelcover (\pref{thm:labelcover-new} below), and derive \pref{cor:labelcover}.

\begin{definition}[\labelcover]
Let $\eps:\N\to [0,1]$ and let $s:\N\to\N$ be functions. We define the $\labelcover_s(1,\eps)$ problem to be the problem of deciding if an instance of \labelcover of size $n$ and alphabet size at most $s(n)$ has value $1$ or at most $\eps(n)$.
\end{definition}
When we refer to a reduction from 3SAT to $\labelcover_s(1,\eps)$, we mean that satisfiable 3SAT instances are mapped to \labelcover instances with value $1$, and unsatisfiable 3SAT instances are mapped to \labelcover instances with value at most $\eps$.

Let us begin by reviewing the known results. We mentioned in \pref{sec:labelcover-short} that the PCP theorem~\cite{AS,ALMSS} implies that $\labelcover_s(1,1-\delta)$ for some constant $s$ and (small enough) $\delta>0$. Parallel repetition applied to this instance $k$ times implies, via Raz's bound that $\labelcover_{a^k}(1,\beta^{k})$ is NP-hard, for some $\beta<1$ and some constant $a>1$. So, taking $k = O(\log 1/\eps)$ will imply a soundness of $\eps$, with an alphabet of size $s = \poly(1/\eps) = a^k$. (This is proved in \pref{sec:labelcover-short}).

\begin{theorem}[PCP theorem followed by Raz's theorem]\label{thm:PCPandRaz}
There are some absolute constants $a>1$ and $0<\beta<1$ such that for every $k\in \N$ there is a reduction that takes instances of 3SAT of size $n$ to instances of $\labelcover_s(1,\eps)$ of size $n^{O(k)}$, such that $s \le a^k$ and $\eps \le \beta^k$, and in particular, $s \le \poly(1/\eps)$.
\end{theorem}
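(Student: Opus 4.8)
The plan is to compose two ingredients, both already available: the PCP theorem in the form of hardness of projection \labelcover, and a single application of the parallel-repetition bound established earlier in this paper. First I would invoke the PCP theorem~\cite{AS,ALMSS} together with the standard reduction from a probabilistically checkable proof with logarithmic randomness and constant query complexity to a two-prover \emph{projection} game: this yields absolute constants $s_0\in\N$ and $\delta_0>0$ and a polynomial-time reduction mapping a 3SAT instance $\varphi$ of size $n$ to a projection game $G=G(\varphi)$ of size $N=\poly(n)$ with alphabet of size at most $s_0$, such that $\val(G)=1$ when $\varphi$ is satisfiable and $\val(G)\le 1-\delta_0$ otherwise. (If desired, $G$ can additionally be made expanding at no cost via \pref{claim:expanding}, although this is not needed here.)

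Next I would postcompose with $k$-fold parallel repetition, sending $\varphi\mapsto G(\varphi)^{\ok}$, and verify three routine points. For completeness, if $\val(G)=1$ then $\val(G^{\ok})=1$, since the players may answer each of the $k$ coordinates independently using the perfect single-shot strategy. For soundness, if $\val(G)\le 1-\delta_0$ then \pref{cor:rao} (equivalently \pref{thm:parrep}, or Raz's theorem~\cite{Raz98}) gives $\val(G^{\ok})\le\bigparen{1-\delta_0^2/16}^{k}=\beta^k$, where $\beta\seteq 1-\delta_0^2/16$ is an absolute constant in $(0,1)$. For the size accounting, $G^{\ok}$ has vertex sets $U^k$ and $V^k$, edge set $E^k$, and alphabet $\Sigma^k$, so it has size $N^{O(k)}=n^{O(k)}$ and alphabet size at most $s_0^k$; setting $a\seteq s_0$ yields $s\le a^k$, and writing down $G^{\ok}$ takes time $n^{O(k)}$. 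Composing the two reductions gives the claimed reduction with the stated parameters.

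For the final clause, $s\le a^k$ and $\eps\le\beta^k$ imply $1/\eps\ge(1/\beta)^k$, whence $s\le a^k=\bigparen{(1/\beta)^k}^{\log_{1/\beta}a}\le(1/\eps)^{c}$ for the absolute constant $c\seteq\ln a/\ln(1/\beta)$, that is, $s\le\poly(1/\eps)$. I expect no real obstacle here; the only points requiring a little care are (a) stating the PCP theorem precisely as hardness of \emph{projection} \labelcover, so that parallel repetition stays within the class of projection games and \pref{thm:parrep}/\pref{cor:rao} applies, and (b) observing that the $k$-fold product inflates the instance size only by the power $O(k)$; both are standard.
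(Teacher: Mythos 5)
Your proof is correct and follows the same route the paper indicates for this theorem: apply the PCP theorem to obtain $\labelcover_{s_0}(1,1-\delta_0)$-hardness, then postcompose with $k$-fold parallel repetition and invoke a Raz-type bound (your use of \pref{cor:rao}, or Raz's original theorem, are both fine), with the size and alphabet accounting and the $s\le\poly(1/\eps)$ deduction exactly as you give them. The only thing worth noting is that the paper's pointer ``(This is proved in \pref{sec:labelcover-short})'' refers to a lighter-weight self-contained argument that replaces the full repetition theorem with the expanding-games machinery (\pref{claim:expanding}, \pref{thm:simple-bound}, \pref{thm:approx-expander}); that variant avoids invoking Raz/\pref{cor:rao} entirely, but your version is equally valid and matches the theorem's own framing.
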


In fact, one can take $k = \omega(1)$ in the above and get a reduction from the original \labelcover to $\labelcover_s(1,\eps)$ still with $s = \poly(1/\eps) = a^k$ but now the size of the instance grows to be $n^k$. Setting $k=\log\log n$ or $\log n$ is still often considered reasonable, and yields quasi NP hardness results, namely, hardness results under quasi-polynomial time reductions.

For strictly polynomial-time reductions, the work of Moshkovitz and Raz \cite{MR10} gives hardness for \labelcover with sub-constant value of $\eps$,
\begin{theorem}[Theorem 10 in \cite{MR10}]\label{thm:MR}
There exists some constant $c>0$ such that the following holds. For every $\eps:\N\to[0,1]$ there is a reduction taking 3SAT instances of size $n$ to $\labelcover_s(1,\eps)$ instances of size $n^{1+o(1)}\cdot\poly(1/\eps)$ such that $s \le \exp(1/\eps^c)$.
\end{theorem}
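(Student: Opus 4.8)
This statement is not something I would reprove: it is Theorem~10 of \cite{MR10} and the plan is to invoke it verbatim as a black box. For context, though, let me recall the shape of the Moshkovitz--Raz construction, since it explains where the two crucial parameters --- proof length $n^{1+o(1)}\cdot\poly(1/\eps)$ and alphabet size $s\le\exp(1/\eps^c)$ --- come from. The starting point is a two-query PCP (equivalently, a projection \labelcover instance) with \emph{constant} soundness and \emph{nearly linear} proof length; such a PCP follows by combining the quasilinear-size PCP theorem with the standard reduction to projection form. The task is then to push the soundness down to an arbitrary $\eps=\eps(n)$ without incurring the polynomial blow-up in instance size that naive parallel repetition --- as used in \pref{thm:PCPandRaz} --- would cost.

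The main device is a low-soundness PCP built from low-degree polynomials: the honest proof contains, besides an encoding of the NP witness, the restrictions of a low-degree polynomial to a structured family of low-dimensional ``manifolds'' (low-degree curves or low-dimensional affine subspaces), and the verifier performs a ``manifold-versus-point'' consistency test, generalizing the classical line-versus-point test. Low-degree testing in the sub-constant regime shows this test has soundness decaying polynomially in the field size, while an answer on a manifold is a bounded-degree polynomial, i.e.\ an element of an alphabet of size $\exp(\poly(1/\eps))$ --- this is the source of the bound $s\le\exp(1/\eps^c)$. To keep the proof length near-linear one cannot afford the full Reed--Muller encoding over a large field, so \cite{MR10} instead develops a dedicated composition theorem for PCPs in the low-error regime that composes an outer projection PCP with an inner low-degree-based PCP while preserving the two-query projection structure and only multiplying (rather than exponentiating) lengths; iterating/composing these pieces and optimizing the field size against the target $\eps$ yields the stated parameters.

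The genuinely hard part in \cite{MR10} --- and the reason this cannot be obtained by black-box parallel repetition --- is precisely this composition step: one must compose PCPs in the low-soundness regime without the query complexity growing, and with the composed soundness behaving like a product of the two soundnesses rather than something weaker. For the present paper none of this is reproved; \pref{thm:MR} is used exactly as stated, and our contribution is to feed its output into the small-value parallel-repetition bound \pref{cor:parrep-small-soundness} --- which, unlike Raz's theorem, continues to decrease the value even when $\eps$ is already sub-constant --- thereby obtaining the improved \labelcover hardness of \pref{cor:labelcover} and, through the Feige/Moshkovitz reduction, the tight \setcover hardness of \pref{cor:setcover}.
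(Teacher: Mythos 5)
Your proposal is correct and matches what the paper does: \pref{thm:MR} is an external result cited verbatim from \cite{MR10} with no proof given in this paper, and you rightly treat it as a black box. Your background sketch of the Moshkovitz--Raz construction (low-degree/manifold-versus-point test plus a low-error composition theorem yielding near-linear length and alphabet $\exp(1/\eps^c)$) is accurate context but not something the paper reproves or relies on beyond the statement itself.
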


By applying parallel repetition to an instance of \labelcover from the above theorem, and using the bound of \pref{cor:parrep-small-soundness}, we get,

\begin{theorem}[New NP-hardness for \labelcover]\label{thm:labelcover-new}
For every constant $\alpha>0$ the following holds. For every $\eps:\N\to[0,1]$ there is a reduction taking 3SAT instances of size $n$ to $\labelcover_s(1,\eps)$ instances of size $n^{O(1)}\cdot\poly(1/\eps)$ such that $s \le \exp(1/\eps^\alpha)$.
\end{theorem}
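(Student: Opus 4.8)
The plan is to compose the Moshkovitz--Raz PCP (\pref{thm:MR}) with a \emph{constant} number of parallel repetitions, exploiting that the small-soundness bound \pref{cor:parrep-small-soundness} amplifies the soundness even in the sub-constant regime. Fix $\alpha>0$ and let $c$ be the absolute constant of \pref{thm:MR}. I would set $\beta \defeq \alpha/(2c)$ and $k \defeq \lceil 16c/\alpha\rceil$, a constant depending only on $\alpha$. Given the target soundness function $\eps$, apply \pref{thm:MR} with the soundness parameter $\eps_0 \defeq \eps^{\beta}$ in place of its ``$\eps$'': this turns a 3SAT instance of size $n$ into a projection \labelcover instance $G$ of size $n^{1+o(1)}\poly(1/\eps_0)=n^{1+o(1)}\poly(1/\eps)$, with alphabet of size $s_0\le\exp(1/\eps_0^{\,c})=\exp(1/\eps^{\alpha/2})$, such that $\val(G)=1$ in the yes case and $\val(G)\le\eps_0$ in the no case.

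The reduction then outputs $G^{\ot k}$. Parallel repetition of projection games is again a projection game and preserves perfect completeness, so $\val(G^{\ot k})=1$ in the yes case; in the no case \pref{cor:parrep-small-soundness} gives $\val(G^{\ot k})\le(4\eps_0)^{k/4}=4^{k/4}\eps^{\beta k/4}\le 4^{k/4}\eps^{2}$, which is at most $\eps$ once $\eps$ lies below a constant threshold $\eps^\ast(\alpha)$ (using $\beta k/4\ge 2$ by the choice of $k$). The alphabet of $G^{\ot k}$ has size $s_0^{\,k}\le\exp(k/\eps^{\alpha/2})$, which is at most $\exp(1/\eps^{\alpha})$ provided $\eps\le k^{-2/\alpha}$, and the instance size is $\bigparen{n^{1+o(1)}\poly(1/\eps)}^{k}=n^{O(1)}\poly(1/\eps)$ since $k=O(1)$; the composed reduction runs in time polynomial in the size of $G$, hence $n^{O(1)}\poly(1/\eps)$ overall. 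Shrinking $\eps^\ast$ so that it is also at most $k^{-2/\alpha}$ settles every $\eps\le\eps^\ast$; for $\eps$ bounded below by $\eps^\ast$ the statement is nonvacuous only when $\eps$ is also bounded away from $1$, where it follows from \pref{thm:PCPandRaz} with $O(1)$ repetitions of the basic PCP (the alphabet $\poly(1/\eps)$ being a constant comfortably below $\exp(1/\eps^{\alpha})$), so I would dispatch that case with a one-line remark.

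The main obstacle is the parameter balancing, and this is exactly where the new ingredient enters. Replacing \pref{cor:parrep-small-soundness} by Raz's theorem would only reduce the soundness of $G$ from $\eps_0$ to $\eps$ using $k=\Theta(\log(1/\eps)/\log(1/\eps_0))$ repetitions, which is $\omega(1)$ whenever $\eps_0$ is itself sub-constant, inflating the instance size to $n^{\omega(1)}$ and breaking the polynomial-time claim. Because $\eps_0\mapsto(4\eps_0)^{k/4}$ is a genuine amplification in the sub-constant regime, a constant $k\approx 16c/\alpha$ already suffices; the only competing constraint is that $\beta$ must satisfy $\beta c<\alpha$ so that the $k$-fold alphabet blow-up $s_0^{\,k}=\exp(\Theta(1/\eps^{\beta c}))$ stays inside the budget $\exp(1/\eps^{\alpha})$, while remaining a fixed positive constant so that $k$ stays constant; the choice $\beta=\alpha/(2c)$ does both. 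Verifying the remaining edge cases ($\eps$ near $1$, and that $\eps_0$ lies in the declared range of \pref{thm:MR}) is routine.
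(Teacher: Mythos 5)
Your proposal is correct and follows the same strategy as the paper: apply the Moshkovitz--Raz PCP (\pref{thm:MR}) with a tuned soundness $\eps_0$ that is a small power of $\eps$, then take a constant number $k = \Theta(c/\alpha)$ of parallel repetitions and invoke \pref{cor:parrep-small-soundness}, which is precisely what makes the argument work for sub-constant $\eps_0$ where Raz's theorem would require $\omega(1)$ repetitions. Your constants ($\eps_0 = \eps^{\alpha/(2c)}$, $k = \lceil 16c/\alpha\rceil$) differ slightly from the paper's ($\eps_1 = (3c/\alpha\cdot\eps^\alpha)^{1/c}$, $k = 3c/\alpha$) and are in fact more carefully verified --- the paper's intermediate step writes $(4\eps_1)^{k/2}$, whereas the corollary it cites gives exponent $k/4$, so your bookkeeping is the more reliable of the two.
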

The improvement of this theorem compared to \pref{thm:MR} is in the for-all quantifier over $\alpha$.
\begin{proof}
Assume $\eps = o(1)$ otherwise \pref{thm:PCPandRaz} can be applied with $k = O(\log 1/\eps)$ resulting in a much smaller bound on $s$. The reduction is as follows. Starting with a 3SAT instance of size $n$, let $G$ be the $\labelcover_{s_1}(1,\eps_1)$ instance output of the reduction from \pref{thm:MR} with
$\eps_1 = (3c/\alpha \cdot \eps^{\alpha} )^{1/c}$, and output $G^\ok$ for $k = 3c/\alpha$.

The resulting instance $G^\ok$ has size $n^{O(1)}$, alphabet size $s =(s_1)^k$ and the soundness is at most $(4\eps_1)^{k/2} \le (4\cdot (\frac {3c}\alpha)^{1/c} \eps^{\alpha/c})^{3c/2\alpha} \le \eps$ by \pref{cor:parrep-small-soundness} (and assuming $\eps=o(1)$). Finally, plugging in the bound for $s_1$ and the value of $\eps_1$,
\[
s = (s_1)^k \le \exp( k / \eps_1^c ) = \exp( 1 / \eps^\alpha )\mper
\]
\end{proof}

Finally, \pref{cor:labelcover} follows immediately from the above theorem, by taking $\eps = (\log n)^{-c}$ and choosing $\alpha <1/c$ so that the alphabet size is bounded by $s \le \exp(1/\eps^\alpha) \le n$. We also remark that the resulting instance is regular, due to the regularity of the \cite{MR10} instance.

\subsection{\setcover}
In this section we prove \pref{cor:setcover}. First, a brief background. Feige~\cite{Feige-setcover} proved (extending \cite{LundY94}) that \setcover is hard to approximate to within factor $(1-o(1))\ln n$, by showing a quasi-polynomial time reduction from 3SAT. His reduction has two components: a multi-prover verification protocol, and a set-partitioning gadget. Moshkovitz~\cite{Moshkovitz12} shows that the multi-prover protocol can be replaced by a \labelcover instance with an {\em agreement soundness} property that she defines. She also shows how to obtain such an instance starting with a standard \labelcover instance.
\pref{cor:setcover} follows by instantiating this chain of reductions starting with a \labelcover instance from \pref{thm:labelcover-new}. Details follow.

Let $\alpha>0$ and let $G$ be an instance of \labelcover as per \pref{cor:labelcover}, with soundness $\eps < a^2 /(\log n)^4$ for $a=O(\alpha^5)$, and with $\card\Sigma\le n$. This clearly follows by setting $c=5$ in the corollary, and we can also assume that $G$ is regular.
Using a reduction of Moshkovitz (Lemma 2.2 in \cite{Moshkovitz12}) we construct in polynomial-time a \labelcover instance $G_1$ such that the Alice degree of $G_1$ is $D = \Theta(1/\alpha)$, and such that, setting $\eps_1 = a/(\log n)^2$,
\begin{itemize}
\item If $\val(G)=1$ then $\val(G_1)=1$
\item If $\val(G) < \eps_1^2 = a^2/ (\log n)^4 $ then every assignment for $G_1$ has the following {\em agreement soundness} property\footnote{In \cite{Moshkovitz12} the projections go from Alice to Bob, while ours go from Bob to Alice.}. For at least $1-\eps_1$ fraction of the vertices $u\in U_1$ the neighbors of $u$ project to $D$ distinct values (i.e., they completely disagree).
\end{itemize}
\inote{}
While a hardness for \setcover is proven in \cite{Moshkovitz12}, it is unfortunately proven under a stronger conjecture than our  \pref{cor:labelcover}. For completeness, we repeat the argument. Let us recall the gadget used in \cite{Feige-setcover} (see Definition 3.1 in \cite{Feige-setcover})  \begin{definition}[Partition Systems] A partition system $B(m,L,k,d)$ has the following properties:
\begin{itemize}
\item $m$: There is a ground set $[m]$. (We will use $m = n_1^D$, where $n_1$ is the size of $G_1$).
\item $L$: There is a collection of $L$ distinct partitions, $p_1,\ldots,p_L$. (We will use $L = \card{\Sigma} \le n$).
\item $k$: For $1\le i \le L$, partition $p_i$ is a collection of $k$ disjoint subsets of $[m]$ whose union is $[m]$. (We will use $k=D = O(1/\alpha)$).

    Let us denote $p_i(j)$ the $j$-th set in the partition $p_i$.
\item $d$: Any cover of $[m]$ by subsets that appear in pairwise different partitions requires at least $d$ subsets. (We will use $d = k\cdot (1-\frac 2D)\ln m$).
\end{itemize}
\end{definition}
Such a gadget is explicitly constructed in time linear in $m$ and termed ``anti-universal sets'' in \cite{NaorSS95}).
The intention is that the gadget can be covered in the yes case by $k$ sets, and in the no case by at least $d = k\cdot (1-\frac 2D)\ln m$ sets.

\paragraph{Construction of \setcover instance}
Finally, the \setcover instance will have a ground set $U_1\times [m]$ consisting of $\card{U_1}$ partition-system gadgets. For every $v\in V_1$ and value $\beta\in \Sigma$ there will be a set $S_{v,\beta}$ in the \setcover instance. Denoting by $D_V$ the number of neighbors of $v$ in $G_1$, the set $S_{v,\beta}$ will be the union of $D_V$ sets, one for each neighbor $u$ of $v$. We arbitrarily enumerate the neighbors of each $u\in U_1$ with numbers from $1$ to $D$, and then denote by $j_{uv}\in [D]$ the number on the edge from $u$ to $v$. With this notation, $S_{v,\beta}$ will be the union of the sets $\{u\}\times p_\alpha(j)$ where $j$ is the number of the edge $uv$ and $\alpha = \pi_{uv}(\beta)$:
 \[
S_{v,\beta} = \bigcup_{u\sim v} \set{u}\times (p_{\pi_{uv}(\beta)}(j_{uv})) \mper
 \]

\paragraph{Completeness} It is easy to see that if $f,g$ is a satisfying assignment for $G_1$ then by taking the sets $S_{v,f(v)}$ the gadget corresponding to $u$ is covered by the $k$ sets in the partition corresponding to $\alpha=g(u)$. In total the set cover has size $\card{V_1} = D\card{U_1}$.

\paragraph{Soundness} In the no case, we claim that every set cover has size at least $Z = (1-\frac 4D)\ln m \cdot \card{V_1}$. Assume otherwise, and let $s_u$ be the number of sets in the cover that touch $\set u \times [m]$. By \[ \sum_u s_u = Z \cdot D_V \mcom\] at least $\frac 2D$ fraction of the vertices $u\in U_1$ have $s_u < \ell \defeq (1-\frac 2D)D\ln m$, and we call such $u$'s {\em good}. Define  a randomized assignment $f:V\to\Sigma$ by selecting for each $v$ a value $\beta$ at random from the set of $\beta$'s for which $S_{v,\beta}$ belongs to the set cover (or, if no such set exists output a random $\beta$).

First, by the property of the gadget, since the set cover must cover each $\set{u}\times[m]$, if $u$ is good there must be two neighbors $v_1,v_2 \sim u$ such that the sets $S_{v_1,\beta_1}$ and $S_{v_2,\beta_2}$ are in the cover, and such that $\pi_{uv_1}(\beta_1) = \alpha = \pi_{uv_2}(\beta_2)$ (both $p_{\alpha}(j_{uv})\subset S_{v_1,\beta_1}$ and $p_{\alpha}(j_{uv_2})\subset S_{v_2,\beta_2}$).

Next, observe that if $u$ is good then each neighbor $v$ of $u$ has at most $\ell =(1-\frac 2D)D\ln m$ values of $\beta$ for which $S_{v,\beta}$ is in the cover, because each such set is counted in $s_u$.

So the probability that $f(v_1)=\beta_1$ and $f(v_2)=\beta_2$ is at least $\frac 2D \cdot \frac 1{\ell^2} >\eps_1$ and this contradicts the agreement soundness property of $G_1$ as long as $\eps_1 < a/(\log n)^2$, for $a<O(1/D^5) = O(\alpha^5)$.

\paragraph{Size}
The size of the setcover instance is at most $m=n_1^D$ times $n_1$, so $\ln (n_1^{D+1}) = (D+1)\ln n_1 = (1+\frac 1 D)\ln m$. So by choosing the appropriate constant relation between $1/D$ and $\alpha$ we get a factor of $(1-\alpha)\ln N$ hardness for approximating a \setcover instance of size $N$.

\section{Conclusions}

In many contexts, tight parallel repetition bounds are still open, for
example, general\footnote{ An earlier version of this manuscript
  erroneously claimed a reduction from general constraints to
  projection constraints.
  We are thankful to Ran Raz for pointing out this error.
  } (non-projection) two-player games,
XOR games, and games with more than two parties.

It is an interesting question whether the analytical approach in this
work can give improved parallel repetition bounds for these cases.
Recently, the authors together with Vidick gave a bound on entangled games (where the two players share entanglement), following the approach of this work.

A more open-ended question is whether analytical approaches can
complement or replace information-theoretic approaches in other
contexts (for example, in communication complexity) leading to new
bounds or simpler proofs.

\subsection*{Acknowledgments}

We thank Boaz Barak, Ryan O'Donnell, Ran Raz, and Oded Regev for
insightful discussions and comments about this work.

\addreferencesection
\newcommand{\etalchar}[1]{$^{#1}$}
\providecommand{\bysame}{\leavevmode\hbox to3em{\hrulefill}\thinspace}
\providecommand{\MR}{\relax\ifhmode\unskip\space\fi MR }
\providecommand{\MRhref}[2]{%
  \href{http://www.ams.org/mathscinet-getitem?mr=#1}{#2}
}
\providecommand{\href}[2]{#2}

\newpage

\appendix

\section{Additional Proofs}

\paragraph{Reduction to expanding games}\label{sec:expanding-reduction}
\begin{claim}\label{claim:expanding}
There is a constant $\gamma>0$ and a simple transformation from a projection game $G$ to a projection game $G'$ such that $\val(G') = \frac 1 2 + \frac {\val(G)}2$ and such that $G'$ has eigenvalue gap $\ge \gamma$ . If $G$ is $(c,d)$-regular (i.e. the degree of any $u\in U$ is $c$ and the degree of any $v\in V$ is $d$) then $G'$ can be made $(c,2d)$-regular.
\end{claim}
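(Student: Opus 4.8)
The plan is to let $G'$ be a fifty–fifty mixture of $G$ with an always-winnable ``trivial'' game $T$ whose symmetrization is a good expander on $V$; the value identity then falls out immediately, and the expansion of $G'$ comes for free because mixing with $T$ drags the non-trivial spectrum of the symmetrized walk toward~$0$. Concretely, let $\nu$ be the measure $G$ induces on $V$, fix a symbol $1\in\Sigma$, and let $T$ be the projection game with a single Alice-vertex $\star$ and Bob-vertex set $V$: the referee draws $v\sim\nu$, sends $\star$ to Alice and $v$ to Bob, and accepts iff Alice answers $1$ (edge constraint $\{1\}\times\Sigma$, which is a projection). Take $G'$ to play $G$ with probability $\tfrac12$ and $T$ with probability $\tfrac12$, over Alice-vertices $U\sqcup\{\star\}$ and Bob-vertices $V$; being a union of projection games, $G'$ is a projection game. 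A strategy is a pair $g\colon U\sqcup\{\star\}\to\Sigma$, $f\colon V\to\Sigma$; the $T$-component is won with certainty by setting $g(\star)=1$ and depends on nothing else, while the $G$-component depends only on $f$ and $g|_U$, so $\val(G')=\tfrac12+\tfrac12\val(G)$. For expansion, note $G$ and $T$ both induce $\nu$ on $V$, so $G'$ does too, and conditioning on which of the two parts produced a given step of the symmetrized walk shows that the walk matrix of $G'_{\mathrm{sym}}$ (in the notation of \pref{sec:expander}) equals $\tfrac12 A_G+\tfrac12 A_T$, where $A_G$ is the symmetrized walk of $G$ (reversible with respect to $\nu$) and $A_T=\mathbf 1\nu^{\mathsf T}$ is the ``reset to $\nu$'' walk — the orthogonal projection onto the constants — because $\star$ is the only Alice-vertex of $T$. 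On functions orthogonal to the constants the second summand vanishes, so there the walk matrix has operator norm at most $\tfrac12$; since the spectral gap equals $1$ minus this norm, $G'$ is $\tfrac12$-expanding, and we may take $\gamma=\tfrac12$.

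To also get regularity, replace the hub $\star$ by a fresh set $A'$ of $|U|$ Alice-vertices and let the bipartite graph of $T$ between $A'$ and $V$ be a $(c,d)$-biregular expander graph: such a graph exists since $c|U|=d|V|$, and for $c,d\ge 3$ a uniformly random $(c,d)$-biregular bipartite graph works (explicit constructions also exist; if $c$ or $d$ is below $3$, first pad $G$ up to larger constant degrees). Keeping the ``answer $1$'' constraint on the new edges and all edge weights equal to $1$, the two parts have equal total weight $c|U|$, every Alice-vertex keeps degree $c$, and every $v\in V$ acquires degree $d+d=2d$, so $G'$ is $(c,2d)$-regular; the value calculation is verbatim the same. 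The identical conditioning argument gives walk matrix $\tfrac12 A_G+\tfrac12 A_T$, and now $A_T$ is the two-step walk on $V$ of the chosen biregular expander, which has operator norm $1-\delta$ on the complement of the constants for an absolute constant $\delta=\delta(c,d)>0$; hence the mixture has norm at most $1-\delta/2$ there and $G'$ is $(\delta/2)$-expanding.

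I expect the only step that is not a one-line computation to be invoking the existence of $(c,d)$-biregular bipartite expanders in the regular case, which is entirely standard; so that is the ``hard'' part in name only. The one point genuinely demanding care is checking that the symmetrized walk of the mixture really decomposes as $\tfrac12 A_G+\tfrac12 A_T$, which uses both that $G$ and $T$ are mixed with equal total weight and that they induce the same measure on $V$ — together these make the event ``this step of the symmetrized walk came from the $G$-part'' independent of the current vertex, which is exactly what the decomposition needs.
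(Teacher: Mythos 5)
Your construction is the same as the paper's: mix $G$ half-and-half with a trivial game whose Alice side is a single hub (respectively, a fresh copy of $U$ attached to $V$ by a biregular expander in the regular case), observe the value formula is immediate, and note that the symmetrized walk becomes $\tfrac12 A_G+\tfrac12 A_T$ with $A_T$ contributing the spectral gap. Your write-up is a bit more explicit than the paper's (you spell out why the mixture of walk kernels decomposes cleanly, and you flag the degree-$\ge 3$ requirement for the biregular expander, which the paper glosses over), but the idea and the estimates are identical.
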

\begin{proof}
Let $U,V$ be the vertices of $G$, and let $\mu$ be the associated distribution of $G$. Define $U' = U\cup \set{u_0}$, and a new distribution $\mu'$ that is with probability $1/2$ equal to $\mu$ and with probability $1/2$ equal to the product distribution given by $\set{u_0}\times \mu_V\times \set{\pi_0}$ where $\pi_0$ is a 'trivial' constraint that accepts when Alice answers $1$ and regardless of Bob's answer. The claim about the value is easily verified. As for the eigenvalue gap, it follows because the symmetrized game corresponding to $\set{u_0}\times \mu_V\times \set{\pi_0}$ is the complete graph, with eigenvalue gap $1$.

If $G$ is $(c,d)$ regular and we want $G'$ to also be $(c,2d)$-regular, then instead of adding one vertex $u_0$ we add a set $U_0$ of $\card U$ new vertices, so $U' = U\cup U_0$, and place a $(c,d)$-regular graph $G_0$ between $U_0$ and $V$, accompanied with $\pi_0$ constraints as above. We are free to choose the structure of $G_0$, and if we take it to be such that its symmetrized graph has eigenvalue gap $2\gamma$, then the eigenvalue gap of $G'$ will be at least $\gamma$. Clearly the distribution $\mu'$ is uniform on $U'$ which means that with probability $1/2$ it chooses a trivial constraint, and with probability $1/2$ a $G$ constraint. Hence, $\val(G') = \frac 1 2 + \frac 1 2 \val (G)$.
\end{proof}

\section{Feige's Game}\label{app:Feige}

Uri Feige~\cite{Feige91} describes a game $G$ (which he calls NA for non-interactive agreement) for which $\val(G) = \val (G\otimes G) = \frac 12$.
This example rules out results of the form $\val(G\otimes H) \le val'(G)\cdot \val(H)$, for a useful approximation $val'$ of $\val$ thus perhaps discouraging the search for game-relaxations that are multiplicative. Our \pref{thm:collision-val-product} sidesteps this limitation by proving a multiplicative relation for $\norm G$ and not for $\val(G)$. We prove, in \pref{thm:collision-val-product} that
$\norm{G\ot H}\le \rval(G)\cdot\norm{H}$ and that the relaxation $\rval(G)$ is bounded away from $1$ for any game $G$ whose value is bounded away from $1$.

Our theorem implies that there is no projection game (including Feige's NA game) for which $\norm{G\ot G}=\norm G$. Indeed we calculate below and show that for Feige's game, $\norm{G\ot G} = \snorm G  = 1/2$.

Feige's game is defined over question set $U = \bits = V$ where each pair of questions $(u,v)\in \bits^2$ is equally likely (this is called a free game). The players win if they agree on a player and her question, namely, the alphabet is $\Sigma = \set{A0,A1,B0,B1}$ and for each question pair $(u,v) \in \bits^2$, there are two acceptable pairs of answers: $(Au,Au)$ and $(Bv,Bv)$ (where $u,v$ should be replaced by the bit value $0$ or $1$).

Is this a projection game? It is true that every answer for Bob leaves {\em at most} one possible accepting answer for Alice, but note that some answers leave no possible accepting answer. Nevertheless, the inequality between the value and the collision value established in \pref{claim:collision-val} holds for such games.
\begin{claim}
$\gnorm{G} = \half$.
\end{claim}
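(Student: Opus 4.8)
The plan is to compute $\gnorm{G}=\norm{G}^2=\max_f\norm{Gf}^2$ directly. The first step is to write down the symmetrized game $G_{\mathrm{sym}}$ from \pref{def:G2}: a sample from $\mu_{\mathrm{sym}}$ picks $u\in\bits$ uniformly (the game is free, so all marginals are uniform) together with two independent uniform questions $v,v'\in\bits$, and the resulting constraint $\tau\subseteq\Sigma\times\Sigma$ consists of the pairs $(\beta,\beta')$ admitting a common Alice-answer $\alpha$ with $\alpha\mapsfrom\beta$ under $\pi_{u,v}$ and $\alpha\mapsfrom\beta'$ under $\pi_{u,v'}$. Since $\pi_{u,v}$ accepts exactly $(Au,Au)$ and $(Bv,Bv)$, the only Alice-answers that can occur in an accepting pair are $Au$, $Bv$, $Bv'$, and a short case analysis over these gives $\tau=\{(Au,Au)\}$ when $v\neq v'$, and $\tau=\{(Au,Au),(Bv,Bv)\}$ when $v=v'$.

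Next I would plug this into \pref{eq:sym}. Abbreviating $a_i=f(0,Ai)$, $c_i=f(1,Ai)$ for $i\in\bits$, and $b_0=f(0,B0)$, $d_1=f(1,B1)$, the contribution of the pairs $(Au,Au)$ is $\E_u\bigparen{\E_v f(v,Au)}^2=\tfrac18\bigparen{(a_0+c_0)^2+(a_1+c_1)^2}$, while the extra pair $(Bv,Bv)$ present only when $v=v'$ contributes $\E_{u}\Ex[v,v']{\Ind[v=v']f(v,Bv)^2}=\tfrac14\bigparen{b_0^2+d_1^2}$ (note $f(0,B1)$ and $f(1,B0)$ never appear). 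Thus
\[
  \norm{Gf}^2=\tfrac18\bigparen{(a_0+c_0)^2+(a_1+c_1)^2}+\tfrac14\bigparen{b_0^2+d_1^2}\mper
\]

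For the lower bound $\gnorm{G}\ge\half$, take the deterministic assignment in which both provers always answer $A0$ (so $a_0=c_0=1$ and every other entry is $0$): the displayed form evaluates to $\tfrac18\cdot 2^2=\half$ (equivalently, with this assignment the two answers collide precisely when $u=0$). For the matching upper bound, use the elementary inequality $(a_0+c_0)^2+(a_1+c_1)^2\le\bigparen{(a_0+a_1)+(c_0+c_1)}^2$, valid because all entries of $f$ are nonnegative, together with the normalization $\sum_\beta f(v,\beta)=1$, which gives $b_0\le 1-(a_0+a_1)$ and $d_1\le 1-(c_0+c_1)$. Setting $p=a_0+a_1$ and $q=c_0+c_1$ (both in $[0,1]$), this bounds $\norm{Gf}^2$ by $\tfrac18(p+q)^2+\tfrac14(1-p)^2+\tfrac14(1-q)^2$, a convex function of $(p,q)$ on the square $[0,1]^2$ and hence maximized at a vertex; evaluating the four vertices yields $\half$ at $(1,1)$ and $(0,0)$ and $\tfrac38$ otherwise. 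Therefore $\norm{Gf}^2\le\half$ for every assignment $f$, which together with the lower bound gives $\gnorm{G}=\half$.

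The only genuinely delicate step is the computation of the symmetrized constraint $\tau$: one must remember that in Feige's game some of Bob's answers admit no accepting Alice-answer at all, so that when $v\neq v'$ the set $\tau$ really does collapse to the single pair $(Au,Au)$. Once the quadratic form is in hand, the optimization is a routine two-variable convex maximization.
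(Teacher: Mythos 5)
Your proof is correct, and it is genuinely more self-contained than the paper's. The paper only computes $\snorm{Gf}=\half$ for the single strategy ``always answer $A0$'' (the lower bound) and leaves the matching upper bound implicit: it follows from Feige's cited fact that $\val(G)=\half$ together with \pref{claim:collision-val}, which gives $\snorm{G}\le\val(G)$. You instead work out the symmetrized constraint $\tau$ explicitly (correctly handling the point that in Feige's game some Bob-answers have no accepting Alice-answer, so $\tau$ collapses to $\{(Au,Au)\}$ when $v\ne v'$), obtain a closed-form quadratic in the entries of $f$, and prove the upper bound directly by a two-variable convex maximization over $[0,1]^2$. The bound $(a_0+c_0)^2+(a_1+c_1)^2\le(p+q)^2$ and the monotonicity allowing $b_0\le 1-p$, $d_1\le 1-q$ are both sound, and the vertex evaluations check out. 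Your route is longer but makes the claim independent of the external reference to Feige for $\val(G)$; the paper's route is shorter by leaning on \pref{claim:collision-val}. Both are valid.
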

\begin{proof}
It is easy to see that $\gnorm{G} = \frac 1 2$, by having Bob always reply with $A0$, i.e., setting Bob's strategy to be $f(0,A0)=1, f(1,A0)=1$, and then
 \[ \snorm{Gf} = \half \sum_{\alpha\in \Sigma} Gf(0,\alpha)^2 + \half \sum_{\alpha\in \Sigma} Gf(1,\alpha)^2 = \half \cdot 1 + \half \cdot 0 = \half \mper
 \]
\end{proof}
Feige's strategy achieving $\val(G\otimes G)=\half$ is one where Bob hopes that his second question is equal to Alice's first question, and thus answers a question $v_1v_2$ by $(Av_2,Bv_2)\in \Sigma^2$.
\begin{claim}
Let $f$ be Feige's strategy attaining $\val(G\otimes G)=\half$ (defined as $f(v_1v_2,Av_2Bv_2)=1$ for each $v_1v_2$). Then $\snorm{G\otimes Gf}=\frac 1 4$.
\end{claim}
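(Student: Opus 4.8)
The plan is to compute the vector $(G\otimes G)f\in L(\{0,1\}^2\times\Sigma^2)$ explicitly and then read off its squared norm from the definition $\snorm{(G\otimes G)f}=\E_{(u_1,u_2)}\sum_{(\alpha_1,\alpha_2)}\bigparen{(G\otimes G)f((u_1,u_2),(\alpha_1,\alpha_2))}^2$, where $(u_1,u_2)$ is uniform on $\{0,1\}^2$ because the game is free. Specializing the operator of the tensor game and using freeness, $(G\otimes G)f((u_1,u_2),(\alpha_1,\alpha_2))=\E_{v_1}\E_{v_2}\sum_{\beta_1:(\alpha_1,\beta_1)\in\pi_{u_1v_1}}\sum_{\beta_2:(\alpha_2,\beta_2)\in\pi_{u_2v_2}}f((v_1,v_2),(\beta_1,\beta_2))$, with the single‑coordinate constraints $\pi_{uv}=\{(Au,Au),(Bv,Bv)\}$. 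Plugging in Feige's assignment, $f((v_1,v_2),(\beta_1,\beta_2))=\Ind[\beta_1=Av_2]\cdot\Ind[\beta_2=Bv_2]$, so the double sum factors as $\Ind[(\alpha_1,Av_2)\in\pi_{u_1v_1}]\cdot\Ind[(\alpha_2,Bv_2)\in\pi_{u_2v_2}]$.

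The key step is to evaluate these two indicators. For the second coordinate, $Bv_2$ is a $B$‑type label, hence $(\alpha_2,Bv_2)\in\pi_{u_2v_2}=\{(Au_2,Au_2),(Bv_2,Bv_2)\}$ holds iff $\alpha_2=Bv_2$; this is the easy case, where the mass always projects. For the first coordinate one must be careful, since Feige's game is not a total projection game and some Bob answers project to nothing: $(\alpha_1,Av_2)\in\pi_{u_1v_1}$ requires $Av_2\in\{Au_1,Bv_1\}$, and since $Av_2$ is $A$‑type this forces $v_2=u_1$ and then $\alpha_1=Au_1$. Thus $(G\otimes G)f((u_1,u_2),(\alpha_1,\alpha_2))=\E_{v_1}\E_{v_2}\Ind[v_2=u_1]\Ind[\alpha_1=Au_1]\Ind[\alpha_2=Bv_2]$; the average over $v_1$ is trivial (nothing depends on it), and averaging over $v_2$ pins $v_2=u_1$ with probability $\tfrac12$, giving $(G\otimes G)f((u_1,u_2),(\alpha_1,\alpha_2))=\tfrac12\,\Ind[\alpha_1=Au_1]\,\Ind[\alpha_2=Bu_1]$.

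Finally I would substitute this into the norm: for each $(u_1,u_2)$ the inner sum over $(\alpha_1,\alpha_2)\in\Sigma^2$ has exactly one nonzero term, namely $(\alpha_1,\alpha_2)=(Au_1,Bu_1)$ with value $(\tfrac12)^2=\tfrac14$, so $\snorm{(G\otimes G)f}=\E_{(u_1,u_2)}\tfrac14=\tfrac14$, as claimed. The only genuinely delicate point — and the step most likely to be gotten wrong — is the bookkeeping in the preceding paragraph, in particular noticing that the label $Av_2$ fails to project under $\pi_{u_1v_1}$ unless $v_2=u_1$; the rest is routine averaging. (As a consistency check, $\snorm{(G\otimes G)f}=\tfrac14=\snorm{G}^2$ agrees with \pref{thm:collision-val-product}, since $\varphi(\tfrac12)\cdot\tfrac12=\tfrac{\sqrt2}{3}>\tfrac14$.)
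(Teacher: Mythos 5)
Your proof is correct and takes essentially the same route as the paper: compute $(G\otimes G)f$ entrywise for Feige's assignment and then read off $\snorm{(G\otimes G)f}$ by summing the squares. You are in fact a bit more careful than the paper's sketch, since you explicitly track when $(\alpha_1,Av_2)$ and $(\alpha_2,Bv_2)$ actually project under $\pi_{u_1v_1}$ and $\pi_{u_2v_2}$ (the paper's displayed sum silently relies on $f$ happening to vanish at the non-projecting pairs), and you arrive at the same closed form $\tfrac12\,\Ind[\alpha_1=Au_1]\Ind[\alpha_2=Bu_1]$ and the value $\tfrac14$.
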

\begin{proof}
Note that $\val(G\otimes G,f) = \half$ because with probability $1/2$ $v_2=u_1$ and if Alice takes a symmetrical strategy $g(u_1u_2,Au_1Bu_1)=1$ then the players win. Let us analyze $\snorm{G\otimes G f}$ for this $f$. $Gf$ looks similar for all question tuples $u_1u_2 \in \set{00,01,10,11}$ that Alice might see. For example, suppose $u_1u_2 = 00$. Of the four possible tuples $v_1v_2$ that Bob may have received, only $v_1v_2=00$ and $v_1v_2=10$ will contribute to $Gf(00,A0B0)$. Indeed
 \[ Gf(00,A0B0) = \frac { f(00,A0B0) + f(01, A0B0) + f(10, A0B0)+ f(11, A0B0)}4 = \frac {1 +0 + 1 +0 }4 = \half\]
For any other $\alpha \in \Sigma^2$, $Gf(00,\alpha)=0$, because the answer $A1B1$ is not acceptable when Alice's questions are $u_1u_2=00$. Thus, we get $\snorm{G\otimes Gf} = \frac 1 4 \cdot 4 \cdot (\frac 1 4 )^2  = \frac 1 4$.
\end{proof}
One would guess also that $\gnorm{G\otimes G} = \frac 1 4$ but we have not analyzed this. \inote{}
\inote{}

\end{document}